\documentclass[12pt]{article}
\usepackage{amsmath}
\usepackage{graphicx}
\usepackage{natbib}

\usepackage{hyperref}
\usepackage{url} % not crucial - just used below for the URL 
\usepackage{float}

\usepackage{multibib}
\newcites{sup}{References Supplement}

%\pdfminorversion=4
% NOTE: To produce blinded version, replace "0" with "1" below.
\newcommand{\blind}{0}

% DON'T change margins - should be 1 inch all around.
\addtolength{\oddsidemargin}{-.5in}%
\addtolength{\evensidemargin}{-.5in}%
\addtolength{\textwidth}{1in}%
\addtolength{\textheight}{1.3in}%
\addtolength{\topmargin}{-.8in}%

% additional commands
%\usepackage{psfrag}
%\usepackage{graphicx}
%\usepackage{subfigure}
%\usepackage{psfrag}
% \usepackage[bf,footnotesize]{caption}
%\usepackage{latexsym}
%\usepackage{float}
\usepackage{enumitem}
\usepackage{dsfont}
\usepackage{amsmath}
\usepackage{amsfonts}
\usepackage{amsthm}
\usepackage{nicefrac}
\usepackage[svgnames]{xcolor}

\usepackage[ruled]{algorithm2e}

%\renewcommand{\note}[1]{}

%\renewcommand{\optional}[1]{}

% Definitions
%

% Mengen
%

\newcommand{\C}{\mathds{C}}

\newcommand{\R}{\mathds{R}}

\renewcommand{\S}{\mathds{S}}

%
% Referenzen
%

%
% Vektoren, Matrizen
%
\newcommand{\T}{\top}
\renewcommand{\vec}[1]{\big(#1\big)^\T}

%\newcommand{\zero}{\boldsymbol{0}}

%
% Fette Buchstaben
%

\newcommand{\bB}{\mathbf{B}}
\newcommand{\bC}{\mathbf{C}}
\newcommand{\bD}{\mathbf{D}}
\newcommand{\bE}{\mathbf{E}}
\newcommand{\be}{\mathbf{e}}

\newcommand{\bG}{\mathbf{G}}

\newcommand{\bh}{\mathbf{h}}

\newcommand{\bM}{\mathbf{M}}
\newcommand{\bP}{\mathbf{P}}
\newcommand{\bp}{\mathbf{p}}
\newcommand{\bQ}{\mathbf{Q}}
\newcommand{\bU}{\mathbf{U}}
\newcommand{\bu}{\mathbf{u}}
\newcommand{\bR}{\mathbf{R}}

\newcommand{\bb}{\mathbf{b}}
\newcommand{\bZ}{\mathbf{Z}}

\newcommand{\bW}{\mathbf{W}}

\newcommand{\bV}{\mathbf{V}}
\newcommand{\bx}{\mathbf{x}}
\newcommand{\bX}{\mathbf{X}}
\newcommand{\by}{\mathbf{y}}
\newcommand{\bI}{\mathbf{I}}
\newcommand{\bmu}{\boldsymbol{\mu}}

\newcommand{\bTheta}{\boldsymbol{\Theta}}
\newcommand{\bXi}{\boldsymbol{\Xi}}
\newcommand{\bbeta}{\boldsymbol{\beta}}

\newcommand{\bPsi}{\boldsymbol{\Psi}}
\newcommand{\bpsi}{\boldsymbol{\psi}}

%
% Kaligrafische Buchstaben
%

\newcommand{\cA}{\mathcal{A}}
\newcommand{\cB}{\mathcal{B}}
\newcommand{\cF}{\mathcal{F}}
\newcommand{\cM}{\mathcal{M}}
\newcommand{\cT}{\mathcal{T}}

\newcommand{\cX}{\mathcal{X}}
\newcommand{\cY}{\mathcal{Y}}
\newcommand{\cL}{\mathcal{L}}
\newcommand{\cU}{\mathcal{U}}
\newcommand{\cV}{\mathcal{V}}

%
% Operatoren
%
\newcommand{\E}{\mathds{E}}

\newcommand{\argmin}[1][ ]{\underset{#1}{\operatorname{argmin}\;}}

\newcommand{\tr}{\operatorname{tr}}

\newcommand{\Exp}{\operatorname{Exp}}
\newcommand{\Log}{\operatorname{Log}}

\newcommand{\Transp}{\operatorname{Transp}}

\newcommand{\conj}[1]{{#1}^{\dagger}}%{\overline{#1}}
\renewcommand{\i}{\sqrt{\text{-}1}}
\newcommand{\ind}{\mathds{1}}

\DeclareFontFamily{OT1}{mathc}{}
\DeclareFontShape{OT1}{mathc}{m}{n}{ <-> mathc10 }{}
\DeclareRobustCommand\mycal[1]{\text{\usefont{OT1}{mathc}{m}{n}#1}}

% track changes
\usepackage[normalem]{ulem}
\newcommand{\old}[1]{{\color{lightgray} #1}}% \sout{#1}}}
\newcommand{\new}[1]{{\color{blue} #1}}

\renewcommand{\old}[1]{} %uncomment to show old version
\renewcommand{\new}[1]{{#1}}

\usepackage{xr} % external referencing
% make it work in overleaf (-> together with latexmkr file)
% ----------------------------
\makeatletter
\newcommand*{\addFileDependency}[1]{% argument=file name and extension
	\typeout{(#1)}
	\@addtofilelist{#1}
	\IfFileExists{#1}{}{\typeout{No file #1.}}
}
\makeatother

\newcommand*{\myexternaldocument}[1]{%
	\externaldocument{#1}%
	\addFileDependency{#1.tex}%
	\addFileDependency{#1.aux}%
}
% ----------------------------
\myexternaldocument{shapeboost-supplement} % document to refer to
% file latexmkrc includes additional commands for the compiler

\begin{document}

\def\spacingset#1{\renewcommand{\baselinestretch}%
{#1}\small\normalsize} \spacingset{1}

%%%%%%%%%%%%%%%%%%%%%%%%%%%%%%%%%%%%%%%%%%%%%%%%%%%%%%%%%%%%%%%%%%%%%%%%%%%%%%

\if0\blind
{
  \title{\bf Functional additive models on manifolds of planar shapes and forms}
  \author{Almond St\"ocker\new{, Lisa Steyer,} %\\
%  	\thanks{The authors gratefully acknowledge \textit{please remember to list all relevant funding sources in the unblinded version}}
%\hspace{.2cm}\\
%    Department of YYY, University of XXX\\
    and %\\
    Sonja Greven \\
    School of Business and Economics, Humboldt-Universität zu Berlin}
  \maketitle
} \fi

\if1\blind
{
  \bigskip
  \bigskip
  \bigskip
  \begin{center}
    {\LARGE\bf Functional additive models on manifolds of planar shapes and forms}
\end{center}
  \medskip
} \fi

\bigskip
\begin{abstract}
%200 or fewer words.
\new{The ``shape'' of a planar curve and/or landmark configuration is considered its equivalence class under translation, rotation and scaling, its ``form'' its equivalence class under translation and rotation while scale is preserved.}
\old{Defining shape and form as equivalence classes under translation, rotation and -- for shapes -- also scale, w}\new{W}e extend generalized additive regression to models for \new{such shapes/forms as responses respecting }\old{the shape/form of planar curves and/or landmark configurations. The model respects }the resulting quotient geometry \new{by}\old{of the response,} employing the squared geodesic distance as loss function and a geodesic response function to map the additive predictor to the shape/form space. For fitting the model, we propose a Riemannian $L_2$-Boosting algorithm well suited for a potentially large number of possibly  parameter-intensive model terms, which also yields automated model selection. We provide novel intuitively interpretable visualizations for (even  non-linear) covariate effects in the shape/form space via suitable tensor-product
factorization. 
The usefulness of the proposed framework is illustrated in an analysis of 1) astragalus shapes of wild and domesticated sheep and 2) cell forms generated in a biophysical model, as well as 3) in a realistic simulation study with response shapes and forms motivated from a dataset on bottle outlines.
\end{abstract}

\noindent%
{\it Keywords:} functional regression, boosting, shape analysis, tensor-product model, visualization 
%3 to 6 keywords, that do not appear in the title
\vfill

\newpage
\spacingset{1.5} % DON'T change the spacing!

\section{Introduction}

In many imaging data problems, the coordinate system of recorded objects is arbitrary or explicitly not of interest. %explicitly lies in geometrical characteristics independent of positioning.
Statistical shape analysis \citep{DrydenMardia2016ShapeAnalysisWithApplications} addresses this point by identifying the ultimate object of analysis as the \textit{shape} of an observation, reflecting its geometric properties invariant under translation, rotation and re-scaling, or as its \textit{form} (or \textit{size-and-shape}) invariant under translation and rotation. 
This paper establishes a flexible additive regression framework for modeling the shape or form of planar (potentially irregularly sampled) curves and/or landmark configurations in dependence on scalar covariates. 
%The framework directly extends additive models for scalar response, providing for instance smooth effects of multiple covariates, while intrinsically accounting for the geometry of the quotient space.
\old{The proposed component-wise Riemannian $L_2$-Boosting algorithm allows estimation of a large number of parameter-intense model terms with inherent model selection.  
We further introduce a novel visualization of functional additive effects using a tensor-product factorization, % based on singular value decomposition (SVD), 
which we consider essential for practical interpretation.}
%We concentrate on the functional case of curves, % in the following, %as it presents the main focus and biggest challenge. However,  
%but analysis of landmark shapes or forms is contained as a special case. %as are analyses respecting subsets of invariances, e.g.\ modulo translation and rotation only if scale should be preserved. %, or directional data (modulo translation and scaling).
%In particular, the regression framework is suitable for relatively sparsely and irregularly sampled response curves.
%
%Following the ideas of \citet{Kendall1977}, the shape of an object is usually defined as all the relevant geometrical information that remains when location, scale and rotational effects are removed. 
A  rich \new{shape analysis} literature \old{on statistical shape analysis }has been developed for 2D or 3D landmark configurations -- presenting for instance selected points of a bone or face -- which are considered elements of Kendall's shape space \citep[see, e.g.][]{DrydenMardia2016ShapeAnalysisWithApplications}.  %for such data (see \citet{DrydenMardia2016ShapeAnalysisWithApplications} for an in-depth introduction). % Kendall1999 als mögliche zweite Introduction
%In the classical setting, one observation corresponds to a configuration of $k$ points in $\R^m$ describing the object of interest, where typically $m \in \{2,3\}$. Its shape is then given by an observation's equivalence class modulo location, scale and rotation. Various approaches and techniques for the analysis of this type of data have been developed
%Methods include Generalized Procrustes algorithms \citep[e.g.,][]{Goodall1991Procrustes} for alignment and mean computation, as well as Principal Component Analysis \citep{Kent1994ComplexBingham},  and inference procedures based on shape distributions \citep[e.g.,][]{Kent1994ComplexBingham, MardiaDryden1989ShapeDistributions}. 
% In contrast to standard multivariate statistics, these usually involve differential geometric tools due to the non-linearity of the arising landmark shape space. %\citep{Kendall1984ShapeManifolds}. 
%While the configuration consists of dedicated landmarks in some applications, such as anatomically meaningful landmarks on skulls or bones \citep{ValenzuelaEtAl2004MorphometricSexEstimation, DrydenMardia2016ShapeAnalysisWithApplications}, 
In many 2D scenarios, however, observed points describe a curve reflecting the outline of an object rather than dedicated landmarks \citep{AdamsRholfSlice2013GeomMorphometrics21stCentury}. 
%Approaches for outline analysis in the morphometrics literature include basis representations, with, e.g., Fourier or discrete cosine transformations, transferring the analysis of the shapes themselves to the analysis of vectors of basis coefficients \citep[e.g.,][]{%RholfArchie1984FourierMethods, 
%	Dommergues2007DiscreteCosineTransform}, and semi-landmark analysis \citep[compare][]{AdamsRholfSlice2013GeomMorphometrics21stCentury}  %, bookstein1997landmark}, %\as{Bookstein F.L., 1997. Landmark methods for forms without landmarks: morphometrics of group differences in outline shape. Medical Image Analysis 1: 225–243.; McCane B., Kean M.R., 2011. Integration of parts in the facial skeleton and cervical vertebrae. Am. J. Orthodont. Dento. Orthoped. 139: e13–e30.}, 
%where points on the outline curves are allowed to slide tangentially along the curves %in order 
%to increase their comparability. 
Considering outlines as images of (parameterized) curves shows a direct link to functional data analysis \citep[FDA,][]{RamsaySilverman2005} and, %and problems and approaches arising with outline analysis are naturally regarded in this context:  
%Considering observed curves elements of a Hilbert space of multi-dimensional functions, it is a common approach in FDA to utilize a suitable (smooth) function basis for modelling. 
in this context, we speak of functional shape/form data analysis. 
%From the FDA perspective, the problem of matching comparable points along an outline translates to a registration problem of finding appropriate curve parameterizations attributing the same index to corresponding points.
As in FDA, functional shape/form data can be observed on a common and often dense grid (\textit{regular/dense} design) or on curve-specific often sparse grids (\textit{irregular/sparse} design).
%With respect to the sampling of curves, FDA literature distinguishes \textit{regular} designs, where functional data is evaluated on a common grid, from \textit{irregular} functional data designs corresponding to data where parameterized curves are evaluated on different grids, i.e. points recorded along two outlines do not fully correspond. 
While in the regular case, analysis often simplifies by treating curve evaluations as multivariate data, more general irregular designs gave rise to further developments in sparse FDA \citep[e.g.][]{Yao:etal:2005, GrevenScheipl2017}, explicitly considering irregular measurements instead of pre-smoothing curves. %interpolating original observations on regular grids.
%\new{We provide a first approach to analysis of functional shapes/forms with irregular/sparse designs.}
To the best of our knowledge, %However, %tied to a functional perspective, 
we are the first to consider %approaches for 
irregular/sparse designs in the context of %did, as far as we know, not enter 
functional shape/form analysis. %literature, yet. 

Shapes and forms are examples of manifold data. 
\new{\citet{petersen2019frechet} propose ``Fréchet regression'' for random elements in general metric spaces, which requires estimation of a (potentially negatively) weighted Fréchet mean for each covariate combination. 
	%In addition to increasing computational challenges, 
	Their implicit rather then explicit model formulation %without an explicit model  
	renders model interpretation %more 
	difficult. % offering a global approach implicitly motivated by estimation equations of a linear model and local approach. 
	%		While offering high generality, the implicit definition renders model interpretation hard and requires estimation of a (potentially negatively) weighted Fréchet mean for each prediction.  
	More explicit model formulations have been developed for the special case of a Riemannian geometry.  Besides tangent space models \citep{kent2001functional}, extrinsic models \citep{lin2017extrinsic} and models based on unwrapping \citep{jupp1987fitting, mallasto2018wrapped}, a variety of manifold regression models have been designed based on the intrinsic Riemannian geometry.} Starting from geodesic regression \citep{Fletcher2012GeodesicRegressionRiemannianManifold%,Fishbaugh2013GeodesicCurrents
}, which extends linear regression to curved spaces\new{, these include} \old{,several generalizations of standard regression models to manifold responses have been developed, such as} MANOVA \citep{Huckemann2010intrinsicMANOVA}, polynomial regression \citep{Hinkle2014IntrinsicPolynomials}, \new{smoothing splines \citep{kume2007shape}}, %\citep{Noakes1989CubicSplinesCurvedSpaces}}, 
regression along geodesic paths with non-constant speed \citep{HongNiethammer2014TimeWarpedGeodesicRegression}, or kernel regression \citep{Davis2010ManifoldKernelRegression} \new{and Kriging \citep{pigoli2016kriging}}. \old{While the 
formulations are more general, landmark shape spaces often serve as an example.} 
However, \new{mostly} only one metric covariate or categorical covariates are considered\new{, possibly in hierarchical model extensions for longitudinal data \citep{muralidharan2012sasaki, schiratti2017bayesian}}. By contrast, \citet{zhu2009intrinsic,shi2009intrinsic,KimEtAl2014RiemannGLM} generalize geodesic regression to regression with multiple covariates \new{focusing on symmetric positive-definite (SPD) matrix responses}\old{but with symmetric positive-definite (SPD) matrices as response rather than shapes}. \citet{CorneaEtAl2017RegRiemannianSymSpaces} 
develop a general generalized linear model (GLM) analogue regression framework for responses in a symmetric \new{manifold} \old{space -- a class of Riemannian manifolds containing, in particular, Kendall's shape space --} and apply it to shape analysis. Recently, \citet{lin2020AdditiveModelsSPD} proposed a Lie group additive regression model for Riemannian manifolds focusing \old{however} on SPD matrices \old{and not on} \new{rather than} shapes.
%\cite{KimEtAl2018RiemannianNonlinearMixedEffectsModels} introduce a first mixed effects model for longitudinal observations of manifold-valued responses. \note{[andere Manifold Mixed Effects Models: \citet{schiratti2015learning, schiratti2017bayesian}]} 

In FDA, there is a much wider range of developed regression methods \citep[see overviews in][]{Morris2015,GrevenScheipl2017}. % addressing, in particular, also irregular curve-specific and potentially sparse measurement grids  \citep[e.g.][]{Yao:etal:2005,cederbaum2015functional}. 
Among the most flexible models \new{are functional additive models (FAMs)}
for (univariate) functional responses \new{(in contrast to FAMs with functional covariates \citep{ferraty2011recent}) with different strategies existing to model a) response functions and b) smooth covariate effects. For a), basis expansions in spline \citep{BrockhausGreven2015}, functional principal component (FPC) bases \citep{MorrisCarroll2006} or both \citep{Scheipl2015} are employed as well as wavelets \citep{meyer2015bayesian}, sometimes directly expanding functions to model on coefficients and sometimes expanding only predictions while keeping the raw measurements. Other approaches effectively evaluate curves on grids or apply pre-smoothing techniques instead \citep[e.g.,][]{jeon2020additiveHilbertian}.
For b), again penalized spline basis approaches are employed \citep{Scheipl2015, BrockhausGreven2015}, or local linear/polynomial \citep{MullerYao2008FAM, jeon2022locally} or other kernel-based approaches \citep{jeon2020additiveHilbertian, jeon2021additiveNonEuclidean}.
The different approaches come with different theoretical and practical advantages, but similiarities such as regarding asymptotic behavior   are also  known  from scalar nonparametric regression  \citep{LiRuppert2008asymptotics}.
Advantages of the fully basis expansion based approach summarized in \cite{GrevenScheipl2017} include its appropriateness for sparse irregular functional data and its modular extensibility to functional mixed models \citep{Scheipl2015, meyer2015bayesian} and non-standard response distributions \citep{BrockhausGreven2015, Stoecker2019FResponseLSS}.
}\old{
are the functional additive (mixed) models (FA(M)Ms) of  i.a.\ \cite{MorrisCarroll2006,meyer2015bayesian}  for dense functional data 
and the flexible FAMM model class also covering sparse irregular functional data, summarized in \cite{GrevenScheipl2017}.
}
%The problem of registration or warping has been extensively covered, see e.g.\ the special issue by \cite{marron2014warping}. 
%\cite{KneipRamsay2008CombiningRegistrationFittingFunModels,Gervini2015WarpedFunctionalRegression} incorporate time warping into  functional principal component analysis and functional response regression, respectively. 
\old{There is less work on regression f}\new{F}or bivariate or multivariate functional responses, which are closest to functional shapes/forms but without   invariances\old{.}\new{,}  
\cite{rosen2009bayesian,zhu2012multivariate,olsen2018simultaneous} consider linear fixed %(but no random) 
effects of scalar covariates, the latter also allowing for warping. 
\cite{
	%GoldsmithKitago2016Hierarchical,
	ZhuEtal2017multiFAMM,BackenrothEtal2018HeteroFPCA} consider one or more random effects for one grouping variable, linear fixed effects and common dense grids for all functions. 
%\cite{chiou2016multivariate} study linear effects of  multivariate functional covariates. 
%\cite{Hadjipantelis2014mixedModelPhaseAmplitude} develop a joint model for warped functional data and warping functions. 
\cite{Volkmann2020MultiFAMM} combine the FA\old{M}M model class of \cite{GrevenScheipl2017} with multivariate \old{functional principal component}\new{FPC} analysis %for sparse/irregular data %MFPCA,  \cite{ChiouEtal2014MFPCA}, 
\citep{HappGreven2018} to \new{model}\old{obain FA\old{M}Ms for} multivariate (sparse) functional responses.

 \old{For estimation of FAMMs, \citet{BrockhausGreven2015} employ a (component-wise) model-based gradient boosting algorithm. Model-based boosting \citep{HothornBuehlmann2010} is a stage-wise fitting procedure for estimating additive  models with inherent model selection and a slow over-fitting behavior, allowing to efficiently estimate models with a very large number of coefficients. 
Regularization is applied both in each step and via early stopping of the algorithm based, e.g., on cross-validation. The combination of high auto-correlation within functional responses and complex tensor-product covariate effects typical for FAMMs make the double regularization particularly beneficial in this context \citep{Stoecker2019FResponseLSS}. Gradient boosting with respect to (w.r.t.) a least-squares loss is typically referred to as $L^2$-Boosting \citep{Buehlmann2003L2boosting}. 
We introduce a \textit{Riemannian $L^2$-Boosting} algorithm for estimating interpretable additive regression models for response shapes/forms of planar curves.} 

%living in quotient spaces with respect to translation, rotation and potentially re-scaling. %We thereby extend GLM-type intrinsic regression for manifold-valued responses \citep{CorneaEtAl2017RegRiemannianSymSpaces} to additive models allowing for smooth covariate effects and for functional shape responses, while offering suitable means of regularization to face high-dimensional modeling. 
%In particular, the approach includes analysis of irregular functional shape data evaluated on varying grids.  
%We propose an SVD-based visualization technique, factorizing estimated tensor-product effects into intuitively interpretable directions of shape/form change on the one hand and conventional additive effects into that direction on the other hand. \\
\new{This paper establishes an interpretable FAM framework for modeling the shape or form of planar (potentially irregularly sampled) curves and/or landmark configurations in dependence on scalar covariates, extending $L_2$-Boosting \citep{Buehlmann2003L2boosting, BrockhausGreven2015} to Riemannian manifolds for model estimation.}
The three major contributions of our regression framework are: 
1.~We introduce additive regression with shapes/forms of planar curves and/or landmarks as response, extending FAMs to non-linear response spaces or, vice versa, extending GLM-type regression on manifolds for landmark shapes both to functional shape manifolds and to include %by both functional and
(non-linear) additive model effects. %Even beyond regression, we are not aware of any other explicit treatment of irregularly sampled functional shape data in literature to date, although frequently occurring in real-life outline data.
2.~We propose a novel Riemannian $L_2$-Boosting algorithm for estimating regression models for this type of manifold response, 
and 3.~a visualization technique based on tensor-product factorization yielding intuitive  interpretations even of multi-dimensional smooth covariate effects %not only mathematically but also intuitively by
for practitioners.
Although related tensor-product model transformations based on higher-order SVD have been used, i.a., in control engineering \citep{Baranyi2013tensor}, 
we are not aware of any comparable application for visualization in FAMs or other statistical models for object data. %it is similarly useful also for  functional additive models.  
%This visualization has not been used before, although 
%Nonetheless, we consider it useful also for other functional regression models. 
Despite our focus on shapes and forms, transfer of the model, Riemannian $L_2$-Boosting, and %tensor-product 
factorized visualization to other Riemannian manifold responses is intended in the generality of the formulation and the design of the provided R package \texttt{manifoldboost}\if0\blind{ (developer version on \url{github.com/Almond-S/manifoldboost})}\fi.
\old{Moreover, while considering potential invariance w.r.t.\ re-parameterization (``warping'') of curves is beyond the scope of this paper, the presented framework is intended to be combined with an ``elastic'' approach in the square root velocity framework \citep{SrivastavaKlassen2016} in the future.}
%Recent developments in functional shape analysis include the square root velocity framework \citep{SrivastavaKlassen2016} %has attracted considerable attention in this context 
%offering "elastic" analysis of functional shapes modulo re-parameterization ("warping"). Although we do not consider elastic methodology here assuming suitable parameterizations fixed and known, strong methodical intersections suggest viewing present investigations as an important step towards corresponding elastic regression.
The versatile applicability of the approach is illustrated in three different scenarios: 
an analysis of the shape of sheep astragali (ankle bones) represented by both regularly sampled curves and landmarks in dependence on categorical ``demographic'' variables; 
an analysis of the effects of different metric biophysical model parameters (including smooth  interactions) on the form of (irregularly sampled) cell outlines generated from a cellular Potts model; %, including also smooth covariate interaction terms; 
and a simulation study with irregularly sampled functional shape and form responses generated from a dataset of different bottle outlines and including metric and categorical covariates.

In Section \ref{chap_diffgeo}, we introduce the manifold geometry of irregular curves modulo translation, rotation and potentially re-scaling, which underlies the intrinsic additive regression model formulated in Section \ref{chap_model}. The Riemannian $L^2$-Boosting algorithm is introduced in Section \ref{chap_boosting}. Section \ref{chap_application} analyzes different data problems,   modeling sheep bone shape responses (Section \ref{sec_sheep}) and cell outlines (Section \ref{sec_cells}). Section \ref{sec_simulation} summarizes the results of simulation studies with functional shape and form responses. We conclude  with a discussion in Section \ref{chap_discussion}.

%Body of paper.  Margins in this document are roughly 0.75 inches all
%around, letter size paper.

% Style guides
% \url{http://amstat.tandfonline.com/action/authorSubmission?journalCode=ubes20&page=instructions#}
% \url{http://journals.taylorandfrancis.com/amstat/asa-style-guide/}

\section{Geometry of functional forms and shapes}
\label{chap_diffgeo}

\newcommand{\re}[1]{\operatorname{Re}\!\left(#1\right)}
\newcommand{\im}[1]{\operatorname{Im}\!\left(#1\right)}
\newcommand{\dir}{\times} % notation for direct product / direct sum

%In this section, we introduce the necessary geometric background for functional regression with shapes and forms as responses. 
Riemannian manifolds of planar shapes (and forms) %, and related Riemannian manifolds, 
are discussed in various textbooks at different levels of generality, in finite \citep{DrydenMardia2016ShapeAnalysisWithApplications, Kendall1999} or potentially infinite dimensions  \citep{SrivastavaKlassen2016,Klingenberg1995RiemannianGeometry}. 
%The shape and the form of a planar curve $y:[0,1]\rightarrow\R^2$ are defined as its equivalence class $[y]$ under translation, rotation, and, for shapes, scale. This is analogous to the shape/form of a landmark configuration $\by\in\R^{k\times 2}$, which we identify, for notational simplicity, with a function $y:\{1,\dots,k\}\rightarrow\R^2$. The shape/form representatives $y$ are considered elements of a Hilbert space $\cY$. Although the quotient spaces of $[y]$ present abstract Riemannian manifolds, we may pick suitable representatives $\widetilde{y}\in [y]\subset\cY$ and identify tangent spaces with linear subspaces of $\cY$ to formulate and carry out necessary geometric operations in $\cY$. This is due to well-known results on Riemannian immersions and submersions \citep[see, e.g.,][for details]{Huckemann2010intrinsicMANOVA,Klingenberg1995RiemannianGeometry} %. %As for landmark shapes in \citet{DrydenMardia2016ShapeAnalysisWithApplications}, it allows us 
%and allows to restrict to a minimum of differential geometric language for readability. %without further geometric background.  
Starting from the Hilbert space $\cY$ of curve representatives $y$ of a single shape or form observation, we successively characterize its quotient space geometry under translation, rotation and re-scaling including the respective tangent spaces. Building on that, we introduce Riemannian exponential and logarithmic maps and parallel transports needed for model formulation and fitting, and the sample space of (irregularly observed) functional shapes/forms.

% We consider response observations given as curve or point configuration in $\R^2$, which are considered invariant with respect to translation, rotation, and possibly scale. In equipping the resulting quotient spaces with a suitable manifold geometry, we follow the ideas of statistical shape analysis \citep{DrydenMardia2016ShapeAnalysisWithApplications} but adopt a functional data analysis perspective. 

\newcommand{\trl}{\operatorname{Trl}}
\newcommand{\scl}{\operatorname{Scl}}
\newcommand{\rot}{\operatorname{Rot}}
\newcommand{\neutral}{\!\mycal{1}}
\newcommand{\zero}{\mycal{0}}

To make use of complex arithmetic, we identify  the two-dimensional plane with the complex numbers, $\R^2 \cong \C$, and consider a planar curve to be a function $y: \R \supset \cT \rightarrow \C$, element of a separable complex Hilbert space $\cY$ with a complex inner product $\langle \cdot,\cdot\rangle$ and corresponding norm $\|\cdot\|$. 
%Accordingly, 
This allows simple scalar expressions for
the group actions of translation $\trl = \{y \overset{\trl_\gamma}{\longmapsto} y + \gamma\, \neutral : \gamma\in\C\}$ with $\neutral\in\cY$ canonically given by $\neutral: t\mapsto \frac{1}{\|t\mapsto 1\|}$ the real constant function of unit norm; re-scaling $\scl = \{y \overset{\scl_\lambda}{\longmapsto} \lambda \cdot (y - \zero_y) + \zero_y : \lambda \in \R^+\}$ around the centroid $\zero_y = \langle \neutral\,, y \rangle \neutral$ (which we consider more natural than using $\zero$, the zero element of $\cY$, mostly chosen in the literature); and rotation $\rot = \{y \overset{\rot_u}{\longmapsto} u\cdot (y-\zero_y) + \zero_y : u \in \S^1\}$ around $\zero_y$ with $\S^1 = \{u\in\C:|u|=1\} = \{\exp(\omega\i):\omega\in\R\}$ %the circle group 
reflecting counterclockwise rotations by $\omega$ radian measure. Concatenation yields combined group actions $G$ as direct products, such as the rigid motions $G = \trl \dir \rot = \{\trl_\gamma \circ \rot_u  : \gamma \in \C, u \in \S^1 \} \cong \C \times \S^1$ (see  Supplement \ref{sec:groupactions} for more details).
The two real-valued component functions of $y$ are identified with the real part $\re{y}: \cT \rightarrow \R$ and imaginary part $\im{y}:\cT \rightarrow \R$ of $y=\re{y} + \im{y} \i$. 
While the complex setup is used for convenience, the real part of $\langle \cdot, \cdot \rangle$ constitutes an inner product $\re{\langle y_1, y_2 \rangle} = \langle \re{y_1}, \re{y_2} \rangle + \langle \im{y_1}, \im{y_2} \rangle$ for $y_1, y_2 \in \cY$ on the underlying real vector space of planar curves.
Typically $\re{y},\ \im{y}$ are assumed square-intregrable with respect to a measure $\nu$ and we consider the canonical inner product $\langle y_1, y_2 \rangle = \int \conj{y}_1 y_2 d\nu$ where $\conj{y}$ denotes the conjugate transpose of $y$, i.e. $\conj{y}(t) = \re{y}(t) - \im{y}(t)\i$ is simply the complex conjugate, but for vectors $\by\in\C^k$, the vector $\conj{\by}$ is also transposed. 
For curves, we typically assume $\nu$ to be the Lebesgue measure on $\cT = [0,1]$;
for landmarks, a standard choice %for $\nu$ would be 
is the counting measure on $\cT = \{1,\dots,k\}$.
 
The ultimate response object is given by the \emph{orbit} $[y]_{G}=\{g(y) : g\in G\}$ (or short $[y]$) of $y\in\cY$, the equivalence class under the respective combined group actions $G$: %, i.e. by the equivalence class of $y$ under the  %including $\trl$, $\rot$ and potentially $\scl$ 
% $\{y \mapsto \lambda u y + \gamma : \gamma\in\C, \lambda\in\R^+, u\in\S^1\}$
\new{with $G=\trl \dir \rot \dir \scl$,}
$\new{[y] =} [y]_{\trl \dir \rot \dir \scl} = \{\lambda u\, y+ \gamma\, \neutral : \lambda \in \R^+, u \in \S^1, \gamma \in \C\}$ is referred to as the \emph{shape} of $y$ and\new{, for $G=\trl \dir \rot$,}  $\new{[y] =} [y]_{\trl \dir \rot}=\{u y + \gamma\,\neutral :  u \in \S^1, \gamma\in\C\}$ as its \emph{form} or \emph{size-and-shape}.
% manifold geometries under modular invariances
% Riemannian manifolds of shapes and forms %, and related Riemannian manifolds, 
% are discussed in various textbooks at different levels of generality, in finite or potentially in-finite dimensions \citep[e.g.][]{DrydenMardia2016ShapeAnalysisWithApplications, SrivastavaKlassen2016, Kendall1999,Klingenberg1995RiemannianGeometry}.  
$\cY_{/G} = \{[y]_G : y\in\cY\}$ denotes the quotient space of $\cY$ with respect to $G$. The description of the Riemannian geometry of $\cY_{/G}$ involves, in particular, a description of the tangent spaces $T_{[y]}\cY_{/G}$ at points $[y]\in \cY_{/G}$, which can be considered local vector space approximations to $\cY_{/G}$ in a neighborhood of $[y]$. For a point $q$ in a manifold $\cM$ the tangent vectors $\beta \in T_{q}\cM$ can, i.a., be thought of as gradients $\dot{c}(0)$ of paths $c:\R \supset (-\delta, \delta) \rightarrow \cM$ at $0$ where they pass through $c(0) = q$. Besides their geometric meaning, they will also play an important role in the regression model, as additive model effects are formulated on tangent space level. %A suitable family of inner products on the tangent spaces defines a Riemannian metric on $\{T_{[y]}\cY_{/G} : [y] \in \cY_{/G}\}$. %In the following, we provide a hands-on geometric motivation for statistical modeling, using distinguished representatives $y\in\cY$ of orbits $[y]$ to identify tangent vectors with functions $\beta: \cT \rightarrow \C$ in a linear sub-space of $\cY$. 
%Mapping orbits $[y]_G$ back to 
Choosing suitable representatives $\widetilde{y}^G \in [y]_G \subset \cY$ (or short $\widetilde{y}$) of orbits $[y]_G$, we use an identification of tangent spaces with  suitable linear subspaces $T_{[y]_G}\cY_{/G} \subset \cY$. %This is possible due to well-known results on Riemannian immersions and submersions \citep[see, e.g.,][and the Supplementary Material]{Huckemann2010intrinsicMANOVA,Klingenberg1995RiemannianGeometry}.
% translation invariance

\textbf{Form geometry:}
Starting with translation as the simplest invariance, an orbit $[y]_{\trl}$ can be one-to-one identified with its centered representative $\widetilde{y}^{\trl} =y - \langle y, \neutral\rangle\, \neutral$ %, with $\neutral: t\mapsto \frac{1}{\|t\mapsto 1\|}$ the real constant function of unit norm, 
yielding an identification $\cY_{/\trl} \cong \{y\in\cY : \langle y, \neutral\, \rangle = 0\}$ with a linear subspace of $\cY$. Hence, also $T_{[y]}\cY_{/\trl} = \{y\in\cY : \langle y, \neutral\, \rangle = 0\}$. %For any two $y_1, y_2 \in \cY$, the distance $d([y_1]_{\trl}, [y_2]_{\trl}) = \|\widetilde{y}^{\trl}_1 - \widetilde{y}^{\trl}_2\| = \min_{y'_1\in[y_1], y'_2\in[y_2]} \|y'_1 - y'_2\|$ corresponds to the minimum distance of points in the orbits $[y_1]$ and $[y_2]$.
% rotation invariance
%\paragraph{Rotation invariance:}
For rotation, by contrast, %there is no global alignment, but 
we can only find local identifications with Hilbert subspaces (i.e.\ charts) around reference points $[p]_{\trl \dir \rot}$ we refer to as ``poles''. Moreover, we restrict to $y, p \in \cY^* = \cY \setminus [\zero\,]_{\trl}$ eliminating constant functions as degenerate special cases in the translation orbit of zero. % $\zero\in\cY$. %eliminating constant functions as degenerate special cases in the orbit of zero, and assume the representatives $y$ and $p$ already centered to address translation invariance. 
For each $[y]_{\trl\dir\rot}$ in an open neighborhood around $[p]_{\trl\dir\rot}$ which can be chosen with $\langle \widetilde{y}^{\trl}, \widetilde{p}^{\trl} \rangle \neq 0$, $y$ can be uniquely rotation aligned to $p$, yielding a one-to-one identification of the form $[y]_{\trl \dir \rot}$ with the aligned representative given by $\widetilde{y}^{\trl\dir\rot} = \frac{\langle \widetilde{y}^{\trl}, \widetilde{p}^{\trl} \rangle}{|\langle \widetilde{y}^{\trl}, \widetilde{p}^{\trl} \rangle|} \widetilde{y}^{\trl} = \argmin[{y' \in [y]_{\trl\dir\rot}}] \|y' - p\|$ (compare Fig. \ref{fig:quotientgeometry}). While $\widetilde{y}^{\trl \dir \rot}$ depends on $p$, we omit this in the notation for simplicity. 
%The local identification of the form $[y]_{\trl \dir \rot}$ is given by $\widetilde{y}^{\trl \dir \rot} = \widetilde{(\widetilde{y}^{\trl})}^{\rot}$ concatenating the identification for translation and rotation.
All $\widetilde{y}^{\trl}$ rotation aligned to $\widetilde{p}^{\trl}$ lie on the hyper-plane determined by $\im{\langle \widetilde{y}^{\trl}, \widetilde{p}^{\trl}\rangle} = 0$ (Figure \ref{fig:quotientgeometry}), which yields
%We motivate tangent vectors to $\cY^*_{/\trl + \rot}$ via \textit{geodesics} generalizing straight lines to curved manifolds and describing shortest paths.
%Due to the local identification with a subspace of $\cY$, geodesics with origin in the pole $[p]_{\trl + \rot}$ take the form $c(\tau) = [\widetilde{p}^{\trl} + \beta \tau]_{\trl+\rot}$ with $\beta = \lambda (\widetilde{y}^{\trl + \rot} - \widetilde{p}^{\trl})$ determined by some $y\in\cY$ in the neighborhood of $p$ and $\lambda \in \R$.
%Since the tangent vectors at $[p]_{\trl+\rot}$ correspond to the $\beta=\dot{c}(0)$, they form the tangent space 
%the tangent space corresponds to the (real) vector space of all such $\beta$, i.e.
$T_{[p]}\cY_{/\trl + \rot}^* = \{y\in\cY : \langle y, \neutral \rangle = 0,\ \im{\langle y, p \rangle} = 0 \}$ \new{with normal vectors $\zeta^{(1)} = \neutral, \zeta^{(2)} = \i\, \neutral, \zeta^{(3)} = \i\, p$}. 
Note that, despite the use of complex arithmetic, $T_{[p]}\cY_{/\trl \dir \rot}^*$ is a real vector space not closed under complex scalar multiplication. %In general, we consider all quotient spaces manifolds over $\R$ rather than over $\C$.
The geodesic distance of $[y]_{\trl \dir \rot}$ to the pole $[p]_{\trl \dir \rot}$ is given by $d([y]_{\trl\dir\rot}, [p]_{\trl\dir\rot}) = \|\widetilde{y}^{\trl \dir \rot} - \widetilde{p}^{\trl}\| = \underset{y'\in[y]_{\trl\dir\rot}, p'\in[p]_{\trl\dir\rot}}{\operatorname{argmin}}\|y'-p'\|$. It reflects \old{both }the length of the shortest path (i.e. the geodesic) between the forms and the minimum distance between the orbits as sets.

% scale invariance
\textbf{Shape geometry:}
To account for scale invariance in shapes $[y]_{\trl \dir \rot \dir \scl}$, %is somewhat different since $\scl_\lambda$ is not isometric. Instead, 
they are identified with normalized representatives $\widetilde{y}^{\trl \dir \rot \dir \scl} = \frac{\widetilde{y}^{\trl\dir\rot}}{\|\widetilde{y}^{\trl\dir\rot}\|}$. 
Motivated by the normalization, we borrow the well-known geometry of the sphere $\S = \{y \in \cY : \|y\| = 1\}$, where $T_{p}\S = \{y\in \cY : \re{\langle y, p \rangle} = 0\}$ is the tangent space at a point $p\in\S$ and geodesics are great circles.
Together with translation and rotation invariance, the shape tangent space is then given by $T_{[p]}\cY_{/\trl \dir \rot \dir \scl}^* = T_{[p]}\cY_{/\trl \dir \rot}^* \cap T_{p}\S = \{y\in\cY : \langle y, \neutral \rangle = 0,\ \langle y, p \rangle = 0 \}$ \new{with normal vector $\zeta^{(4)} = p$ in addition to $\zeta^{(1)}, \zeta^{(2)}, \zeta^{(3)}$ above}. The geodesic distance $d([p]_{\trl \dir \rot \dir \scl}, [y]_{\trl \dir \rot \dir \scl}) = \arccos |\langle \widetilde{y}^{\trl \dir \rot \dir \scl}, \widetilde{p}^{\trl \dir \rot \dir \scl} \rangle|$ corresponds to the arc-length between the representatives. This distance is often referred to as \textit{Procrustres distance} in statistical shape analysis.\\ %The geometry of the shape space corresponds to that of the complex projective space $\C P$, the manifold of complex lines in $\cY$ through the origin.\\

We may now define the maps needed for the regression model formulation. Let $\widetilde{y}$\old{,} \new{and} $\widetilde{p}$ \old{and $\widetilde{p}'$} be shape/form representatives of $[y]$\old{,} \new{and}  $[p]$ \old{and $[p']$} rotation aligned to the shape/form pole representative $p$. 
% Geodesics
Generalizing straight lines to a Riemannian manifold $\cM$, geodesics $c: (-\delta, \delta) \rightarrow \cM$ can be characterized by their ``intercept'' $c(0)\in\cM$ and ``slope'' $\dot{c}(0)\in T_{c(0)}\cM$.
%All distances defined above are 'geodesic distances' reflecting the length of shortest (differentiable) paths $c:[\tau_1, \tau_2] \rightarrow \cM$ between two points $c(\tau_1) = y_1$ and $c(\tau_2) = y_2$ in the respective manifold $\cM$, i.e. $d(y_1, y_2) = \int_{\tau_1}^{\tau_2} \|\dot{c}(\tau) \| d\tau$. Assuming w.l.o.g. constant speed $\|\dot{c}(\tau) \|$ and $\tau_1 < 0 < \tau_2$, shortest paths between two sufficiently close points $y_1, y_2 \in \cM$ are formed by \textit{geodesics} which can be characterized by their 'slope' $\beta = \dot{c}(0) \in T_p\cM$ at an 'intercept' $p = c(0) \in \cM$ they pass through. Geodesics present the direct generalization of straight lines $\tau \mapsto p + \beta \tau$ in the vector space case $\cM = \cY$ to curved spaces. %In the manifolds studied here, these 'geodesics' can be extended to $c: \R \rightarrow \cM$ defined on all $\R$ with the property that for any $p'=c(\tau_0)$ there is a $\delta > 0$ such that $c$ yields the shortest path between any sufficiently close points $y_1' = c(\tau_1'),\ y_2' = c(\tau_2')$ with $d(y_i', p') < \delta, i=1,2$. Conversely, any shortest path in $\cM$ is a segment of a geodesic.
% Maps
%With $c$ the geodesic with intercept $p$ and slope $\beta$ as above, 
The
\emph{exponential map} $\Exp_{q}: T_q\cM \rightarrow \cM$ at a point $q\in\cM$ is defined to map $\beta \mapsto c(1)$ for $c$ the geodesic with $q=c(0)$ and $\beta=\dot{c}(0)$. It maps $\beta\in T_q\cM $ to a point $\Exp_{q}(\beta) \in \cM$ located $d(q,\Exp_{q}(\beta)) = \|\beta\|$ apart of the pole $q$ in the direction of $\beta$.
On the form space $\cY_{/\trl \dir \rot}$, the exponential map is simply given by $\Exp_{[p]\new{_{\trl\dir\rot}}}(\beta) = \left[\widetilde{p}\new{^{\trl\dir\rot}} + \beta\right]\new{_{\trl\dir\rot}}$. 
On the shape space $\cY_{/\trl \dir \rot \new{\dir}\old{+} \scl}$, identification with exponential maps on the sphere yields $\Exp_{[p]\new{_G}}(\beta) = \left[\cos(\|\beta\|) \widetilde{p}\new{^G} + \sin(\|\beta\|) \frac{\beta}{\|\beta\|}\right]\new{_G}$ \new{ with $G=\trl\dir\rot\dir\scl$}.
In an open neighborhood $\cU$, $q\in\cU\subset\cM$, $\Exp_{q}$ is invertible yielding the $\Log_{q}: \cU \rightarrow T_q\cM$ map from the manifold to the tangent space at $q$. For forms, it is given by $\Log_{[p]_\new{\trl \dir \rot}}([y]_\new{\trl \dir \rot}) = \widetilde{y}^\new{\trl \dir \rot} - \widetilde{p}^\new{\trl \dir \rot}$ and, for shapes, by $\Log_{[p]_\new{G}}([y]_\new{G}) = d([p]_\new{G}, [y]_\new{G}) \frac{\widetilde{y}^\new{G} - \langle \widetilde{p}^\new{G}, \widetilde{y}^\new{G} \rangle \widetilde{p}^\new{G}}{\|\widetilde{y}^\new{G} - \langle \widetilde{p}^\new{G}, \widetilde{y}^\new{G} \rangle \widetilde{p}^\new{G}\|}$ with \new{$G=\trl \dir \rot\dir\scl$}\old{$\widetilde{y}, \widetilde{p}$ the form or shape representatives, respectively}. %, again assuming $y$ and $p$ aligned representatives.
Finally, $\Transp_{q, q'}: T_{q}\cM \rightarrow T_{q'}\cM$ parallel transports tangent vectors $\varepsilon \mapsto \varepsilon'$ isometrically along a geodesic $c(\tau)$ connecting $q$ and $q'\in\cM$ such that the slopes $\new{\Transp_{q, q'}(}\dot{c}(q)\new{)} \new{=}\old{\cong} \dot{c}(q')$ are identified and all angles are preserved. %It is used to identify directions in both tangent spaces. 
For shapes, $\Transp_{[\old{p}\new{y}]_\new{G}, [p\old{'}]_\new{G}}(\varepsilon) = \varepsilon - \langle \varepsilon, {\widetilde{p}\old{'}}^\new{G} \rangle \frac{\old{\widetilde{p}}\new{\widetilde{y}}^\new{G} + {\widetilde{p}\old{'}}^\new{G}}{1+\langle \old{\widetilde{p}}\new{\widetilde{y}}^\new{G}, {\widetilde{p}\old{'}}^\new{G} \rangle}$\new{, with $G = \trl \dir \rot \dir \scl$,} takes the form of the parallel transport on a sphere replacing the real inner product with its complex analogue. %Parts in $\varepsilon$ orthogonal to the complex $\widetilde{p}$-$\widetilde{p}'$-plane are left unchanged while the remainder is 'rotated' in the plane into $T_{[p']}\cY_{/\trl \dir \rot \dir \scl}$. 
For forms, it changes only the $\im{\langle\varepsilon, \widetilde{p}\old{'}\rangle}$ coordinate orthogonal to the real $\old{\widetilde{p}}\new{\widetilde{y}}$-$\widetilde{p}\old{'}$-plane as in the shape case, while the remainder of $\varepsilon$ is left unchanged as in a linear space. This yields $\Transp_{[\old{p}\new{y}]\new{_G}, [p\old{'}]\new{_G}}\left(\varepsilon\right) =  \varepsilon -  \im{\langle \widetilde{p}\old{'}\new{^G}/\|\widetilde{p}\new{^G}\old{'}\|, \varepsilon \rangle} \frac{\old{\widetilde{p}}\new{\widetilde{y}^G}/\|\old{\widetilde{p}}\new{\widetilde{y}^G}\| + \widetilde{p}\new{^G}\old{'}/\|\widetilde{p}\new{^G}\old{'}\|}{1+\langle \old{\widetilde{p}}\new{\widetilde{y}^G}/\|\old{\widetilde{p}}\new{\widetilde{y}^G}\|, \widetilde{p}\new{^G}\old{'}/\|\widetilde{p}\new{^G}\old{'}\| \rangle}\i$\new{, with $G=\trl\dir\rot$,} for form tangent vectors. While \old{this or} equivalent expressions for the parallel transport in the shape case can be found, e.g., in \cite{DrydenMardia2016ShapeAnalysisWithApplications, Huckemann2010intrinsicMANOVA}, a corresponding derivation for the form case is given in \old{the} Supplement \ref{sec:transport}\old{. This also involves a more detailed} \new{including a} discussion of the quotient space geometry in differential geometric terms.
%Assuming $p$ and $p'$ are rotation aligned, it is simply the identity $\varepsilon' = \varepsilon$ for $\cY_{/\trl \dir \rot}$; and for $\cY_{/\trl \dir \rot \dir \scl}$, it is the parallel transport between $p, p' \in \S$ given by $\varepsilon' = \varepsilon^{\perp} + \cos(\rho) \varepsilon^{\parallel} - \sin(\rho) \langle p', \varepsilon^\parallel\rangle p$ where $\rho = d([p], [p'])$ and the tangent vector $\varepsilon = \varepsilon^\perp + \varepsilon^\parallel$ is decomposed into a unchanged part $\varepsilon^\perp$ which is orthogonal to the $p$-$p'$-plane and a part $\varepsilon^\parallel = \frac{\langle \varepsilon_{p'}, \varepsilon \rangle}{\langle \varepsilon_{p'}, \varepsilon_{p'}\rangle} \varepsilon_{p'}$, with $\varepsilon_{p'} = \Log_{p}(p')$, pointing from $p$ into the direction of $p'$. Fig. \ref{fig:quotientgeometry} provides an illustration of the previously defined maps on the sphere.

% irregular shapes
%Due to the commutativity of $\trl$, $\scl$ and $\rot$, the above identifications can be carried out subsequently (and in arbitrary order) giving rise to the respective quotient manifold geometry in case of multiple invariances.
Based on this understanding of the response space, we may now proceed to consider a sample of curves $y_1, \dots, y_n \in \cY$ representing orbits $[y_1], \dots, [y_n]$ with respect to group actions $G$. In the functional case, with the domain $\cT = [0,1]$, these curves are usually observed as evaluations $\by_i = (y_i(t_{i1}), \dots, y_i(t_{ik_i}))^\top$ on a finite grid $t_{i1} < \dots < t_{ik_i} \in \cT$ which may differ between observations. In contrast to the \textit{regular} case with common grids, this more general data structure is referred to as \textit{irregular} functional shape/form data. To handle this setting, we replace the original inner product $\langle \cdot, \cdot\rangle$ on $\cY$ by individual %individually measureable pendant 
 %of the form 
$\langle y_i , y_i' \rangle_i = \conj{\by}_i \bW_i \by'_i$ providing inner products on the $k_i$-dimensional space $\cY_i=\C^{k_i}$ of evaluations $\by_i, \by_i'$ on the same grid. The symmetric positive-definite weight matrix $\bW_i$ can be chosen to implement an approximation to integration w.r.t.\ the original measure $\nu$ with a numerical integration measure $\nu_i$ such as given by the trapezoidal rule. Alternatively, $\bW_i = \frac{1}{k_i} \bI_{k_i}$ with $k_i \times k_i$ identity matrix $\bI_{k_i}$ presents a canonical choice that is analog to the landmark case for $k_i \equiv k$. 
\new{Moreover, data-driven $\mathbf{W}_i$ could also be motivated from the covariance structure estimated for (potentially sparse) $y_1, \dots, y_n$ along the lines of \citet{Yao:etal:2005, stoecker2022efp}. 
	While this is beyond the scope of this paper, potential procedures are sketched in Supplement \ref{sec::FPC}.}
With the inner products given for $i=1,\dots, n$, the sample space naturally arises as the Riemannian product $\cY^*_{1/G} \times \dots \times \cY^*_{n/G}$ of the orbit spaces, with the individual geometries constructed as described above.

\begin{figure}
	\centering
	\includegraphics[width=0.45\linewidth]{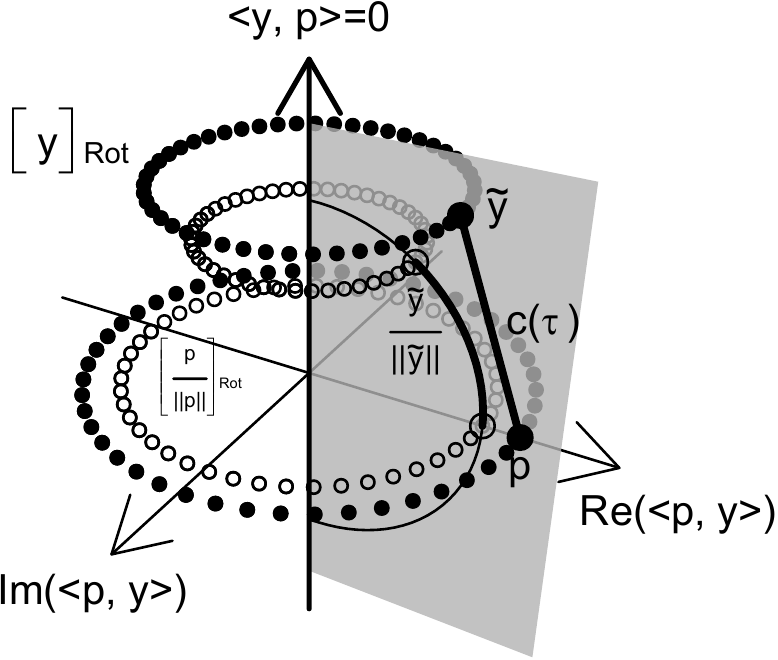}
	\includegraphics[width=0.45\linewidth]{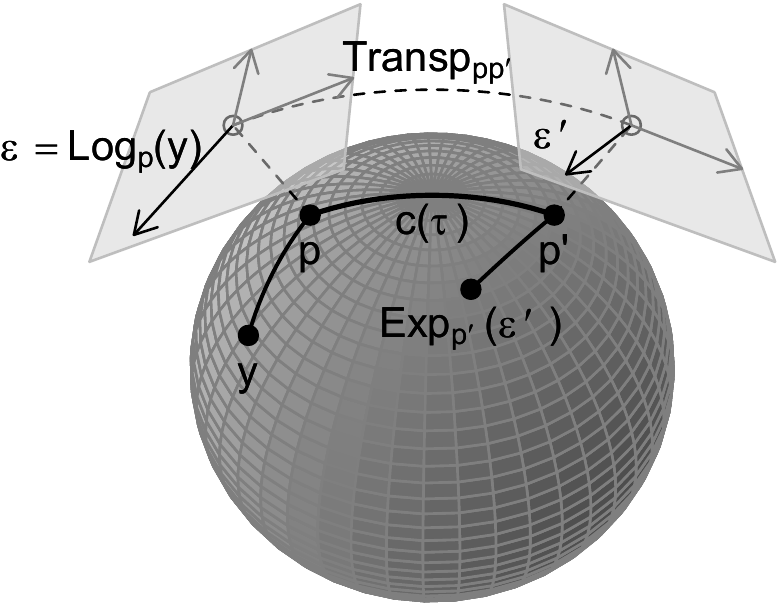}
	\caption[quotientgeometry]{
		\textit{Left:} Quotient space geometry: assuming $p$ and $y$ centered, translation invariance is not further considered in the plot; given pole representative $p$, we express $y = \frac{\re{\langle p, y\rangle}}{\|p\|^2} p + \frac{\im{\langle p, y\rangle}}{\|p\|^2} ip + (y-\frac{\langle p, y\rangle}{\|p\|^2} p) \in\cY$ in its coordinates in $p$ and $ip$ direction, subsuming all orthogonal directions in the third dimension. In this coordinate system, the rotation orbit $[y]_{\rot}$ corresponds to the dotted horizontal circle, and is identified with the aligned $\widetilde{y}:=\widetilde{y}^{\rot}$ in the half-plane of $p$; $[y]_{\rot \dir\scl}$ is identified with the unit vector $\widetilde{y}^{\rot\dir\scl} = \frac{\widetilde{y}}{\|\widetilde{y}\|}$ projecting $\widetilde{y}$ onto the hemisphere depicted by the vertical semicircle. Form and shape distances between $[p]$ and $[y]$ correspond to the length of the geodesics $c(\tau)$ (\textit{thick lines}) on the plane and sphere, respectively. 
		\textit{Right:} Geodesic line $c(\tau)$ between $p=c(0)$ and $p'=c(1)$, Log-map projecting $y$ to $\varepsilon \in T_p\mathcal{M}$, parallel transport $\text{Transp}_{pp'}$ forwarding $\varepsilon$ to $\varepsilon' \in T_{p'}\mathcal{M}$, and Exp-map projecting $\varepsilon'$ onto $\mathcal{M}$ visualized for a sphere. Tangent spaces, identified with subspaces of the ambient space, are depicted as \textit{gray planes} above the respective poles. The parallel transport preserves all angles between tangent vectors and identifies $\dot{c}(0) \cong \dot{c}(1)$.
	}
	\label{fig:quotientgeometry}
\end{figure}

\section{Additive Regression on Riemannian Manifolds}
\label{chap_model}

Consider a data scenario with $n$ observations of a random response covariate tuple $\left(Y, \bX\right)$, where the realizations of $Y$ are planar curves $y_i: \cT \rightarrow \C$, $i=1,\dots,n$, belonging to a Hilbert space $\cY$ defined as above and potentially irregularly measured on individual grids $t_{i1} < \dots < t_{ik_i} \in \cT$. % which we also refer to as 'time points'. 
The response object $[Y]$ is the equivalence class of $Y$ with respect to translation, rotation and possibly scale and the sample $[y_1],\dots, [y_n]$ is equipped with the respective Riemannian manifold geometry introduced in the previous section. 
For $i=1,\dots,n$, realizations $\bx_{i}\in \cX$ of a covariate vector $\bX$ in a covariate space $\cX$ are observed.  $\bX$ can contain several categorical and/or metric covariates. 

\newcommand{\effid}{j}
\newcommand{\Effid}{J}
For regressing the mean of $[Y]$ on $\bX=\bx$, we model the shape/form $[\mu]$ of \old{a curve }$\mu\in\cY$ as
\begin{equation}
	\label{model_equation}
	[\mu] = \Exp_ {[p]}\left(h(\bx)\right) = \Exp_ {[p]}\left(\sum_{\effid=1}^\Effid h_\effid(\bx)\right), 
\end{equation}
with 
an additive predictor $h: \cX \rightarrow T_{[p]}\cY^*_{/G}$ acting in the tangent space at an ``intercept'' $[p] \in \cY^*_{/G}$.
Generalizing an additive model ``$Y = \mu + \epsilon = p + h(\bx) + \epsilon$'' in a linear space, we implicitly define $[\mu]$ as the conditional mean of $[Y]$ given $\bX=\bx$ by assuming zero-mean ``residuals'' $\epsilon $. In their definition, we follow \cite{CorneaEtAl2017RegRiemannianSymSpaces}
but extend to the functional shape/form and additive case.
%the mean is implicitly defined via generalization of the popular linear or additive model %formulation of the form
%$Y = \mu + \varepsilon$, where residuals $\varepsilon$ are assumed to be independent and identically distributed with mean zero and $\mu$ is deterministic (conditional on covariates). 
%Implicitly, $\mu$ is defined to be the mean of $Y$. 
%The model does not require any particular distribution assumption. %(but existance of first and typically second moments). 
%In our case,  the conditional mean 	$[\mu]$ for any $\bx$, represented by some function $\mu\in \cY$, is defined such that 
We assume local linearized residuals $\varepsilon_{[\mu]} = \Log_{[\mu]}([Y])$ in $T_{[\mu]}\cY^*_{/G}$ to have mean $\E\left(\varepsilon_{[\mu]}\right) = \zero$, %the zero element of $\cY$, 
which corresponds to $\E\left(\varepsilon_{[\mu]}(t)\right) = 0$ for ($\nu$-almost) all $t \in \cT$. %when identifying $\varepsilon_{[\mu]}$ with a function in $\cY$. 
Here, we assume $[Y]$ is sufficiently close to $[\mu]$ with probability 1 such that $\Log_{[\mu]}$ is well-defined, which is the case whenever $\langle \widetilde{Y}, \widetilde{\mu}\rangle \neq 0$ for centered shape/form representatives $\widetilde{Y}$ and $\widetilde{\mu}$, an un-restrictive and common assumption \cite[compare also][]{CorneaEtAl2017RegRiemannianSymSpaces}. 
However, residuals $\varepsilon_{[\mu]}$ for different $[\mu]$ belong to separate tangent spaces.
To obtain a formulation in a common linear space instead, local residuals are mapped to residuals $\epsilon = \Transp_{[\mu], [p]}(\varepsilon_{[\mu]})$ by parallel transporting them from $[\mu]$ to the common covariate independent pole $[p]$. After this isometric mapping into $T_{[p]}\cY^*_{/G}$, we can equivalently define the conditional mean $[\mu]$ via $\E\left(\epsilon\right) = \zero$ for the transported residuals $\epsilon$.
% 
%\note{CHECK:[}
%Moreover, assuming finite second moments, $\E\left(\epsilon\right) = \zero$ implies $\zero \in \argmin[u\in T_{[p]}\cY^*_{/G}] \| u - \epsilon \|^2$ and, correspondingly, $[\mu]$ is a minimizer of the quadratic loss $\ell(\xi) = d(\xi, [\mu])^2$, which is again analogous to standard Euclidean means. %\note{]}
\newline
$\Exp_{[p]}$  maps the additive predictor $h(\bx) = \sum_{\effid=1}^\Effid h_\effid(\bx) \in T_{[p]}\cY^*_{/G}$ to the response space. It is analogous to a response function in GLMs but depends on $[p]$. While other response functions could be used, we restrict to the exponential map here, such that the model contains a geodesic model \citep{Fletcher2012GeodesicRegressionRiemannianManifold} -- the direct generalization of simple linear regression -- as a special case for $h(\bx) = \beta x_1$ with a single covariate $x_1$ and tangent vector $\beta$. %The pole $[p]$ can be interpreted as an intercept. %A typical choice is to define it as overall mean of $[Y]$ in the sense described above, which corresponds to assuming that $h$ is centered such that $\E_{\bX}[h(\bX)] = \zero$.
Typically, it is assumed that $h$ is centered such that $\E\left(h(\bX)\right) = \zero$, and the pole $[p]$ \old{corresponds to}\new{is} the overall mean of $[Y]$ defined, like the conditional mean, via residuals of mean zero.
%
%In a linear space, Model Equation \eqref{model_equation} directly reduces to $\mu = p + h(\bx)$, i.e. to a standard functional additive model, where covariate effects are also usually centered to distinguish $p$ from $h(\bx)$ \citep[cf.][]{Scheipl2015}. %And similar to there, 
%The effect functions $h_j$ composing the additive predictor $h(\bx)$ %present a constitutive part of the model. They 
%are discussed in detail in the following subsection.

\subsection{Tensor-product effect functions $h_\effid$}
\label{sec_tensorproduct_effects}
\newcommand{\baseid}{r}
\newcommand{\Baseid}{m}
\newcommand{\nb}{0}
\newcommand{\tpid}{{}}
\citet{Scheipl2015} and other authors employ tensor-product (TP) bases for functional additive model terms. This naturally extends to tangent space effects, which we model as
% tangent space basis
%We model all additive terms as tensor-product effects 
$$h_\effid(\bx) = \sum_{\baseid, l} \theta_j^{(r,l)}\, b_j^{(l)}(\bx)\, \partial_r$$ %= (\bb_\effid(\bx)\otimes \bb_0)^\top\btheta_\effid$$ 
with the TP basis given by the pair-wise products of $\Baseid_\tpid$ linearly independent tangent vectors $\partial_r \in T_{[p]}\cY^*_{/G},\ r=1,\dots, m_\tpid,$ and $\Baseid_j$ basis functions $b_j^{(l)}: \cX \rightarrow \R,\ l=1,\dots,m_j,$ for the $\effid$-th covariate effect depending on one or more covariates. The real \old{basis} coefficients can be arranged as a matrix $\{\theta_j^{(r,l)}\}_{r,l} = \bTheta_j \in \R^{m_\tpid \times m_j}$. 
%The effect can be written in the usual vectorized additive model notation by arranging the $\partial_r$ and $b_j^{(l)}$ in vectors $\bb_0$ and $\bb_j$, combining them with a Kronecker product "$\otimes$" to an $m_\tpid \cdot m_j$ vector of tensor-product basis elements $\partial_r\cdot b_j^{(l)}$, and stacking the coefficients in a corresponding vector $\btheta_\effid = \text{vec}(\bTheta_j) \in \R^{m_\tpid m_j}$.
\new{
Also for infinite-dimensional $T_{[p]}\cY^*_{/G}$ and a general non-linear dependence on $x$, a basis representation approach requires truncation to finite dimensions $m$ and $m_j$ in practice. Choosing the bases to capture the essential variability in the data, their size can be extended with increasing data size and computational resources.}

While, in principle, the basis $\{\partial_r\}_r$ could also vary across effects $j=1,\dots,J$, we assume a common basis for notational simplicity, which presents the typical choice.
%Occasionally, we also identify basis vectors with the sets of their elements and write, e.g., $\beta \in \text{span}(\bb_0)$ for a tangent vector $\beta$ in the  span of $\{b_0^{(1)}, \dots, b_0^{(m_\tpid)}\}$.\\ 
Due to the identification of $T_{[p]}\cY^*_{/G}$ with a subspace of the function space $\cY$, %depending on the representative $p\in\cY$ picked for $[p]$
the $\{\partial_r\}_r$ may be specified using a  function basis commonly used in additive models: Let $b_{\nb}^{(l)}: \cT \rightarrow \R$, $l = 1,\dots, m_\nb$ be a basis of real functions, say a B-spline basis \new{(other typical bases used in the literature include wavelet \citep{meyer2015bayesian} or FPC bases \citep{MullerYao2008FAM})}. Then we construct the tangent space basis as $\partial_r = \sum_{l=1}^{m_\nb} \left( z_p^{(l, r)} + z_p^{(m_\nb+l, r)}\i\right)  b_\nb^{(l)}$, employing the same basis for the $1$- and $\i$-dimension before transforming it with a\old{ suitable} basis transformation matrix $\bZ_p = \{z_p^{(l, r)}\}_{l,r}\in \R^{2m_\nb \times m_\tpid}$ with $\Baseid_\tpid < 2\Baseid_\nb$ implementing the linear tangent space constraints \old{as given in }\new{(}Section \ref{chap_diffgeo}\new{)}.
%, i.e. $\langle \partial_r, 1\rangle=0$ for translation invariance and $\im{\langle \partial_r, p \rangle}=0$ for rotation invariance, for $\baseid = 1,\dots, m_\tpid$. The tangent space constraint $\re{\langle \partial_r, p \rangle}=0$, in the case of scale invariance, combines with the rotation constraint to $\langle \partial_r, p \rangle=0$ for shape responses.  
Practically, $\bZ_p$ \old{can be}\new{is} obtained as null space basis matrix of the matrix $(\re{\bC}, \im{\bC})$ with $\bC=\{\langle b_0^{(l)}, \zeta^{(r)}\rangle\}_{r,l}$ (or with the empirical inner product on the product space of irregular curves instead) constructed from the normal vectors $\zeta^{(r)}\in\cY$, $r=1,\dots, 2m_0-m,$ to $T_{[p]}\cY^*_{/G}$.  %\citep[similar to][]{Wood2017, Stoecker2019FResponseLSS}. 
For closed curves, we additionally choose $\bZ_p$ to enforce periodicity, i.e.\ $\partial_r(t) = \partial_r(t+t_0)$ for some $t_0 \in \R$ \citep[compare][]{Hofner2016ConstrainedRegression}. 

\newcommand{\catidx}{\kappa}
\newcommand{\catn}{K}
% Covariate effect basis
Given the tangent space basis, we may now modularly specify the usual additive model basis functions $b_j^{(l)}: \cX \rightarrow \R$, $l=1,\dots, m_j$, for the $j$-th covariate effect to obtain the full functional additive model ``tool box'' offered by, e.g., \citet{BrockhausGreven2015}. A linear effect -- linear in the tangent space -- of the form $h_\effid(\bx) = \beta z$ with a scalar (typically centered) covariate $z$ in $\bx$ and $\beta \in T_{[p]}\cY^*_{/G}$ is simply implemented by a single function $b_\effid^{(1)}(\bx) = z$. 
%If $z$ is the only covariate in the model, this results in a geodesic regression model: the predictions $[\mu] = c(z) = \Exp_{[p]}(\beta z)$ lie on a geodesic line $c(z)$ moving from the 'intercept' $c(0) = [p]$ into a constant direction determined by the 'slope' $\dot{c}(0)=\beta$. 
A smooth effect of the generic form $h_\effid(\bx)(t) = f(z, t)$ can be implemented by choosing, e.g., a B-spline basis \new{(Asymptotic properties of penalized B-splines and connections to kernel estimators are discussed, e.g., by \citet{WoodPyaSaefken2016smoothing, LiRuppert2008asymptotics})}.  
% However, as in \cite{CorneaEtAl2017RegRiemannianSymSpaces}, the model has the advantage of building  on well-known flexible model structures, while still accounting for the intrinsic metric of the shape space.
For a categorical covariate effect of the form $h_\effid(\bx): \{1,\dots, \catn\} \rightarrow T_{[p]}\cY^*_{/G}$, $\catidx \mapsto \beta_\catidx$, the basis $\bb_\effid(\bx): \catidx \mapsto \be_{\catidx} \in \R^{\catn-1}$ maps category $\catidx$ to a usual contrast vector $\be_{\catidx}$ just as in standard linear models. Here, we typically use effect-encoding to obtain centered effects.
Moreover, TP interactions of the model terms described above, as well as group-specific effects and smooth effects with additional constraints \citep{Hofner2016ConstrainedRegression} can be specified in the model formula, relying on the \texttt{mboost} framework introduced by \citet{HothornBuehlmann2010}, which also allows to define custom effect designs. 
For identification of an overall mean intercept $[p]$, sum-to-zero constraints yielding $\sum_{i=1}^{n} h_j(\bx_i) = \zero$ for observed covariates $\bx_i$\old{, $i=1, \dots, n$,} can be specified, and similar constraints can be used to distinguish linear from non-linear effects and interactions from their marginal effects \citep{KneibHothornTutz2009}.
% Penalty terms 
Different quadratic penalties can be specified for the coefficients $\bTheta_j$, allowing to regularize high-dimensional effect bases and to balance effects of different complexity in the model fit (\old{see also }\new{cf.} Section \ref{chap_boosting}). 

\subsection{Tensor-product factorization}
\label{chap_model::TPfactorization}
The multidimensional structure of the response objects makes it challenging to graphically illustrate and interpret additive model terms, in particular when it comes to non-linear (interaction) effects, or when effect sizes are visually small.
To solve this problem, we suggest to re-write \new{estimated} \old{the} TP effects $\new{\hat{h}}_j$ \new{with estimated} \old{for a given  fixed} coefficient matrix $\new{\widehat{\bTheta}}_j$ %based on singular value decomposition (SVD) 
as
\newcommand{\h}{\new{\hat{h}}}
$$
     \h_j(\bx) = \sum_{\baseid=1}^{\Baseid_j'} \xi^{(\baseid)}_\effid \h_\effid^{(\baseid)}(\bx) 
$$
factorized into $\Baseid_j' = \min(\Baseid_j, \Baseid_0)$ components consisting of covariate effects $\h_\effid^{(\baseid)}: \cX \rightarrow \R,\ \baseid=1,\dots,\Baseid_j',$ in corresponding orthonormal directions $\xi^{(\baseid)}_\effid \in T_{[p]}\cY^*_{/G}$ with $\langle\xi_\effid^{(\baseid)},  \xi_\effid^{(l)} \rangle = \mathds{1}({\baseid = l})$, i.e.\ $1$ if $r=l$ and 0 otherwise. 
Assuming $\E\left(b_j^{(l)}(\bX)^2\right)<\infty$, $l=1,\dots,m_j$, for the underlying effect basis, the  $\h_\effid^{(\baseid)}$ are specified to achieve decreasing component variances $v_j^{(1)} \geq \dots \geq v_j^{(\Baseid_\effid')}\geq 0$ 
%which , assuming centered effects, are 
given by $v_j^{(\baseid)} = \E\left(\h_\effid^{(\baseid)}(\bX)^2\right)$. In practice, the expectation over the covariates $\bX$ and the inner product $\langle ., .\rangle$ are replaced by empirical analogs (compare Supplement Corollary \ref{corollary:TPfactorization_empirical}). Due to orthonormality of the $\xi_j^{(r)}$, the component variances add up to the total predictor variance $\sum_{\baseid=1}^{\Baseid_j'} v_j^{(\baseid)}= v_j = \E\left(\langle \h_j(\bX), \h_j(\bX)\rangle\right)$.
Moreover, the TP factorization is optimally concentrated in the first components in the sense that for any $l \leq \Baseid_j'$ there is no sequence of $\xi^{(\baseid)}_* \in \cY$ and $\h_*^{(\baseid)}:\cX\rightarrow\R$, such that $\E\left(\|\h_j(\bX) - \sum_{\baseid=1}^l \xi^{(\baseid)}_* \h_*^{(\baseid)}(\bX) \|^2\right) < \E\left(\|h_j(\bX) - \sum_{\baseid=1}^l \xi_\effid^{(\baseid)} \h^{(\baseid)}_\effid(\bX) \|^2\right)$, i.e.\ the series of the first $l$ components yields the best rank $l$ approximation of $\h_j$. 
The factorization relies on SVD of (a transformed version of) the coefficient matrix $\new{\widehat{\bTheta}}_j$ and the fact that it is well-defined is a variant of the  Eckart-Young-Mirsky %low-rank approximation 
theorem (proof in Supplement \ref{EYM}). \newline %with the inner product of the tensor-product space induced by the inner products $(b_0, b_0') \mapsto \re{\langle b_0, b_0' \rangle}$ for $b_0, b_0' \in \text{span}(\bb_0)$ and $(b_j, b_j') \mapsto \E_\bX\left(b_j(\bX) b_j'(\bX)\right)$ for $b_j, b_j'\in \text{span}(\bb_j)$. 
%Re-writing $h_j(\bx) = \bb_0^\top \bTheta_j \bb_\effid(\bx)$, the factorization could directly be achieved by SVD of the coefficient matrix $\bTheta_j$, if the bases $\bb_0$ and $\bb_\effid$ were already both orthonormal, and is in general achieved by SVD of a transformed version of $\bTheta_j$. %Here, a strategy involving QR-decomposition is usually numerically beneficial over basis orthonormalization. \note{CHECK: does the QR not also produce orthonormalization but without inverting the Gram Cholesky?} 
%For more details see the supplementary material. 
%
Particularly when large shares of the predictor variance are explained by the first component(s), the decomposition facilitates graphical illustration and interpretation: choosing a suitable constant $\tau\neq0$, an effect direction $\xi^{(\baseid)}_\effid$ can be visualized by plotting the pole representative $p$ together with $\Exp_{p}(\tau\,\xi^{(\baseid)}_\effid)$ on the level of curves, while %(or the entire geodesic between the points). 
accordingly re-scaled  $\frac{1}{\tau}\h_\effid^{(\baseid)}(\bx)$ is displayed separately in a standard scalar effect plot. 
Adjusting $\tau$ offers an important degree of freedom for visualizing $\xi^{(\baseid)}_\effid$ on an intuitively accessible scale while faithfully depicting $\xi^{(\baseid)}_\effid\h_\effid^{(\baseid)}(\bx)$. 
When based on the same $\tau$, different covariate effects can be compared across the plots sharing the same scale. We suggest $\tau = \max_{j} \sqrt{v_j}$, the maximum total predictor standard deviation of an effect, as a good first choice. %reflecting typical variations observed in the predictor in the 'resolution' of the strongest effect. 

Besides factorizing effects separately, it can also be helpful to apply TP factorization to the joint additive predictor, yielding
$$
h(\bx) = \sum_{\baseid=1}^{\Baseid'} \xi^{(\baseid)} \h^{(\baseid)}(\bx) = \sum_{\baseid=1}^{\Baseid'} \xi^{(\baseid)} \left( \h_1^{(\baseid)}(\bx) + \dots + \h_J^{(\baseid)}(\bx)\right), \quad \Baseid' = \min(\sum_j m_j, m_\tpid),
$$
with again $\xi^{(\baseid)} \in T_{[p]}\cY^*_{/G}$ orthonormal and the corresponding variance concentration in the first components, but now determined w.r.t.\ entire additive predictors $\h^{(\baseid)} =  \sum_{\effid=1}^\Effid \h_\effid^{(\baseid)}$ %\in \bigoplus_{\effid=1}^{\Effid}\text{span}(\bb_\effid)$ 
spanned by all covariate basis functions in the predictor. 
%This is analogously possible in the usual case where all additive effects $h_j$ are constructed with the same tangent-space basis $\bb_0$, since then $h(\bx)$ can be written in the same form as described above for $h_j(\bx)$, by stacking all covariate effect bases $\bb_j$. This yields a reformulation of the model equation as 
%with $\Baseid' = \min(\sum_j m_j, m_\tpid)$ and $\xi^{(\baseid)} \in \text{span}(\bb_0) \subset T_{[p]}\cY^*_{/G}$ orthonormal as above and 
In this representation, the first component yields a geodesic additive model approximation where the predictor moves along a geodesic line $c(\tau) = \Exp_{[p]}\left(\xi^{(1)} \tau\right)$ with the signed distance $\tau \in \R$ from $[p]$, modeled by a scalar additive predictor $\h^{(1)}(\bx)$ composed of covariate effects analogous to the original model predictor. %While not necessarily optimal with respect to the intrinsic distance \note{(CHECK!)}, t
%This geodesic approximation is optimal in $T_{[p]}\cY^*_{/G}$ which should, in practice, suffice to obtain a good idea of the main direction of the model when the variance of the first component $v^{(1)} = \E_\bX\left(\|\h^{(1)}(\bX)\|^2\right)$ is close to the total predictor variance $v = \E_\bX\left(\|h(\bX)\|^2\right) = \E_\bX\left(d([p], [\mu])^2\right)$.
%Tensor-product factorization yields meaningful coordinates for visualization of model effects.    
In Section \ref{chap_application}, we illustrate its potential in three different scenarios.

%Hence, besides its implementation for additive models for shapes and form manifolds in the R package \texttt{manifoldboost}, we also extended the functional regression package \texttt{FDboost} by a corresponding method.

\section{Component-wise Riemannian $L_2$-Boosting}
\label{chap_boosting}

%As mentioned in Section \ref{chap_model}, the pole $[p]$ of the Riemannian additive model described above is typically chosen as the overall mean of $[Y]$. Accordingly, not only the effect functions $h_j$ but also $[p]$ are to be estimated from the data. Section \ref{sec_boosting} introduces component-wise Riemannian $L_2$-Boosting for a given, fixed pole. Prior to the main model fit, the same fitting procedure can be employed to estimate $[p]$, as described in Section \ref{sec_pole}. 

%\subsection{Component-wise Riemannian $L_2$-Boosting algorithm}
%\label{sec_boosting}

Component-wise gradient boosting \citep[e.g.][]{HothornBuehlmann2010} is a step-wise model fitting procedure accumulating predictors from smaller models, so called base-learners, to built an ensemble predictor aiming at minimizing a mean loss function. To this end, the base-learners are fit (via least squares) to the negative gradient of the loss function in each step and the best fitting base-learner is added to the current ensemble predictor. 
Due to its versatile applicability, inherent model selection, and slow over-fitting behavior, boosting has proven useful in various contexts \citep{mayr2014evolution}.
%Accordingly, gradient boosting is understood as gradient descent in the function space of predictors \citep{Buhlmann2007boosting}. 
%Especially when combined with re-sampling based early stopping of the algorithm, the component-wise stage-wise strategy offers inherent variable selection and shows a regularizing effect avoiding over-fitting. The latter was found particularly beneficial in high-dimensional functional response regression settings with high auto-correlation \citep{Stoecker2019FResponseLSS}. 
Boosting with respect to the least squares loss function $\ell(y,\mu) = \frac{1}{2}(y-\mu)^2$, $y,\mu\in\R$, is typically referred to as $L_2$-Boosting and simplifies to repeated re-fitting of residuals $\varepsilon = y-\mu = -\nabla_\mu \ell(y, \mu)$ corresponding to the negative gradient of the loss function. For $L_2$-Boosting with a single learner, \cite{Buehlmann2003L2boosting} show how fast bias decay and slow variance increase over the boosting iterations suggest stopping the algorithm early before approaching the ordinary (penalized) least squares estimator. %Despite practical advantages, the base-learner selection in component-wise $L_2$-Boosting complicates theoretical assessment. Nevertheless, 
\cite{LutzBuehlmann2006BoostingHighMultivariate} prove consistency of component-wise $L^2$-Boosting in a high-dimensional multivariate response linear regression setting and \cite{Stoecker2019FResponseLSS} illustrate in extensive simulation studies how stopping the boosting algorithm early based on curve-wise cross-validation applies desired regularization when fitting (even highly autocorrelated) functional responses with parameter-intense additive model base-learners and, thus, leads to good estimates even in challenging scenarios.\\
When generalizing to least squares on Riemannian manifolds with the loss $\frac{1}{2}d^2([y],[\mu])$ given by the squared geodesic distance, the negative gradient $-\nabla_{[\mu]}\frac{1}{2}d^2([y], [\mu])=\Log_{[\mu]}([y]) = \varepsilon_{[\mu]}$ \citep[compare e.g.\ ][]{Pennec2006BasicTools} corresponds to the local residuals $\varepsilon_{[\mu]}$ defined in Section \ref{chap_model}. This analogy to $L_2$-Boosting motivates the presented generalization where local residuals are further transported to residuals $\epsilon$ in a common linear space. %we generalize this to iterative re-fitting of transported residuals. %employ, motivated by this property, a component-wise boosting perspective on the residual re-fitting approach introduced by \citet{CorneaEtAl2017RegRiemannianSymSpaces} for symmetric Riemannian manifolds. We call the proposed algorithm Riemannian $L_2$ boosting despite the fact that, in general, fitting the residuals does not perfectly correspond to fitting negative (covariant) gradients in the Riemannian setting (see Appendix \note{TO DO!, check also flat response space with weighted inner prod}).\\

Consider the pole $[p]$ known and fixed for now. Assuming its existence, we aim to minimize the population mean loss
$$
 \sigma^2(h) = \E\left( d^2\left([Y], \Exp_ {[p]}\left( h(\bX) \right)\right) \right)
$$
with the point-wise minimizer 
$h^\star(\bx) = \argmin[h:\cX \rightarrow T_{[p]}\cY_{/G}^*] \E\left( d^2\left([Y], \Exp_ {[p]}\left( h(\bX) \right)\right) \mid \bX = \bx \right)$
minimizing the conditional expected squared distance. Fixing a covariate constellation $\bx \in \cX$, the prediction $[\mu] = \Exp_ {[p]}\left( h^\star(\bx) \right)$ corresponds to the Fréchet mean \citep{Karcher1977KarcherMean} of $[Y]$ conditional on $\bX = \bx$. 
In a finite-dimensional context, \citet{Pennec2006BasicTools} show that $\E\left(\varepsilon_{[\mu]}\right)=\zero$ for a Fréchet mean $[\mu]$ if residuals $\varepsilon_{[\mu]}$ are uniquely defined with probability one. This indicates the connection to our residual based model formulation in Section \ref{chap_model}.
We fit the model by reducing the empirical mean loss
$
\hat\sigma^2(h) = \frac{1}{n} \sum_{i=1}^n d_i^2\left([y_i], \Exp_ {[p]}\left(h(\bx_i)\right)\right),
$
where we replace the population mean by the sample mean and compute the geodesic distances $d_i$ with respect to the inner products $\langle\cdot, \cdot\rangle_i$ defined for the respective evaluations of $y_i$. 
%In each boosting step, the current transported residuals $\epsilon_1,\dots,\epsilon_n$ are first computed via transporting local residuals to the predictor tangent space at $[p]$. Then the residuals are fit by base-learners.

A base-learner corresponds to a covariate effect $h_\effid(\bx) = \sum_{\baseid, l} \theta_j^{(r,l)}\, b_j^{(l)}(\bx)\, \partial_r$, $\bTheta_j = \{\theta_j^{(r,l)}\}_{r,l}$, which is repeatedly fit to the transported residuals $\epsilon_1, \dots, \epsilon_n$ by penalized least-squares (PLS) minimizing $\sum_{i=1}^n \|\epsilon_i - h_\effid(\bx_i)\|_i^2 + \lambda_j \tr{\left(\bTheta_j\bP_j\bTheta_j^\top\right)} + \lambda_\tpid \tr{\left(\bTheta^\top\bP_\tpid\bTheta\right)}$. 
Via the penalty parameters $\lambda_\effid, \lambda_\tpid \geq 0$ the effective degrees of freedom of the base-learners are controlled \citep{HofnerSchmid2011} to achieve a balanced ``fair'' base-learner selection despite the typically large and varying number of coefficients involved in the TP effects. 
The symmetric penalty matrices $\bP_j\in\R^{\Baseid_j\times\Baseid_j}$ and $\bP_\tpid\in\R^{\Baseid_\tpid\times\Baseid_\tpid}$ (imposing, e.g., a second-order difference penalty for B-splines in either direction) can equivalently be arranged as a $\Baseid_j\Baseid_\tpid\times\Baseid_j\Baseid_\tpid$ penalty matrix $\bR_\effid=\lambda_\effid (\bP_\effid \otimes \bI_{\Baseid_\tpid}) + \lambda_\tpid (\bI_{\Baseid_j} \otimes \bP_\tpid)$ for the vectorized coefficients $\operatorname{vec}{(\bTheta_j)}=(\theta^{(1,1)}_j, \dots, \theta^{(\Baseid_\tpid,1)}_j, \dots, \theta^{(\Baseid_\tpid,\Baseid_j)})^\top$, where $\otimes$ denotes the Kronecker product. 
The standard PLS estimator is then given by $\operatorname{vec}{(\widehat\bTheta_j)} = \left(\bPsi_\effid + \bR_\effid\right)^{-1} \bpsi_\effid$ with
$\bPsi_\effid = \sum_{i=1}^n  \left\{ \re{\langle b^{(l)}_\effid(\bx_i) \partial_r, b^{(l')}_\effid(\bx_i) \partial_{r'}\rangle_i} \right\}_{\substack{(r,l)=(1,1), \dots, (m_\tpid,1), \dots, (m_\tpid,m_j)\\ (r',l')=(1,1), \dots, (m_\tpid,1), \dots, (m_\tpid,m_j) }} \in \R^{m_\tpid\,m_j \times m_\tpid\,m_j}$ and $\bpsi_\effid = \sum_{i=1}^n  \left\{ \re{\langle b^{(l)}_\effid(\bx_i) \partial_r, \epsilon_i\rangle_i} \right\}_{(r,l)=(1,1), \dots, (m_\tpid, 1), \dots, (m_\tpid,m_j)} \in \R^{m_\tpid\,m_j}$.
In a regular design, using the functional linear array model \citep{BrockhausGreven2015} can  save memory and computation time by avoiding construction of the complete matrices.
%By choosing appropriate marginal penalty matrices $\bP_j$ and $\bP_\tpid$, for instance imposing a second-order difference penalty for B-splines, prior assumptions, such as smoothness, can be incorporated. %the regularization can be guided into an adequate direction. 
The basis construction of $\{\partial_r\}_r$ via a transformation matrix $\bZ_p$ (Section \ref{sec_tensorproduct_effects}) is reflected in the penalty by setting $\bP_\tpid = \bZ_p^\top (\bI_2 \otimes \bP_\nb) \bZ_p$ with $\bP_\nb$ the penalty matrix for the un-transformed basis $\{b_\nb^{(r)}\}_r$. 

In each iteration of the proposed Algorithm \ref{riemmanianL2boosting}, the best-performing base-learner is added to the current ensemble additive predictor $h(\bx)$ after multiplying it with a step-length parameter $\eta\in(0,1]$. Due to the additive model structure this corresponds to a coefficient update of the selected covariate effect. Accordingly, after repeated selection, the effective degrees of freedom of a covariate effect, in general, exceed the degrees specified for the base-learner. They are successively adjusted to the data.
To avoid over-fitting, the algorithm is typically stopped early before reaching a minimum of the empirical mean loss. The stopping iteration is determined, e.g., by re-sampling strategies such as bootstrapping or cross-validation on the level of shapes/forms. 

\newcommand{\steplen}{\eta}

\IncMargin{1em}
\begin{algorithm}
	\SetKwFunction{matrix}{matrix}	
	\SetKwInOut{Input}{Hyper-parameters}
	\SetKwInOut{Output}{Geometry}
	\SetKwInOut{Initialize}{Base-learners}
	\SetKwFor{For}{for}{iterations do}{end}
	\SetKwFor{ForEach}{for}{do}{end}
	\SetKwComment{rc}{\#\ }{}
	\SetKwFunction{Solve}{Solve}
	\DontPrintSemicolon
	%{\centering--\textsc{Initialization}--\\}
	\rc*[h]{Initialization:}\\
	\Output{specify geometry (shape/form) and pole representative $p$}
	\Input{Step-length $\steplen \in (0,1]$, number of boosting iterations}
	\Initialize{ $h_j(\bx)$ with penalty matrix $\bR_j$ and\\ initial coefficient matrix $\bTheta_j=\boldsymbol{0}$  }
	\ForEach(\hfill\rc*[h]{Prepare penalized least-squares (PLS)}){$j = 1$ \KwTo $J$}{
		\rc*[h]{set up $m_\tpid\,m_j \times m_\tpid\,m_j$ matrix:}
		$\bPsi_\effid \leftarrow \sum_{i=1}^n  \left\{ \re{\langle b^{(l)}_\effid(\bx_i) \partial_r, b^{(l')}_\effid(\bx_i) \partial_{r'}\rangle_i} \right\}_{\substack{(r,l)=(1,1), \dots, (m_\tpid,1), \dots, (m_\tpid,m_j)\\ (r',l')=(1,1), \dots, (m_\tpid,1), \dots, (m_\tpid,m_j) }}$\; 
		%\Solve $\left(\bPsi_\effid + \bR_\effid\right)^{-1}$\;
	}
	\Repeat(\qquad\rc*[h]{boosting steps}){Stopping criterion (e.g. minimal cross-validation error)}{
		\ForEach(\hfill\rc*[h]{Compute current transported residuals}){$i = 1, \dots, n$}{	
				$[\mu_i] \leftarrow \Exp_{[p]}(h(\bx_i))$\;
				$\varepsilon_{[\mu_i]} \leftarrow \Log_{[\mu_i]}([y_i])$\;
				$\epsilon_i \leftarrow \Transp_{[\mu_i], [p]}(\varepsilon_{[\mu_i]})$\;
		}
		\ForEach(\hfill\rc*[h]{PLS fit to residuals}){$j = 1, \dots, J$}{
			\rc*[h]{$m_\tpid\,m_j$ vector:}
			$\bpsi_\effid \leftarrow \sum_{i=1}^n  \left\{ \re{\langle b^{(l)}_\effid(\bx_i) \partial_r, \epsilon_i\rangle_i} \right\}_{(r,l)=(1,1), \dots, (m_\tpid,1), \dots, (m_\tpid,m_j)}$\;
			$\widehat{\bTheta}_\effid = \{\hat\theta_j^{(r,l)}\}_{r,l} \leftarrow$ \Solve{\ $\left(\bPsi_\effid + \bR_\effid\right) \operatorname{vec}(\bTheta) = \bpsi_\effid$\ }\;
		}
		$\hat\jmath \leftarrow \argmin[\effid\in\{1,\dots,J\}] \sum_{i=1}^n \| \epsilon_i -  \sum_{r,l} \hat{\theta}^{(r,l)}_\effid b_j^{(l)}(\bx) \partial_r \|_i^2$; \hfill\rc*[h]{Select base-learner}\\
		$\bTheta_{\hat\jmath} \leftarrow \bTheta_{\hat\jmath} + \steplen\, \widehat{\bTheta}_{\hat\jmath}$; \hfill\rc*[h]{Update selected model coefficients}\\
	}
	\caption{Component-wise Riemannian $L^2$-Boosting}\label{riemmanianL2boosting}
\end{algorithm}%\DecMargin{1em}

The pole $[p]$ is, in fact, usually not a priori available. Instead we typically assume $[p]=\argmin[q\in\cY^*]\E\left(d^2([Y], [q])\right)$ is the overall Fréchet mean, also often referred to as \textit{Riemannian center of mass} for Riemannian manifolds or as \textit{Procrustes mean} in shape analysis \citep{DrydenMardia2016ShapeAnalysisWithApplications}. Here, we estimate it as $[p] = \Exp_{[p_0]}(h_0)$ in a preceding Riemannian $L^2$-Boosting routine. The constant effect $h_0\in T_{[p_0]}\cY^*_{/G}$ in the intercept-only special case of our model is estimated with Algorithm \ref{riemmanianL2boosting} based on a preliminary pole $[p_0]\in \cY^*_{/G}$.
%where $p$ is modeled in an $m_\tpid$-dimensional sub-space of $\cY$. 
%The right-hand side of the equation directly suggests how we can estimate $[p]$ with the Riemannian $L_2$Boosting algorithm via an initial run for an 
For shapes and forms, a good candidate for $p_0$ can be obtained as the standard functional mean of a reasonably well aligned sample $y_1, \dots, y_n \in \cY$ of representatives. %Another option is to specify $p_0 =\sum_{r=1}^{m_\nb} \check{y}_i^{(r)} b_\nb^{(r)} \approx y_i$ via basis expansion with coefficients $\check{y}_i^{(r)} \in \C$ of an example curve $y_i$ in the data.
%While in general the minimizer is not necessarily unique, in practical scenarios, with the model space dimension $m<\infty$ relatively small compared to the total number of different response evaluations, uniqueness can eventually be expected \citep{Kendall1999}. %(see \note{REFERENCE} for a discussion of criteria of uniqueness in the regular shape case). 
%Otherwise suitable regularization is needed, e.g., by adding a penalty term $P([p])$ to the loss function.

%A sample mean $[\hat p]$ of $[y_1], \dots, [y_n]$ modeled in a $m$-dimensional sub-space $\hat\mu \in \cM \subset \cY$ can be defined as minimizer of the quadratic loss function
%$$
%\ell([p]) = \sum_{i=1}^{n} d_i([p], [y_i])^2
%$$
%where $d_i$ is the geodesic distance based on the respective $\langle\cdot, \cdot\rangle_i$.

The proposed Riemannian $L_2$-Boosting algorithm is available in the \texttt{R} \citep{R} package \texttt{manifoldboost}\if0\blind{ (\url{github.com/Almond-S/manifoldboost})}\fi. The implementation is based on the package \texttt{FDboost} \citep{FDboost}, which is in turn based on the model-based boosting package \texttt{mboost} \citep{HothornBuehlmann2010}. %Quotient geometries induced by the invariances discussed in this paper are offered ready-to-use. The package is set up towards allowing extension to other Riemannian geometries in future. % Other geometries can be conveniently defined as \texttt{R6} classes.

%\note{\subsection{Remarks on uniqueness and consistency}}

\section{Applications and Simulation}
\label{chap_application}
\subsection{Shape differences in astragali of wild and domesticated sheep}
\label{sec_sheep}

\newcommand{\coefi}[1]{\bbeta_{\mathrm{#1}_i}}
\newcommand{\coefix}[1]{\bbeta_{{#1}_i}}
\newcommand{\hphi}{\psi}
\newcommand{\bxi}{\boldsymbol{\xi}}

\new{In a geometric morphometric study, }\citet{Poellath2019sheepbones} \old{in a paleoanatomical study }investigate shapes of sheep astragali (ankle bones) \new{to understand the influence of different living conditions on the micromorphology of the skeleton}\old{to understand morphological effects of domestication}. Based on a total of $n=163$ shapes recorded by \citet{Poellath2019sheepbones}%in context of their publication 
%and thankfully provided to us
, we model the astragalus shape in dependence on different \new{variables}\old{characteristics}, including domestication status (wild/feral/domesticated), sex (female/male/NA), age (juvenile/subadult/adult/NA), and mobility (confined/pastured/free) of the animals as categorical covariates. The sample comprises sheep of four different \new{populations}\old{breeds}: 
\new{A}\old{a}siatic wild sheep \citep[Field Museum, Chicago;][]{lay1967wildsheep, zeder2006wildsheep}, feral Soay sheep %from the islands of the St. Kilda archipelago in northwest Europe 
\citep[British Natural History Museum, London;][]{Clutton-Brock1990soay}, and domestic \new{sheep of the Karakul and Marsch breed}\old{Karakul sheep 
%descending from Anatolian wild sheep 
as well as Marsch sheep} %formerly widespread, i.a., in northern Germany 
\citep[Museum of Livestock Sciences, Halle (Saale);][]{Schafberg2010KarakulMarsch}. Table \ref{tab:data-summary} in Supplement \ref{sec:bones_appendix} shows the distribution of available covariates within the \new{population}\old{breed}s. %The main research question is to detect and investigate morphological differences between archaeological wild sheep and modern domesticated (or feral) sheep expressed in systematic differences in the astragalus shape. 
Each sheep astragalus shape, $i = 1,\dots, n$, is represented by a configuration composed of 11 selected landmarks in a vector $\by_{i}^{\text{lm}}\in\C^{11}$ and two vectors of \new{sliding}  semi-landmarks $\by_{i}^{\text{c1}}\in\C^{14}$ and $\by_{i}^{\text{c2}}\in\C^{18}$ evaluated along two outline curve segments, marked on a 2D image of the bone (dorsal \new{view}\old{perspective}). 
Several example configurations are displayed in Supplement Figure \ref{fig:sheepdataexamples}.
In general, we could separately specify %a tangent space basis  $\bb_0^{\text{lm}}$ for the $\by_i^{\text{lm}}$ based on the canonical basis of $\C^{11}$, and 
smooth function bases for the outline segments $y_i^{\text{c1}}$ and $y_i^{\text{c2}}$, respectively. 
Due to their systematic recording, we assume, however, that not only landmarks but also semi-landmarks are regularly observed on a fixed grid, and refrain from using smooth function bases for simplicity. 
Accordingly, shape configurations can directly be identified with their evaluation vectors $\by_i = \vec{\by_i^{\text{lm}\top}, \by_i^{\text{c1}\top}, \by_i^{\text{c2}\top}} \in \C^{43} = \cY$, and the geometry of the response space $\cY^*_{/\trl \dir \rot \dir \scl}$ widely corresponds to the classic Kendall's shape space geometry, with the difference that, considering landmarks more descriptive than single semi-landmarks, we choose a weighted inner product $\langle \by_i, \by_i'\rangle = \by_i^\dagger \bW \by_i'$ with diagonal weight matrix $\bW$ with diagonal $\vec{\boldsymbol{1}_{11}^\top, \frac{3}{14} \boldsymbol{1}_{14}^\top, \frac{3}{18} \boldsymbol{1}_{18}^\top}$ assigning the weight of three landmarks to each outline segment. 
We model the astragalus shapes $[\by_i]\in \cY^*_{/\trl \dir \rot \dir \scl}$ as 
\begin{align*}
	[\bmu_i] &= \Exp_{[\bp]}\left(\coefi{status} + \coefi{\new{pop}\old{breed}} + \coefi{age} + \coefi{sex} + \coefi{mobility}\right)
\end{align*}
with the pole $[\bp]\in \cY^*_{/G}$ specified as overall mean and the conditional mean $[\bmu_i]\in \cY^*_{/\trl \dir \rot \dir \scl}$ depending on the effect coded covariate effects $x_{ij} \mapsto \bbeta_{x_{ij}}\in T_{[\bp]}\cY^*_{/\trl \dir \rot \dir \scl}$. % of the respective covariate $x_i \in \{\mathrm{status}_i, \mathrm{breed}_i,\mathrm{age}_i,\mathrm{sex}_i,\mathrm{mobility}_i\}$. %The underlying basis $\bb_0$ of $T_{[\bp]}\cY^*_{/\trl + \rot + \scl}$ used for modelling is derived from the canonical basis of $\C^{43}$, i.e. from 'dummy coding' the different dimensions.
For identifiability, the \new{population}\old{breed} and mobility effects are centered around the status effect, as we only have data on different \new{population}\old{breed}s/mobility levels for domesticated sheep. %Note that accordingly the effects of breed and mobility can only be interpreted within the domesticated sheep and we cannot assess which part of the estimated effect of status 'wild' and 'feral' is in fact due to breed and mobility. 
All base-learners are regularized to one degree of freedom by employing ridge penalties for the coefficients of the covariate bases $\{b_j^{(l)}\}_l$ %$\bb_j$, 
while the coefficients of the response basis (the standard basis for $\C^{43})$ %$\bb_0$ 
are left un-penalized. 
With a step-length of $\steplen = 0.1$, 10-fold shape-wise cross-validation suggests early stopping after $89$ boosting iterations.
Due to the regular design, we can make use of the functional linear array model \citep{BrockhausGreven2015} for saving computation time and memory, which lead to 8 seconds of initial model fit followed by 47 seconds of cross-validation. %\note{without linear array model?}
To interpret the categorical covariate effects, we rely on TP factorization (Figure \ref{fig:bone_effects}). 
%We write a categorical covariate effect with $\catn$ levels (and $\catn - 1$ degrees of freedom due to centering) as $\coefix{x} = \sum_{\catidx=1}^{\catn-1} \bxi^{(\catidx)} \coefix{x}^{(\catidx)}$ with scalar effects $x_i \mapsto \coefix{x}^{(\catidx)} \in \R$ into orthonormal directions $\bxi^{(1)}, \dots, \bxi^{(\catn-1)}\in T_{[p]}\cY^*_{/\trl + \rot + \scl}$ explaining a decreasing amount of variance $v_\catidx = \frac{1}{n}\sum_{i=1}^n (\coefix{x}^{(\catidx)})^2, \catidx = 1, \dots, \catn-1$. Figure \ref{fig:bone_effects} displays the variances $v_\catidx$ reflecting the importance of each component and the effect of the domestication status, the most important covariate. 
The first component of the status effect explains about 2/3 of the variance of the status effect and over 50\% of the cumulative effect variance in the model. In that main direction,  the effect of \textit{feral} is not located between \textit{wild} and \textit{domestic}, as might be naively expected. By contrast, the second component of the effect seems to reflect the expected order and still explains a considerable amount of variance.
Similar to \cite{Poellath2019sheepbones}, we find  little influence of age, sex and mobility on the astragalus shape. Yet, all covariates were selected by the boosting algorithm. %\note{Estimated effects are displayed in the Online Supplement.}

\begin{figure}[!h]
	\centering 
	\includegraphics[width=1.5in]{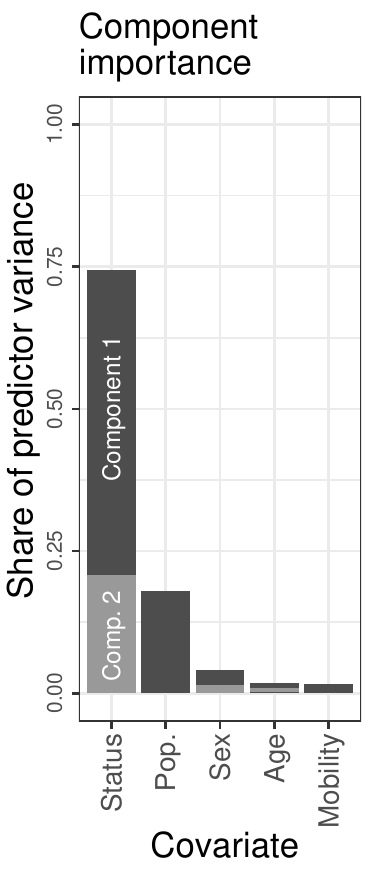}
	\includegraphics[width=3.5in]{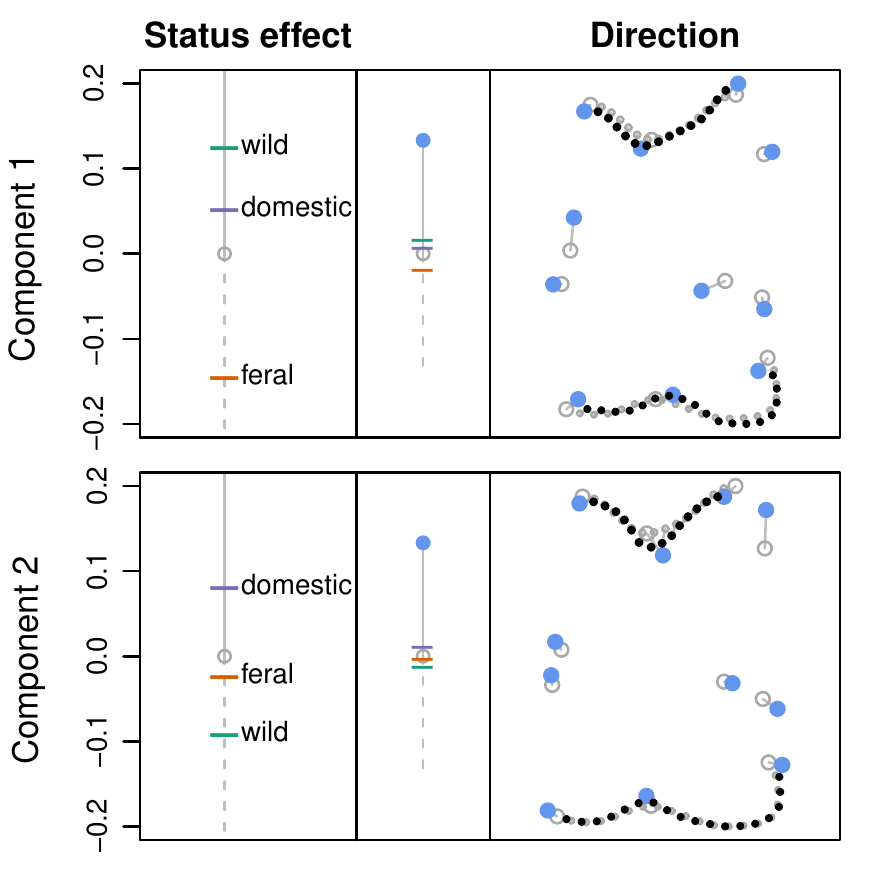}
	\caption{
		\textit{Left}: Shares of different factorized covariate effects in the total predictor variance. %While the variances of single orthogonal components add up to the full variance of the effect, effect variances do not exactly add up to the total predictor variance due to correlations in the covariates.
		\textit{Right}: Factorized effect plots showing the two components of the status effect (\textit{rows}): in the \textit{right column}, the two first directions $\bxi^{(1)}_1, \bxi^{(2)}_1 \in T_{[\bp]}\cY^*_{/\trl + \rot + \scl}$ are visualized via line-segments originating at the overall mean shape (\textit{empty circles}) and ending in the shape resulting from moving 1 unit into the target direction (\textit{solid circles}; \textit{large}: landmarks; \textit{small}: semi-landmarks along the outline); in the  \textit{left column}, the status effect in the respective direction is depicted. As illustrated in the \textit{middle} plot, an effect of 1 would correspond to the full extend of the direction shown to the right.}\label{fig:bone_effects}
\end{figure}

Visually, differences in estimated mean shapes are rather small, which is, in our experience, quite usual for shape data. With differences in size, rotation and translation excluded by definition, only comparably small variance remains in the observed shapes. %Nevertheless,
Nonetheless, TP factorization provides accessible visualization of the effect directions and allows to partially order the effect levels in each direction.
%This may help to relate observed effects to underlying biological mechanics in future investigations, ideally incorporating further data points to allow for separating effects more clearly.
%More generally, the additive regression model allows to analyze the shape jointly in dependence on different variables of morphometric interest and potential confounders. 

\subsection{Cellular Potts model parameter effects on cell form}
\label{sec_cells}

The stochastic biophysical model proposed by \cite{Thueroff2019SingleCellCollectiveDynamics}, a cellular Potts model (CPM), simulates migration dynamics of cells (e.g. wound healing or metastasis) in two dimensions.
%Modeled on a hexagonal lattice, individual cell areas evolve by successively adding or deleting single hexagons. 
The progression of simulated cells is the result of many consecutive local elementary events sampled with a Metropolis-algorithm according to a Hamiltonian. %depending on a polarization random field inside the cell, as well as the cell expansion and other components not relevant in detail here. 
Different parameters controlling the Hamiltonian have to be calibrated to match real live cell properties \citep{schaffer2021phd}. Considering whole cells, parameter implications on the cell form are not obvious. %and conducting CPM simulations is comparably costly. 
To provide additional insights, we model the cell form in dependence on four CPM parameters considered particularly relevant: the bulk stiffness $x_{i1}$, membrane stiffness $x_{i2}$, substrate adhesion $x_{i3}$, and signaling radius $x_{i4}$ are subsumed in a vector $\bx_i$ of metric covariates for $i=1,\dots,n$.
Corresponding sampled cell outlines $y_i$ were provided by Sophia Schaffer in the context of \cite{schaffer2021phd}, who ran underlying CPM simulations and extracted outlines. %of typically over 2000 cell edge hexagon locations per cell in a Cartesian coordinate system. 
Deriving the intrinsic orientation of the cells from their movement trajectories, we parameterize $y_i: [0,1] \rightarrow \C$, clockwisely relative to arc-length such that $y_i(0) = y_i(1)$ points into the movement direction of the barycenter of the cell. With an average of $k = \frac{1}{n}\sum_{i=1}^{n}k_i \approx 43$ samples per curve \citep[after sub-sampling preserving $95\%$ of their inherent variation, as described in][Supplement]{Volkmann2020MultiFAMM}, the evaluation vectors $\by_i \in \C^{k_i}$ are equipped with an inner-product implementing trapezoidal rule integration weights. Example cell outlines are depicted in Supplement Figure \ref{fig:celldataexamples}. The results shown below are based on cell samples obtained from 30 different CPM parameter configurations. For each configuration, 33 out of 10.000 Monte-Carlo samples were extracted as approximately independent. This yields a dataset of $n = 990 = 30 \times 33$ cell outlines. 

As positioning of the irregularly sampled cell outlines $y_i$, $i=1,\dots,n$, in the coordinate system is arbitrary, we model the cell forms $[y_i]\in\cY^*_{/\trl + \rot}$. Their estimated overall form mean $[p]$ serves as pole in the additive model
\newcommand{\jalt}{{\ddot{\jmath}}}
\begin{align*}
	[\mu_i] = \Exp_{[p]}\big( h(\bx_i) \big) = \Exp_{[p]}\big( \sum_{j} \beta_j x_{ij} + \sum_{j} f_j(x_{ij}) + \sum_{j\neq \jalt} f_{j\jalt}(x_{ij}, x_{i\jalt}) \big)
\end{align*}
where the conditional form mean $[\mu_i]$ is modeled in dependence on tangent-space linear effects with coefficients $\beta_j \in T_{[p]}\cY_{/\trl \dir \rot}$ and non-linear smooth effects $f_j$ for covariate $j=1,\dots, 4$, as well as smooth interaction effects $f_{j\jalt}$ for each pair of covariates $j \neq \jalt$. All involved (effect) functions are modeled via a cyclic cubic P-spline basis $\{b_{0}^{(r)}\}_r$ with $7$ (inner) knots and a ridge penalty, and quadratic P-splines with $4$ knots for the covariates $x_{ij}$ equipped with a second order difference penalty for the $f_j$ and ridge penalties for interactions. 
Covariate effects are mean centered and interaction effects $f_{j\jalt}(x_j, x_\jalt)$ are centered around their marginal effects $f_{j}(x_j), f_\jalt(x_\jalt)$, which are in turn centered around the linear effects $\beta_j x_j$ and $\beta_\jalt x_\jalt$, respectively. 
Resulting predictor terms involve 69 (linear effect) to 1173 (interaction) basis coefficients but are penalized to a common degree of freedom of 2 to ensure a fair base-learner selection.
We fit the model with a step-size of $\steplen = 0.25$ %(instead of $\steplen = 0.1$ for faster convergence). We 
and stop after 2000 boosting iterations observing no further meaningful risk reduction, since no need for early-stopping is indicated by 10-fold form-wise cross-validation. %This is presumably due to the design with 33 repetitions per covariate configuration preventing over-fitting despite in-curve correlations. 
Due to the increased number of data points and coefficients, the irregular design, and the increased number of iterations, the model fit takes considerably longer than in Section \ref{sec_sheep}, with about 50 initial minutes followed by 8 hours of cross-validation. However, as usual in boosting, model updates are large in the beginning and only marginal in later iterations, such that fits after $1000$ or $500$ iterations would already yield very similar results.   

Observing that the most relevant components point into similar directions, we jointly factorize the \old{additive model }predictor as $\h(\bx_i) = \sum_r \xi^{(r)} \h^{(r)}(\bx_i)$ with TP factorization. The first component explains about 93\% of the total predictor variance (Supplement Fig. \ref{fig:cellmodelfactorizedvarimp}), indicating that, post-hoc, a good share of the model can be reduced to the geodesic model $[\new{\hat{\mu}}_i] = \Exp_{[p]}(\xi^{(1)} \h^{(1)}(\bx_i))$ illustrated in Figure \ref{fig:cells_factorized_effect_plots}. A positive effect in the direction $\xi^{(1)}$ makes
cells larger and more keratocyte / croissant shaped, a negative effect -- pointing into the opposite direction -- makes them smaller and more mesenchymal shaped / elongated.%\note{Separately factorized effect plots and remaining estimated effects and effect variances can be found in the Online Supplement?} %\note{Also add shape version to Supplement?} 
\begin{figure}[t]
%	\centering
	\includegraphics[width=5.5in]{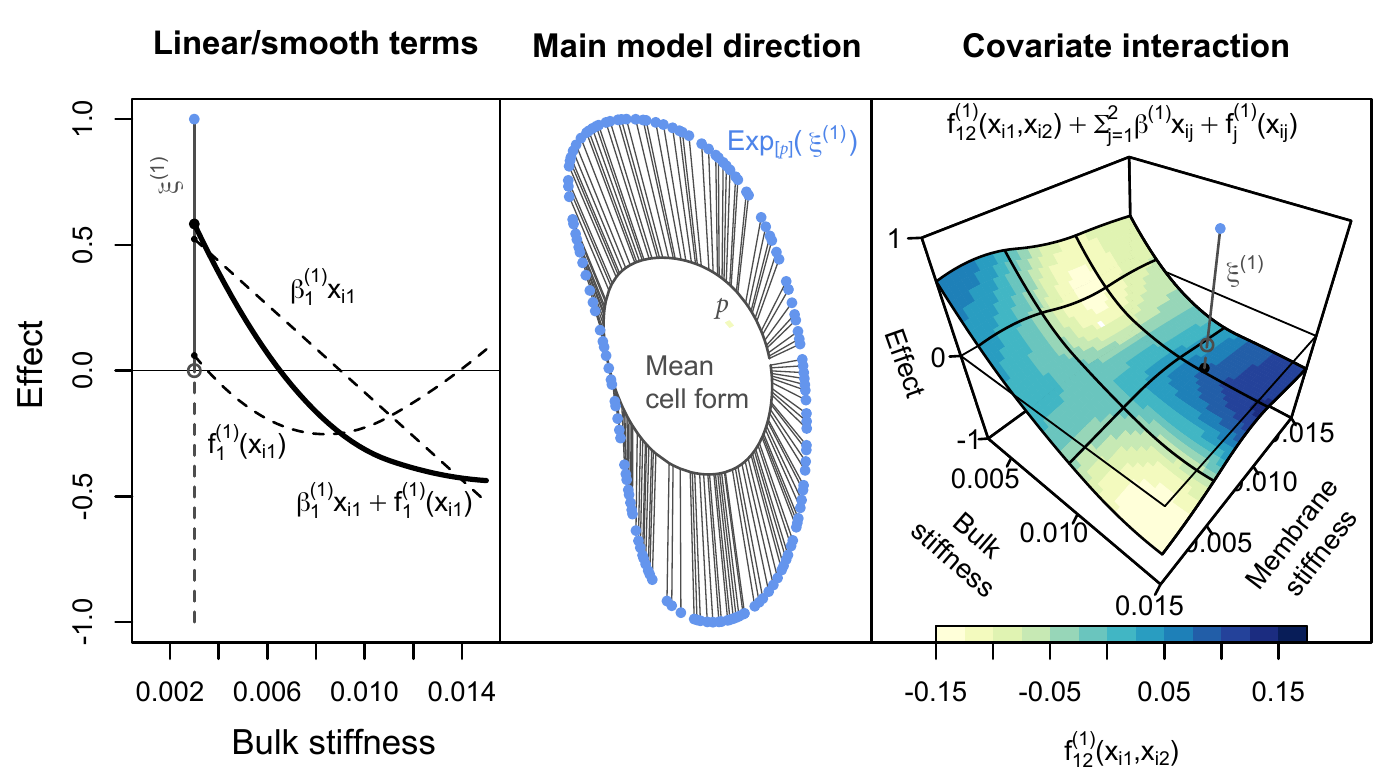}
	\caption{\textit{Center:} the main direction $\xi^{(1)}$ of the model illustrated as vectors pointing from the overall mean cell form $[p]$ (\textit{grey curve}) to the form $\Exp_{[p]}(\xi^{(1)})$ (\textit{blue dots}), which are both oriented as cells migrating rightwards.
	\textit{Left:} Effects of the bulk stiffness $x_{i1}$ into the direction $\xi^{(1)}$. %An effect of 1 corresponds to the whole extend of $\xi^{(1)}$ as depicted to the right in the central plot, a negative effect points in the opposite direction. 
	A vertical line from 0, corresponding to $[p]$, to 1, corresponding to the full extent of $\xi^{(1)}$, underlines the connection between the plots and helps to visually asses the amount of change for a given value of $x_{i1}$.
	\textit{Right:} %The interaction effect of $x_{i1}$ with the membrane stiffness $x_{i2}$ presents the forth most important effect into the main direction (after the linear effect of the substrate adhesion $x_{i3} \in [30,70]$ with slope $\beta^{(1)}_3 \approx 0.009$).
	The overall effect of $x_{i1}$ and membrane stiffness $x_{i2}$, comprising linear, smooth and interaction effects, as a 3D surface plot. The heat map plotted on the surface shows only the interaction effect $f^{(1)}_{12}(x_{i1}, x_{i2})$ illustrating deviations from the marginal effects, which are of particular interest for CPM calibration.
	}\label{fig:cells_factorized_effect_plots}
\end{figure}
%
%Effects into the main direction $\xi^{(1)} \in T_{[p]}\cY^*_{/\trl \dir \rot}$ of the joint factorization depicted in Figure \ref{fig:cells_factorized_effect_plots} show qualitatively the same picture as individual factorizations of the different effects. 
The bulk stiffness $x_{i1}$ turns out to present the most important driving factor behind the cell form, explaining over 75\% of the cumulative variance of the effects (Supplement Fig. \ref{fig:cellcovfactorizedvarimp}). Around 80\% of its effect are explained by the linear term reflecting gradual shrinkage at the side of the cells with increasing bulk stiffness.  %After tensor-product factorization, the non-linear part of the effect is almost entirely described by its first component which, pointing into a very similar direction as for the linear term, can be interpreted as a parabolic deviation from the linear effect. 
%Depending also on $x_{i1}$ and on the membrane stiffness $x_{i2}$, the most important interaction effect $f_{1,2}$ explains 3.4\% of the cumulative effect variance with around 98\% of its variance explained by the first factorization component. All in all this means that, effectively, the entire multi-tensor-product term can be graphically illustrated and interpreted in a single factorized effect plot. 

%\newpage
\subsection{Realistic shape and form simulation studies}
\label{sec_simulation}

To evaluate the proposed approach, we conduct \old{a series of }simulation studies for both form and shape regression for irregular curves. %In the simulation, w
We compare sample sizes $n\in\{54, 162\}$ and average grid sizes $k=\frac{1}{n}\sum_{i=1}^n k_i \in\{40,100\}$ as well as an extreme case with $k_i=3$ for each curve but $n=720$, i.e.\ where only random triangles are observed (yet, with known parameterization over $[0,1]$). We additionally investigate the influence of nuisance effects and compare different inner product weights. %Moreover, we check sensitivity to the orientation of curve representatives $y_i$, $i=1, \dots, n$ in the dataset. 
%In the following we summarize the most important results.
While important results are summarized in the following, comprehensive visualizations can be found in Supplement\old{ary}~\ref{sec:bottles_appendix}.

\textbf{Simulation design:} We simulate models of the form $[\mu] = \Exp_{[p]}\left(\beta_\catidx + f_1(z_1)\right)$ with overall mean $[p]$, a binary \old{covariate}\new{effect} with levels $\catidx\in\{0,1\}$ and a smooth effect of \old{a metric covariate }$z_1 \in [-60,60]$. 
We choose a cyclic cubic B-spline basis with $27$ knots for $T_{[p]}\cY^*_{/G}$, \new{placing them irregularly at 1/27-quantiles of unit-speed parameterization time-points of the curves. 
C}ubic B-splines with $4$ \new{regularly placed} knots \new{are used} for covariates in smooth effects. 
True models are \old{obtained }based on the \texttt{bot} dataset from R package \texttt{Momocs} \citep{Bonhommeetal2014RPackMomocs} comprising outlines of 20 beer ($\catidx = 0$) and 20 whiskey ($\catidx = 1$) bottles of different brands. A \old{metric covariate}\new{smooth} effect is induced by the 2D viewing transformations resulting from tilting the planar outlines in a 3D coordinate system along their longitudinal axis by an angle of up to 60 degree towards the viewer ($z_1 = 60$) and away ($z_1 = -60$) (i.e. in a way not captured by 2D rotation invariance). \old{Besides e}\new{E}stablishing \old{the }ground truth models based on a fit to the bottle data, we simulate new responses $[y_1], \dots, [y_n]$ via residual re-sampling (Supplement \ref{sec:bottles_appendix}) %transported residuals $\epsilon_i$ of a total of $360$ outlines in the dataset are represented in the response B-spline basis. 
%For each simulated dataset, a sample of $\epsilon_i$ of the desired size $n$ is randomly drawn (with replacement) and irregularly evaluated on uniformly sampled grids, with the average grid length $ k$ varied in the simulation settings. The evaluated residuals $\epsilon_i$ are parallel transported to $\varepsilon_{[\mu_i], i}\in T_{[\mu_i]}\cY^*_{i/G}$, the tangent space of the true conditional mean, to generate the simulated shape/form dataset $[y_i] = \Exp_{[\mu_i]}(\varepsilon_{[\mu_i], i})$, $i=1, \dots, n$. This is done separately for shape and form response scenarios.
to preserve realistic autocorrelation. 
%In for the present model fits, we choose the preliminary pole representative $p_0$ (for estimating the pole $[p]$) as a standard functional mean estimate of the curve data. Hence, the starting value of the algorithm depends on the rotation of the curves. To avoid being over-optimistic, we, thus, 
%As the preliminary pole $[p_0]$ (compare Section \ref{chap_boosting}) depends on the representatives, 
Subsequently, we randomly translate, rotate and scale $y_1, \dots, y_n \in \cY$ somewhat around the aligned form/shape representatives to obtain realistic samples.\\
The implied residual variance $\frac{1}{n}\sum_{i=1}^n \|\epsilon_i\|_i^2 = \frac{1}{n}\sum_{i=1}^n d_i^2([y_i],[\mu_i])$ on simulated datasets ranges around 105\% of the predictor variance $\frac{1}{n}\sum_{i=1}^n \|h(\bx_i)\|_i^2 = \frac{1}{n}\sum_{i=1}^n d_i^2([\mu_i],[p])$ in the form scenario and around 65\% in the shape scenario. All simulation\new{s} \old{settings }were repeated 100 times, \old{fitting the generated data with}\new{fitting} models \old{including}\new{with} the model \old{components}\new{terms} specified above and three additional nuisance effects: a linear effect $\beta z_1$ (orthogonal to $f_1(z_1)$), an effect $f_2$ of the same structure as $f_1$ but depending on an independently uniformly drawn variable $z_2$, and a constant effect $h_0\in T_{[p]}\cY^*_{/G}$ to test centering around $[p]$. \old{For the model fit, all b}\new{B}ase-learners are regularized to 4 degrees of freedom \new{(step-length $\steplen = 0.1$)}. \old{We specify a step-length of $\steplen = 0.1$, and e}\new{E}arly-stopping is based on 10-fold cross-validation.\\
\begin{figure}[!h]
	\centering
	\includegraphics[width = 0.73\textwidth]{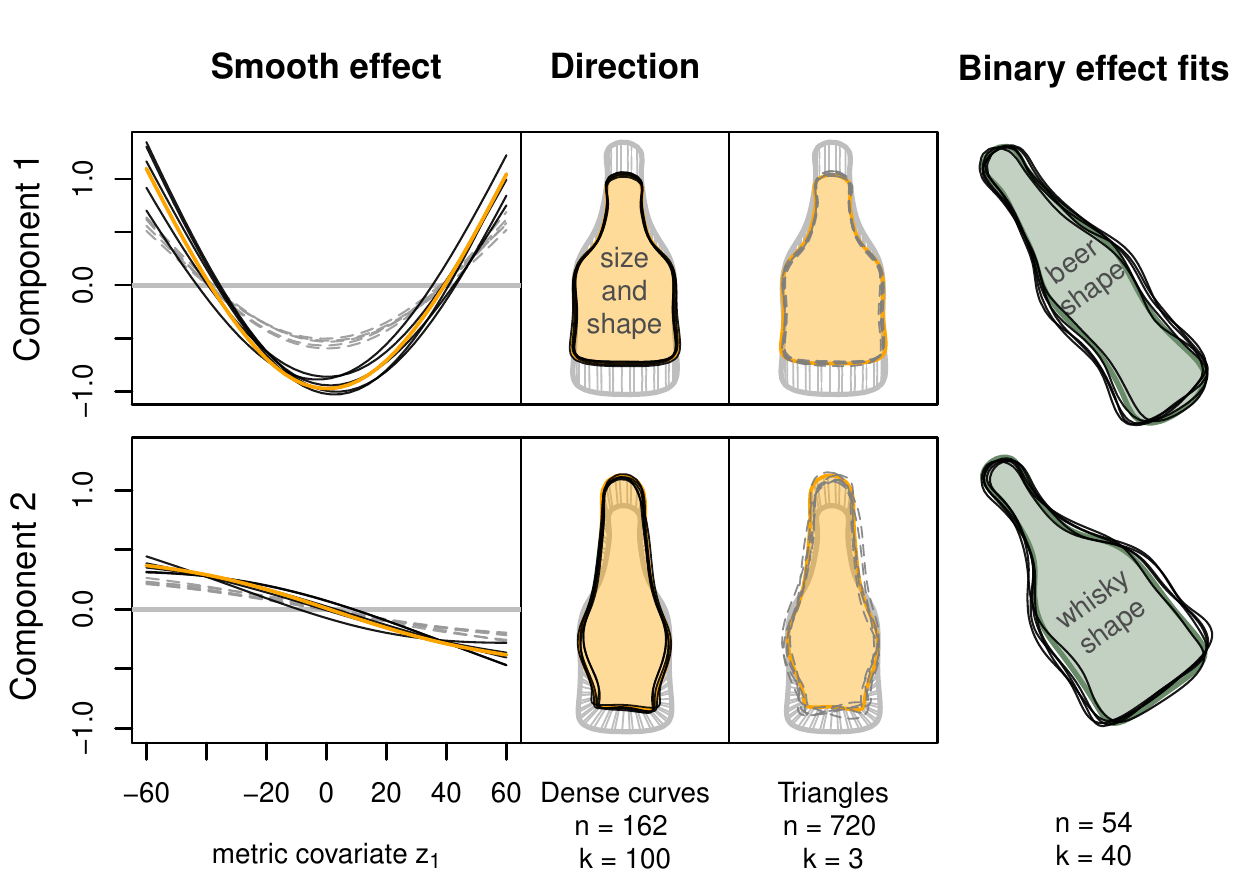} %[height=3.5in]
	\includegraphics[width = 0.256\textwidth]{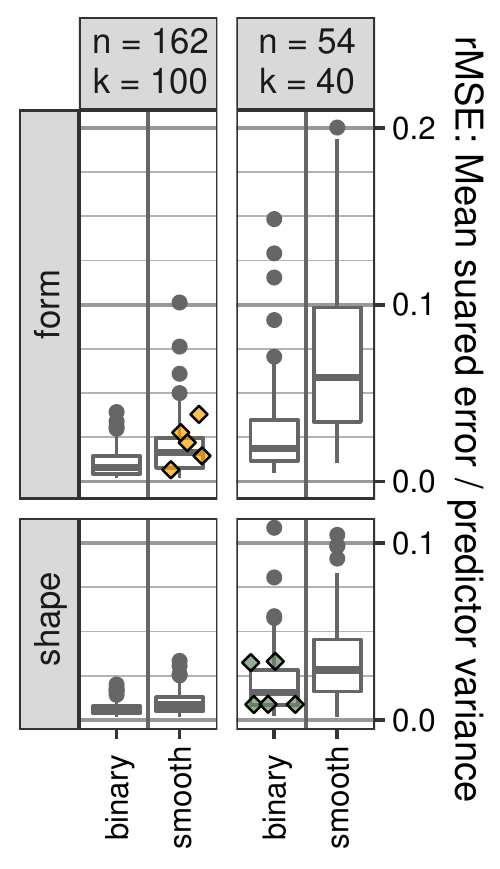}
	\caption{\textit{Left:} First \textit{(row 1)} and second \textit{(row 2)} main components of the smooth effect $f_1(z_1)$ in the form scenario obtained from TP factorization. Normalized component directions are visualized as bottle outlines after transporting them to the true pole \textit{(gray solid outline)}. Underlying truth \textit{(orange solid lines / areas)} are plotted together with
		five example estimates for $n=162$ and $k=100$ \textit{(black solid lines)} and the extremely sparse $k_i=3$ setting \textit{(gray dashed lines)}. 
		\textit{Center:} Conditional means for both bottle types with fixed metric covariate $z_1=0$ in the shape scenario with $n=54$ and $k=40$. Five example estimates \textit{(black solid outlines)} are plotted in front of the underlying truth \textit{(olive-green areas)}.
		\textit{Right:} rMSE of shown example estimates \textit{(jittered colored diamonds)} contextualized with boxplots of rMSE distributions observed in respective simulation scenarios. 
	}\label{fig:sim_effect_plots} 
\end{figure}
%
%% form scenario
\textbf{Form scenario:} In the form scenario, the smooth covariate effect $f_1$ offers a particularly clear interpretation. TP factorization decomposes the true effect into its two relevant components, where the first (major) component corresponds to the bare projection of the tilted outline in 3D into the 2D image plane and the second to additional perspective transformations (Fig.\ \ref{fig:sim_effect_plots}). For this effect, we observe a median relative mean squared error  $\operatorname{rMSE}(\hat{h}_j) = \sum_{i=1}^n \|\hat{h}_j(\bx_i) - h_j(\bx_i)\|_i^2 / \sum_{i=1}^n \|h(\bx_i)\|_i^2$ of about 3.7\% of the total predictor variance for small data settings with $n=54$ and ${k} = 100$ (5.9\% with $ k = 40$), which reduces to 1.5\% for $n=162$ (for both $k = 40$ and $ k = 100$). It is typical for functional data that, from a certain point, adding more (highly correlated) evaluations per curve leads to distinctly less improvement in the model fit than adding further observations \citep[compare, e.g., also][]{Stoecker2019FResponseLSS}. %Visual inspection of predictions, however, indicates that the denser settings indeed yield somewhat more accurate and stable predictions in high curvature domains. In the simulations, sampling is not concentrated in these regions, such that with $ k = 40$ these 'visual landmarks' might still be underrepresented.
% extreme scenario
In the extreme $k_i=3$ scenario, we obtain an rMSE of around 15\%, which is not surprisingly considerably higher than for the moderate settings above. %Taking a closer look at the effects (Fig.\ \ref{fig:sim_effect_plots}), we find that 
Even in this extreme setting (Fig.\ \ref{fig:sim_effect_plots}), the effect directions are captured well, while the size of the effect is underestimated. Rotation alignment based on only three points (which are randomly distributed along the curves) might considerably differ from the full curve alignment, and averaging over these sub-optimal alignments masks the full extend of the effect. Still, results are very good given the sparsity of information in this case.
% starting value dependence
%Note that in the extreme $k_i=3$ scenario, starting with reasonably aligned $y_i$ for estimating the overall mean $[p]$ is necessary. For arbitrarily rotated triangles of evaluations at different grid points, there is just too few information to achieve sensible rotation alignment. 
%These still comparably good results in the extreme case come, however, with a caveat: in the respective simulations, the starting value $p_0$ for estimating the pole $p$ is computed as Euclidean mean estimator over already pre-aligned observations. While pre-alignment makes less difference for moderate grid sizes, it is likely over-optimistic for $k_i = 3$ where starting from an arbitrary $p_0$ can lead to severe problems running into local minima (\note{Supplement ...}). Thus, in sparse settings it is important to start with a 'reasonable' $p_0$, which is i.a. still achieved by an Euclidean mean of roughly aligned observations (\note{Supplement ...}). Note that given $p_0$ the algorithm is entirely independent of pre-alignment.
% binary effect and pole
Having a simpler form, the binary effect $\beta_\catidx$ is also estimated more accurately with an rMSE of around 1.5\% for $n=54$, $ k = 100$ (1.9\% for $ k = 40$) and less than 0.8\% for $n=162$ (for both $ k = 40$ and $ k = 100$). The pole estimation accuracy varies on a similar scale. %In the extreme $k_i=3$ case, we observe slightly over 5\% rMSE for the binary effect.
%% shape scenario and other variants

\textbf{Shape scenario:}
Qualitatively, the shape scenario shows a similar picture. For $ k = 40$, we observe median rMSEs of 2.8\% ($n = 54$) and 2.2\% ($n = 162$) for $f_1(z_1)$, and 1.5\% and 0.6\% for the binary effect $\beta_\catidx$. For $k = 100$, accuracy is again slightly higher.
% extreme scenario 
%The accuracy in the extreme $k_i = 3$, $n = 720$ case is also similar to forms with 14.8\% median rMSE in $f_1(z_1)$ and 7.6\% for $\beta_\catidx$. This scenario is even more extreme for shapes, as due to the shape invariances $k_i = 3$ is the minimum grid length possible here. 
% nuisance effects

\textbf{Nuisance effects and integration weights:} %The linear nuisance effect $\beta z_1$ was never selected, in neither form nor shape scenarios. As expected, omitting the nuisance effect $f_2$, which was part of simulations referred to above, improved median accuracy somewhat, yet only marginally. The constant nuisance effect $h_0$ was only selected sometimes in extreme $k_i=3$ scenarios indicating problems the estimation of the pole $[p]$. 
Nuisance effects in the model where generally rarely selected and, if selected at all, only lead to a marginal loss in accuracy. The constant effect is only selected sometimes in the extreme triangle scenarios, when pole estimation is difficult. %Although far more effects for inherent variable selection could be included in the presented boosting approach, this is not our focus here. 
We refer to \citet{brockhaus2017boosting}, who perform gradient boosting with functional responses and a large number of covariate effects with stability selection, for simulations with larger numbers of nuisance effects and further discussion in a related context, as variable selection is not our main focus here. 
Finally, simulations indicate that inner product weights implementing a trapezoidal rule for numerical integration  are slightly preferable for typical grid sizes ($ k = 40, 100$), whereas weights of $1/k_i$ equal over all grid points within a curve gave slightly better results in the extreme $k_i=3$ settings.
% other simulations

All in all, the simulations show that Riemannian $L_2$-Boosting can adequately fit both shape and form models in a realistic scenario and captures effects reasonably well even for a comparably small number of sampled outlines or evaluations per outline. %Moreover, we established the factorized effect plot as a tool for visualizing and interpreting otherwise complicated tensor-product effects which will be utilized again in the following sections.

\section{Discussion and Outlook}
\label{chap_discussion}

Compared to existing (landmark) shape regression models, the presented approach extends  linear predictors to more general additive predictors including also, e.g., smooth nonlinear model terms and interactions, and yields the first regression approach for functional shape as well as  form responses. %Here, the 'functional' approach does not focus on infinite (yet typically on high) dimensionality of the response manifold but adopts a functional perspective employing (smooth) basis functions for the response and taking into account that curves might (likely) be evaluated on irregular grids in practice. %A visualization technique utilizing their tensor-product structure, allows drawing intuitive conclusions also from non-linear additive effect estimates. 
Moreover, we propose novel visualizations based on TP factorization that, similar to \new{FPC}\old{functional principal component} analysis, enable a systematic decomposition of the variability explained by an additive effect on tangent space level. 
Yielding meaningful coordinates for model effects,
its potential for visualization will be useful also for \new{FAMs}\old{functional additive models} in linear spaces\old{.} \new{and also beyond our model framework, such as we exemplarily illustrate for the non-parametric approach of \citet{jeon2020additiveHilbertian} in Supplement \ref{sec:kernelTPfactorization}}. %described here only requires the estimated effect functions to have the respective TP structure, which might be given also in approaches not explicitly assuming a finite-dimensional basis representation of the effect.} 

Instead of operating on the original evaluations $\by_i\in\C^{k_i}$ of response curves $y_i$ as in all applications above, another frequently used approach expands $y_i$, $i=1,\dots,n$, in a common basis first, before carrying out statistical analysis on coefficient vectors (compare \old{e.g. }\citet{RamsaySilverman2005, Morris2015} \new{and \citet{MullerYao2008FAM} for smoothing spline, wavelet or FPC representations} in FDA or \citet{Bonhommeetal2014RPackMomocs} in\old{ the} shape analysis\old{ literature}). \old{Additive s}\new{S}hape/form regression on the coefficients is, in fact, a special case of our approach, where the inner product is evaluated on the coefficients instead of \old{on curve }evaluations\old{, since the entire quotient space geometry and model is determined by the Hilbert space $\cY$} (\old{for more details see }Supplement \ref{sec:coefficientlevel}). 

%Using the Riemannian 'Log link function', t
The proposed model is motivated by geodesic regression. %, presenting a direct generalization of multiple linear regression to non-linear spaces. 
However, in the multiple linear predictor, a linear effect of a single covariate does, in general, not describe a geodesic for fixed non-zero values of other covariate effects. Or put differently, $\Exp_{[p]}\left( h_1 + h_2\right) \neq \Exp_{\Exp_{[p]}\left(h_1\right)}\left(h_2\right) \neq \Exp_{\Exp_{[p]}\left(h_2\right)}\left(h_1\right)$ in general. %In data problems with a clear effect hierarchy, 
Thus, hierarchical geodesic effects of the form $\Exp_{\Exp_{[p]}\left(h_1\right)}\left(h_2\right)$, relevant, i.a., in mixed models for hierarchical/longitudinal study designs \citep{KimEtAl2018RiemannianNonlinearMixedEffectsModels}, present an interesting future extension of our model.\old{ For estimation, developments in  model-based boosting for generalized additive models for location, scale and shape \citep[GAMLSS,][]{Thomas2016} might offer the needed infrastructure.} %with several model predictors present a promising basis for further extensions of our model in this direction.
%Often arbitrary parameterization of observed curves with potential lack of alignment or registration suggests adding re-parameterization invariance to the set of shape invariances considered in the quotient geometry of the response. Here, 
Moreover, an ``elastic'' extension\old{ for functional shape and form regression} based on the square-root-velocity framework \citep{SrivastavaKlassen2016} presents a promising direction for future research\new{, as do other manifold responses}. %, which could build on our developments here but also offers new challenges. 
%The generality of the underlying concepts  further work for other response spaces:  The development 
\old{The generality of underlying concepts also encourages  investigation of our approach for response objects living in other Riemannian manifolds, such as SPD matrices occurring, e.g., as covariance matrices in functional connectivity studies or diffusion tensor imaging. }%\citep[e.g.,][]{zhu2009intrinsic, KimEtAl2014RiemannGLM}. %or responses in a Grassmann manifold \citep{HongNiethammer2014TimeWarpedGeodesicRegression} might pose interesting future response objects in additive intrinsic regression, Riemannian $L_2$-Boosting and tensor-product factorization for effect visualization along the lines of the presented approach.

\section*{Acknowledgement}
We sincerely thank Nadja P\"ollath for providing carefully recorded sheep astragalus data and important insights and comments, \new{and} Sophia Schaffer for running and discussing cell simulations and providing fully processed cell outlines\new{.}\old{, and Lisa Steyer for frequent and fruitful discussions.}
Moreover, we gratefully acknowledge funding by grant GR 3793/3-1 from the German research foundation (DFG).

\bigskip
\begin{center}
{\large\bf SUPPLEMENTARY MATERIAL}
\end{center}
Supplementary material with further details is provided in an online supplement.

%\section{BibTeX}

%We hope you've chosen to use BibTeX!\ If you have, please feel free to use the package natbib with any bibliography style you're comfortable with. The .bst file Chicago was used here, and agsm.bst has been included here for your convenience. 

\bibliographystyle{chicago}
\bibliography{literatur}

\begin{thebibliography}{}

\bibitem[\protect\citeauthoryear{Adams, Rohlf, and Slice}{Adams
  et~al.}{2013}]{AdamsRholfSlice2013GeomMorphometrics21stCentury}
Adams, D., F.~Rohlf, and D.~Slice (2013).
\newblock A field comes of age: geometric morphometrics in the 21st century.
\newblock {\em Hystrix, the Italian Journal of Mammalogy\/}~{\em 24\/}(1),
  7--14.

\bibitem[\protect\citeauthoryear{Backenroth, Goldsmith, Harran, Cortes,
  Krakauer, and Kitago}{Backenroth et~al.}{2018}]{BackenrothEtal2018HeteroFPCA}
Backenroth, D., J.~Goldsmith, M.~D. Harran, J.~C. Cortes, J.~W. Krakauer, and
  T.~Kitago (2018).
\newblock Modeling motor learning using heteroscedastic functional principal
  components analysis.
\newblock {\em Journal of the American Statistical Association\/}~{\em
  113\/}(523), 1003--1015.

\bibitem[\protect\citeauthoryear{Baranyi, Yam, and V{\'a}rlaki}{Baranyi
  et~al.}{2013}]{Baranyi2013tensor}
Baranyi, P., Y.~Yam, and P.~V{\'a}rlaki (2013).
\newblock {\em Tensor product model transformation in polytopic model-based
  control}.
\newblock CRC press.

\bibitem[\protect\citeauthoryear{Bonhomme, Picq, Gaucherel, and
  Claude}{Bonhomme et~al.}{2014}]{Bonhommeetal2014RPackMomocs}
Bonhomme, V., S.~Picq, C.~Gaucherel, and J.~Claude (2014).
\newblock Momocs: Outline analysis using {R}.
\newblock {\em Journal of Statistical Software\/}~{\em 56\/}(13), 1--24.

\bibitem[\protect\citeauthoryear{Brockhaus, Melcher, Leisch, and
  Greven}{Brockhaus et~al.}{2017}]{brockhaus2017boosting}
Brockhaus, S., M.~Melcher, F.~Leisch, and S.~Greven (2017).
\newblock Boosting flexible functional regression models with a high number of
  functional historical effects.
\newblock {\em Statistics and Computing\/}~{\em 27\/}(4), 913--926.

\bibitem[\protect\citeauthoryear{Brockhaus, R\"ugamer, and Greven}{Brockhaus
  et~al.}{2020}]{FDboost}
Brockhaus, S., D.~R\"ugamer, and S.~Greven (2020).
\newblock Boosting functional regression models with {FDboost}.
\newblock {\em Journal of Statistical Software\/}~{\em 94\/}(10), 1--50.

\bibitem[\protect\citeauthoryear{Brockhaus, Scheipl, and Greven}{Brockhaus
  et~al.}{2015}]{BrockhausGreven2015}
Brockhaus, S., F.~Scheipl, and S.~Greven (2015).
\newblock {The Functional Linear Array Model}.
\newblock {\em Statistical Modelling\/}~{\em 15\/}(3), 279--300.

\bibitem[\protect\citeauthoryear{B{\"u}hlmann and Yu}{B{\"u}hlmann and
  Yu}{2003}]{Buehlmann2003L2boosting}
B{\"u}hlmann, P. and B.~Yu (2003).
\newblock Boosting with the {L}2 loss: regression and classification.
\newblock {\em Journal of the American Statistical Association\/}~{\em
  98\/}(462), 324--339.

\bibitem[\protect\citeauthoryear{Clutton-Brock, Dennis-Bryan, Armitage, and
  Jewell}{Clutton-Brock et~al.}{1990}]{Clutton-Brock1990soay}
Clutton-Brock, J., K.~Dennis-Bryan, P.~L. Armitage, and P.~A. Jewell (1990).
\newblock Osteology of the {S}oay sheep.
\newblock {\em Bull. Br. Mus. Nat. Hist.\/}~{\em 56\/}(1), 1--56.

\bibitem[\protect\citeauthoryear{Cornea, Zhu, Kim, Ibrahim, and the Alzheimer's
  Disease Neuroimaging~Initiative}{Cornea
  et~al.}{2017}]{CorneaEtAl2017RegRiemannianSymSpaces}
Cornea, E., H.~Zhu, P.~Kim, J.~G. Ibrahim, and the Alzheimer's Disease
  Neuroimaging~Initiative (2017).
\newblock Regression models on {Riemannian} symmetric spaces.
\newblock {\em Journal of the Royal Statistical Society: Series B\/}~{\em
  79\/}(2), 463--482.

\bibitem[\protect\citeauthoryear{Davis, Fletcher, Bullitt, and Joshi}{Davis
  et~al.}{2010}]{Davis2010ManifoldKernelRegression}
Davis, B.~C., P.~T. Fletcher, E.~Bullitt, and S.~Joshi (2010).
\newblock Population shape regression from random design data.
\newblock {\em International journal of computer vision\/}~{\em 90\/}(2),
  255--266.

\bibitem[\protect\citeauthoryear{Dryden and Mardia}{Dryden and
  Mardia}{2016}]{DrydenMardia2016ShapeAnalysisWithApplications}
Dryden, I.~L. and K.~V. Mardia (2016).
\newblock {\em Statistical Shape Analysis: With Applications in R}.
\newblock John Wiley \& Sons.

\bibitem[\protect\citeauthoryear{Ferraty, Goia, Salinelli, and Vieu}{Ferraty
  et~al.}{2011}]{ferraty2011recent}
Ferraty, F., A.~Goia, E.~Salinelli, and P.~Vieu (2011).
\newblock Recent advances on functional additive regression.
\newblock {\em Recent Advances in Functional Data Analysis and Related
  Topics\/}, 97--102.

\bibitem[\protect\citeauthoryear{Fletcher}{Fletcher}{2013}]{Fletcher2012GeodesicRegressionRiemannianManifold}
Fletcher, P.~T. (2013).
\newblock Geodesic regression and the theory of least squares on {Riemannian}
  manifolds.
\newblock {\em International Journal of Computer Vision\/}~{\em 105\/}(2),
  171--185.

\bibitem[\protect\citeauthoryear{Greven and Scheipl}{Greven and
  Scheipl}{2017}]{GrevenScheipl2017}
Greven, S. and F.~Scheipl (2017).
\newblock A general framework for functional regression modelling (with
  discussion and rejoinder).
\newblock {\em Statistical Modelling\/}~{\em 17\/}(1-2), 1--35 and 100--115.

\bibitem[\protect\citeauthoryear{Happ and Greven}{Happ and
  Greven}{2018}]{HappGreven2018}
Happ, C. and S.~Greven (2018).
\newblock Multivariate functional principal component analysis for data
  observed on different (dimensional) domains.
\newblock {\em Journal of the American Statistical Association\/}~{\em
  113\/}(522), 649--659.

\bibitem[\protect\citeauthoryear{Hinkle, Fletcher, and Joshi}{Hinkle
  et~al.}{2014}]{Hinkle2014IntrinsicPolynomials}
Hinkle, J., P.~T. Fletcher, and S.~Joshi (2014).
\newblock Intrinsic polynomials for regression on {Riemannian} manifolds.
\newblock {\em Journal of Mathematical Imaging and Vision\/}~{\em 50\/}(1),
  32--52.

\bibitem[\protect\citeauthoryear{Hofner, Hothorn, Kneib, and Schmid}{Hofner
  et~al.}{2011}]{HofnerSchmid2011}
Hofner, B., T.~Hothorn, T.~Kneib, and M.~Schmid (2011).
\newblock A framework for unbiased model selection based on boosting.
\newblock {\em Journal of Computational and Graphical Statistics\/}~{\em
  20\/}(4), 956--971.

\bibitem[\protect\citeauthoryear{Hofner, Kneib, and Hothorn}{Hofner
  et~al.}{2016}]{Hofner2016ConstrainedRegression}
Hofner, B., T.~Kneib, and T.~Hothorn (2016).
\newblock A unified framework of constrained regression.
\newblock {\em Statistics and Computing\/}~{\em 26\/}(1-2), 1--14.

\bibitem[\protect\citeauthoryear{Hong, Singh, Kwitt, and Niethammer}{Hong
  et~al.}{2014}]{HongNiethammer2014TimeWarpedGeodesicRegression}
Hong, Y., N.~Singh, R.~Kwitt, and M.~Niethammer (2014).
\newblock Time-warped geodesic regression.
\newblock In {\em International Conference on Medical Image Computing and
  Computer-Assisted Intervention}, pp.\  105--112. Springer.

\bibitem[\protect\citeauthoryear{Hothorn, B\"uhlmann, Kneib, Schmid, and
  Hofner}{Hothorn et~al.}{2010}]{HothornBuehlmann2010}
Hothorn, T., P.~B\"uhlmann, T.~Kneib, M.~Schmid, and B.~Hofner (2010).
\newblock Model-based boosting 2.0.
\newblock {\em Journal of Machine Learning Research\/}~{\em 11}, 2109--2113.

\bibitem[\protect\citeauthoryear{Huckemann, Hotz, and Munk}{Huckemann
  et~al.}{2010}]{Huckemann2010intrinsicMANOVA}
Huckemann, S., T.~Hotz, and A.~Munk (2010).
\newblock Intrinsic {MANOVA} for {Riemannian} manifolds with an application to
  {Kendall's} space of planar shapes.
\newblock {\em IEEE Transactions on Pattern Analysis and Machine
  Intelligence\/}~{\em 32\/}(4), 593--603.

\bibitem[\protect\citeauthoryear{Jeon, Lee, Mammen, and Park}{Jeon
  et~al.}{2022}]{jeon2022locally}
Jeon, J.~M., Y.~K. Lee, E.~Mammen, and B.~U. Park (2022).
\newblock Locally polynomial hilbertian additive regression.
\newblock {\em Bernoulli\/}~{\em 28\/}(3), 2034--2066.

\bibitem[\protect\citeauthoryear{Jeon and Park}{Jeon and
  Park}{2020}]{jeon2020additiveHilbertian}
Jeon, J.~M. and B.~U. Park (2020).
\newblock Additive regression with hilbertian responses.
\newblock {\em The Annals of Statistics\/}~{\em 48\/}(5), 2671--2697.

\bibitem[\protect\citeauthoryear{Jeon, Park, and Van~Keilegom}{Jeon
  et~al.}{2021}]{jeon2021additiveNonEuclidean}
Jeon, J.~M., B.~U. Park, and I.~Van~Keilegom (2021).
\newblock Additive regression for non-euclidean responses and predictors.
\newblock {\em The Annals of Statistics\/}~{\em 49\/}(5), 2611--2641.

\bibitem[\protect\citeauthoryear{Jupp and Kent}{Jupp and
  Kent}{1987}]{jupp1987fitting}
Jupp, P.~E. and J.~T. Kent (1987).
\newblock Fitting smooth paths to spherical data.
\newblock {\em Journal of the Royal Statistical Society: Series C (Applied
  Statistics)\/}~{\em 36\/}(1), 34--46.

\bibitem[\protect\citeauthoryear{Karcher}{Karcher}{1977}]{Karcher1977KarcherMean}
Karcher, H. (1977).
\newblock {Riemannian} center of mass and mollifier smoothing.
\newblock {\em Communications on Pure and Applied Mathematics\/}~{\em 30\/}(5),
  509--541.

\bibitem[\protect\citeauthoryear{Kendall, Barden, Carne, and Le}{Kendall
  et~al.}{1999}]{Kendall1999}
Kendall, D.~G., D.~Barden, T.~K. Carne, and H.~Le (1999).
\newblock {\em Shape and shape theory}, Volume 500.
\newblock John Wiley \& Sons, LTD.

\bibitem[\protect\citeauthoryear{Kent, Mardia, Morris, and Aykroyd}{Kent
  et~al.}{2001}]{kent2001functional}
Kent, J.~T., K.~V. Mardia, R.~J. Morris, and R.~G. Aykroyd (2001).
\newblock Functional models of growth for landmark data.
\newblock {\em Proceedings in Functional and Spatial Data Analysis\/}~{\em
  109115}.

\bibitem[\protect\citeauthoryear{Kim, Adluru, Collins, Chung, Bendlin, Johnson,
  Davidson, and Singh}{Kim et~al.}{2014}]{KimEtAl2014RiemannGLM}
Kim, H.~J., N.~Adluru, M.~D. Collins, M.~K. Chung, B.~B. Bendlin, S.~C.
  Johnson, R.~J. Davidson, and V.~Singh (2014).
\newblock Multivariate general linear models (mglm) on {Riemannian} manifolds
  with applications to statistical analysis of diffusion weighted images.
\newblock In {\em Proceedings of the IEEE Conference on Computer Vision and
  Pattern Recognition}, pp.\  2705--2712.

\bibitem[\protect\citeauthoryear{Kim, Adluru, Suri, Vemuri, Johnson, and
  Singh}{Kim et~al.}{2017}]{KimEtAl2018RiemannianNonlinearMixedEffectsModels}
Kim, H.~J., N.~Adluru, H.~Suri, B.~C. Vemuri, S.~C. Johnson, and V.~Singh
  (2017).
\newblock {Riemannian} nonlinear mixed effects models: Analyzing longitudinal
  deformations in neuroimaging.
\newblock In {\em 2017 IEEE Conference on Computer Vision and Pattern
  Recognition (CVPR)}, pp.\  5777--5786.

\bibitem[\protect\citeauthoryear{Klingenberg}{Klingenberg}{1995}]{Klingenberg1995RiemannianGeometry}
Klingenberg, W. (1995).
\newblock {\em Riemannian geometry}.
\newblock de Gruyter.

\bibitem[\protect\citeauthoryear{Kneib, Hothorn, and Tutz}{Kneib
  et~al.}{2009}]{KneibHothornTutz2009}
Kneib, T., T.~Hothorn, and G.~Tutz (2009).
\newblock Variable selection and model choice in geoadditive regression models.
\newblock {\em Biometrics\/}~{\em 65\/}(2), 626--634.

\bibitem[\protect\citeauthoryear{Kume, Dryden, and Le}{Kume
  et~al.}{2007}]{kume2007shape}
Kume, A., I.~L. Dryden, and H.~Le (2007).
\newblock Shape-space smoothing splines for planar landmark data.
\newblock {\em Biometrika\/}~{\em 94\/}(3), 513--528.

\bibitem[\protect\citeauthoryear{Lay}{Lay}{1967}]{lay1967wildsheep}
Lay, D.~M. (1967).
\newblock A study of the mammals of iran: resulting from the street expedition
  of 1962-63.
\newblock In {\em Fieldiana: Zoology 54}. Field Museum of Natural History.

\bibitem[\protect\citeauthoryear{Li and Ruppert}{Li and
  Ruppert}{2008}]{LiRuppert2008asymptotics}
Li, Y. and D.~Ruppert (2008).
\newblock On the asymptotics of penalized splines.
\newblock {\em Biometrika\/}~{\em 95\/}(2), 415--436.

\bibitem[\protect\citeauthoryear{Lin, St.~Thomas, Zhu, and Dunson}{Lin
  et~al.}{2017}]{lin2017extrinsic}
Lin, L., B.~St.~Thomas, H.~Zhu, and D.~B. Dunson (2017).
\newblock Extrinsic local regression on manifold-valued data.
\newblock {\em Journal of the American Statistical Association\/}~{\em
  112\/}(519), 1261--1273.

\bibitem[\protect\citeauthoryear{Lin, M{\"u}ller, and Park}{Lin
  et~al.}{2020}]{lin2020AdditiveModelsSPD}
Lin, Z., H.-G. M{\"u}ller, and B.~U. Park (2020).
\newblock Additive models for symmetric positive-definite matrices,
  {R}iemannian manifolds and {L}ie groups.
\newblock {\em arXiv preprint arXiv:2009.08789\/}.

\bibitem[\protect\citeauthoryear{Lutz and B{\"u}hlmann}{Lutz and
  B{\"u}hlmann}{2006}]{LutzBuehlmann2006BoostingHighMultivariate}
Lutz, R.~W. and P.~B{\"u}hlmann (2006).
\newblock Boosting for high-multivariate responses in high-dimensional linear
  regression.
\newblock {\em Statistica Sinica\/}, 471--494.

\bibitem[\protect\citeauthoryear{Mallasto and Feragen}{Mallasto and
  Feragen}{2018}]{mallasto2018wrapped}
Mallasto, A. and A.~Feragen (2018).
\newblock Wrapped gaussian process regression on riemannian manifolds.
\newblock In {\em Proceedings of the IEEE Conference on Computer Vision and
  Pattern Recognition}, pp.\  5580--5588.

\bibitem[\protect\citeauthoryear{Mayr, Binder, Gefeller, and Schmid}{Mayr
  et~al.}{2014}]{mayr2014evolution}
Mayr, A., H.~Binder, O.~Gefeller, and M.~Schmid (2014).
\newblock The evolution of boosting algorithms.
\newblock {\em Methods of information in medicine\/}~{\em 53\/}(06), 419--427.

\bibitem[\protect\citeauthoryear{Meyer, Coull, Versace, Cinciripini, and
  Morris}{Meyer et~al.}{2015}]{meyer2015bayesian}
Meyer, M.~J., B.~A. Coull, F.~Versace, P.~Cinciripini, and J.~S. Morris (2015).
\newblock {Bayesian} function-on-function regression for multilevel functional
  data.
\newblock {\em Biometrics\/}~{\em 71\/}(3), 563--574.

\bibitem[\protect\citeauthoryear{Morris}{Morris}{2015}]{Morris2015}
Morris, J.~S. (2015).
\newblock {Functional Regression}.
\newblock {\em Annual Review of Statistics and its Applications\/}~{\em 2},
  321--359.

\bibitem[\protect\citeauthoryear{Morris and Carroll}{Morris and
  Carroll}{2006}]{MorrisCarroll2006}
Morris, J.~S. and R.~J. Carroll (2006).
\newblock {Wavelet-based functional mixed models}.
\newblock {\em Journal of the Royal Statistical Society, Series B\/}~{\em
  68\/}(2), 179--199.

\bibitem[\protect\citeauthoryear{M{\"u}ller and Yao}{M{\"u}ller and
  Yao}{2008}]{MullerYao2008FAM}
M{\"u}ller, H.-G. and F.~Yao (2008).
\newblock Functional additive models.
\newblock {\em Journal of the American Statistical Association\/}~{\em
  103\/}(484), 1534--1544.

\bibitem[\protect\citeauthoryear{Muralidharan and Fletcher}{Muralidharan and
  Fletcher}{2012}]{muralidharan2012sasaki}
Muralidharan, P. and P.~T. Fletcher (2012).
\newblock Sasaki metrics for analysis of longitudinal data on manifolds.
\newblock In {\em 2012 IEEE conference on computer vision and pattern
  recognition}, pp.\  1027--1034. IEEE.

\bibitem[\protect\citeauthoryear{Olsen, Markussen, and Raket}{Olsen
  et~al.}{2018}]{olsen2018simultaneous}
Olsen, N.~L., B.~Markussen, and L.~L. Raket (2018).
\newblock Simultaneous inference for misaligned multivariate functional data.
\newblock {\em Journal of the Royal Statistical Society: Series C\/}~{\em
  67\/}(5), 1147--1176.

\bibitem[\protect\citeauthoryear{Pennec}{Pennec}{2006}]{Pennec2006BasicTools}
Pennec, X. (2006).
\newblock Intrinsic statistics on riemannian manifolds: Basic tools for
  geometric measurements.
\newblock {\em Journal of Mathematical Imaging and Vision\/}~{\em 25\/}(1),
  127--154.

\bibitem[\protect\citeauthoryear{Petersen and M{\"u}ller}{Petersen and
  M{\"u}ller}{2019}]{petersen2019frechet}
Petersen, A. and H.-G. M{\"u}ller (2019).
\newblock Fr{\'e}chet regression for random objects with euclidean predictors.
\newblock {\em The Annals of Statistics\/}~{\em 47\/}(2), 691--719.

\bibitem[\protect\citeauthoryear{Pigoli, Menafoglio, and Secchi}{Pigoli
  et~al.}{2016}]{pigoli2016kriging}
Pigoli, D., A.~Menafoglio, and P.~Secchi (2016).
\newblock Kriging prediction for manifold-valued random fields.
\newblock {\em Journal of Multivariate Analysis\/}~{\em 145}, 117--131.

\bibitem[\protect\citeauthoryear{P\"ollath, Schafberg, and Peters}{P\"ollath
  et~al.}{2019}]{Poellath2019sheepbones}
P\"ollath, N., R.~Schafberg, and J.~Peters (2019).
\newblock Astragalar morphology: Approaching the cultural trajectories of wild
  and domestic sheep applying geometric morphometrics.
\newblock {\em Journal of Archaeological Science: Reports\/}~{\em 23},
  810--821.

\bibitem[\protect\citeauthoryear{{R Core Team}}{{R Core Team}}{2018}]{R}
{R Core Team} (2018).
\newblock {\em R: A Language and Environment for Statistical Computing}.
\newblock Vienna, Austria: R Foundation for Statistical Computing.

\bibitem[\protect\citeauthoryear{Ramsay and Silverman}{Ramsay and
  Silverman}{2005}]{RamsaySilverman2005}
Ramsay, J.~O. and B.~W. Silverman (2005).
\newblock {\em Functional Data Analysis}.
\newblock Springer New York.

\bibitem[\protect\citeauthoryear{Rosen and Thompson}{Rosen and
  Thompson}{2009}]{rosen2009bayesian}
Rosen, O. and W.~K. Thompson (2009).
\newblock A {Bayesian} regression model for multivariate functional data.
\newblock {\em Computational statistics \& data analysis\/}~{\em 53\/}(11),
  3773--3786.

\bibitem[\protect\citeauthoryear{Schafberg and Wussow}{Schafberg and
  Wussow}{2010}]{Schafberg2010KarakulMarsch}
Schafberg, R. and J.~Wussow (2010).
\newblock Julius {K}\"uhn. {D}as {L}ebenswerk eines agrarwissenschaftlichen
  {V}ision\"ars.
\newblock {\em Z\"uchtungskunde\/}~{\em 82\/}(6), 468--484.

\bibitem[\protect\citeauthoryear{Schaffer}{Schaffer}{2021}]{schaffer2021phd}
Schaffer, S.~A. (2021).
\newblock {\em Cytoskeletal dynamics in confined cell migration: experiment and
  modelling}.
\newblock PhD thesis, LMU Munich.
\newblock DOI: 10.5282/edoc.28480.

\bibitem[\protect\citeauthoryear{Scheipl, Staicu, and Greven}{Scheipl
  et~al.}{2015}]{Scheipl2015}
Scheipl, F., A.-M. Staicu, and S.~Greven (2015).
\newblock Functional additive mixed models.
\newblock {\em Journal of Computational and Graphical Statistics\/}~{\em
  24\/}(2), 477--501.

\bibitem[\protect\citeauthoryear{Schiratti, Allassonni{\`e}re, Colliot, and
  Durrleman}{Schiratti et~al.}{2017}]{schiratti2017bayesian}
Schiratti, J.-B., S.~Allassonni{\`e}re, O.~Colliot, and S.~Durrleman (2017).
\newblock A bayesian mixed-effects model to learn trajectories of changes from
  repeated manifold-valued observations.
\newblock {\em The Journal of Machine Learning Research\/}~{\em 18\/}(1),
  4840--4872.

\bibitem[\protect\citeauthoryear{Shi, Styner, Lieberman, Ibrahim, Lin, and
  Zhu}{Shi et~al.}{2009}]{shi2009intrinsic}
Shi, X., M.~Styner, J.~Lieberman, J.~G. Ibrahim, W.~Lin, and H.~Zhu (2009).
\newblock Intrinsic regression models for manifold-valued data.
\newblock In {\em International Conference on Medical Image Computing and
  Computer-Assisted Intervention}, pp.\  192--199. Springer.

\bibitem[\protect\citeauthoryear{Srivastava and Klassen}{Srivastava and
  Klassen}{2016}]{SrivastavaKlassen2016}
Srivastava, A. and E.~P. Klassen (2016).
\newblock {\em Functional and Shape Data Analysis}.
\newblock Springer-Verlag.

\bibitem[\protect\citeauthoryear{St{\"o}cker, Brockhaus, Schaffer, Bronk,
  Opitz, and Greven}{St{\"o}cker et~al.}{2021}]{Stoecker2019FResponseLSS}
St{\"o}cker, A., S.~Brockhaus, S.~A. Schaffer, B.~v. Bronk, M.~Opitz, and
  S.~Greven (2021).
\newblock Boosting functional response models for location, scale and shape
  with an application to bacterial competition.
\newblock {\em Statistical Modelling\/}~{\em 21\/}(5), 385--404.

\bibitem[\protect\citeauthoryear{St{\"o}cker, Pfeuffer, Steyer, and
  Greven}{St{\"o}cker et~al.}{2022}]{stoecker2022efp}
St{\"o}cker, A., M.~Pfeuffer, L.~Steyer, and S.~Greven (2022).
\newblock Elastic full {P}rocrustes analysis of plane curves via {H}ermitian
  covariance smoothing.

\bibitem[\protect\citeauthoryear{Th\"uroff, Goychuk, Reiter, and
  Frey}{Th\"uroff et~al.}{2019}]{Thueroff2019SingleCellCollectiveDynamics}
Th\"uroff, F., A.~Goychuk, M.~Reiter, and E.~Frey (2019, dec).
\newblock Bridging the gap between single-cell migration and collective
  dynamics.
\newblock {\em eLife\/}~{\em 8}, e46842.

\bibitem[\protect\citeauthoryear{Volkmann, St{\"o}cker, Scheipl, and
  Greven}{Volkmann et~al.}{2021}]{Volkmann2020MultiFAMM}
Volkmann, A., A.~St{\"o}cker, F.~Scheipl, and S.~Greven (2021).
\newblock Multivariate functional additive mixed models.
\newblock {\em Statistical Modelling\/}.

\bibitem[\protect\citeauthoryear{Wood, Pya, and S{\"a}fken}{Wood
  et~al.}{2016}]{WoodPyaSaefken2016smoothing}
Wood, S.~N., N.~Pya, and B.~S{\"a}fken (2016).
\newblock Smoothing parameter and model selection for general smooth models.
\newblock {\em Journal of the American Statistical Association\/}~{\em
  111\/}(516), 1548--1563.

\bibitem[\protect\citeauthoryear{Yao, M\"uller, and Wang}{Yao
  et~al.}{2005}]{Yao:etal:2005}
Yao, F., H.~M\"uller, and J.~Wang (2005).
\newblock {Functional data analysis for sparse longitudinal data}.
\newblock {\em Journal of the American Statistical Association\/}~{\em
  100\/}(470), 577--590.

\bibitem[\protect\citeauthoryear{Zeder}{Zeder}{2006}]{zeder2006wildsheep}
Zeder, M.~A. (2006).
\newblock Reconciling rates of long bone fusion and tooth eruption and wear in
  sheep (\textit{{O}vis}) and goat (\textit{{C}apra}).
\newblock {\em Recent advances in ageing and sexing animal bones\/}~{\em 9},
  87--118.

\bibitem[\protect\citeauthoryear{Zhu, Chen, Ibrahim, Li, Hall, and Lin}{Zhu
  et~al.}{2009}]{zhu2009intrinsic}
Zhu, H., Y.~Chen, J.~G. Ibrahim, Y.~Li, C.~Hall, and W.~Lin (2009).
\newblock Intrinsic regression models for positive-definite matrices with
  applications to diffusion tensor imaging.
\newblock {\em Journal of the American Statistical Association\/}~{\em
  104\/}(487), 1203--1212.

\bibitem[\protect\citeauthoryear{Zhu, Li, and Kong}{Zhu
  et~al.}{2012}]{zhu2012multivariate}
Zhu, H., R.~Li, and L.~Kong (2012).
\newblock Multivariate varying coefficient model for functional responses.
\newblock {\em Annals of statistics\/}~{\em 40\/}(5), 2634--2666.

\bibitem[\protect\citeauthoryear{Zhu, Morris, Wei, and Cox}{Zhu
  et~al.}{2017}]{ZhuEtal2017multiFAMM}
Zhu, H., J.~S. Morris, F.~Wei, and D.~D. Cox (2017).
\newblock Multivariate functional response regression, with application to
  fluorescence spectroscopy in a cervical pre-cancer study.
\newblock {\em Computational Statistics and Data Analysis\/}~{\em 111},
  88--101.

\end{thebibliography}


\begin{thebibliography}{}

\bibitem[\protect\citeauthoryear{Brockhaus, Melcher, Leisch, and
  Greven}{Brockhaus et~al.}{2017}]{brockhaus2017boosting}
Brockhaus, S., M.~Melcher, F.~Leisch, and S.~Greven (2017).
\newblock Boosting flexible functional regression models with a high number of
  functional historical effects.
\newblock {\em Statistics and Computing\/}~{\em 27\/}(4), 913--926.

\bibitem[\protect\citeauthoryear{Brockhaus, Scheipl, and Greven}{Brockhaus
  et~al.}{2015}]{BrockhausGreven2015}
Brockhaus, S., F.~Scheipl, and S.~Greven (2015).
\newblock {The Functional Linear Array Model}.
\newblock {\em Statistical Modelling\/}~{\em 15\/}(3), 279--300.

\bibitem[\protect\citeauthoryear{Cederbaum, Pouplier, Hoole, and
  Greven}{Cederbaum et~al.}{2016}]{cederbaum2015functional}
Cederbaum, J., M.~Pouplier, P.~Hoole, and S.~Greven (2016).
\newblock Functional linear mixed models for irregularly or sparsely sampled
  data.
\newblock {\em Statistical Modelling\/}~{\em 16\/}(1), 67--88.

\bibitem[\protect\citeauthoryear{Cederbaum, Scheipl, and Greven}{Cederbaum
  et~al.}{2018}]{cederbaum2018fast}
Cederbaum, J., F.~Scheipl, and S.~Greven (2018).
\newblock Fast symmetric additive covariance smoothing.
\newblock {\em Computational Statistics \& Data Analysis\/}~{\em 120}, 25--41.

\bibitem[\protect\citeauthoryear{Dryden and Mardia}{Dryden and
  Mardia}{2016}]{DrydenMardia2016ShapeAnalysisWithApplications}
Dryden, I.~L. and K.~V. Mardia (2016).
\newblock {\em Statistical Shape Analysis: With Applications in R}.
\newblock John Wiley \& Sons.

\bibitem[\protect\citeauthoryear{Gentle}{Gentle}{2007}]{Gentle2007MatrixAlgebra}
Gentle, J.~E. (2007).
\newblock {\em Matrix algebra: Theory, Computations and Applications in
  Statistics}.
\newblock Springer International Publishing.

\bibitem[\protect\citeauthoryear{Hsing and Eubank}{Hsing and
  Eubank}{2015}]{Hsing2015TheoreticalFoundations}
Hsing, T. and R.~Eubank (2015).
\newblock {\em Theoretical foundations of functional data analysis, with an
  introduction to linear operators}.
\newblock John Wiley \& Sons.

\bibitem[\protect\citeauthoryear{Huckemann, Hotz, and Munk}{Huckemann
  et~al.}{2010}]{Huckemann2010intrinsicMANOVA}
Huckemann, S., T.~Hotz, and A.~Munk (2010).
\newblock Intrinsic {MANOVA} for {Riemannian} manifolds with an application to
  {Kendall's} space of planar shapes.
\newblock {\em IEEE Transactions on Pattern Analysis and Machine
  Intelligence\/}~{\em 32\/}(4), 593--603.

\bibitem[\protect\citeauthoryear{Klingenberg}{Klingenberg}{1995}]{Klingenberg1995RiemannianGeometry}
Klingenberg, W. (1995).
\newblock {\em Riemannian geometry}.
\newblock de Gruyter.

\bibitem[\protect\citeauthoryear{Lee}{Lee}{2018}]{Lee2018RiemannianManifolds}
Lee, J.~M. (2018).
\newblock {\em Introduction to Riemannian manifolds}.
\newblock Springer.

\bibitem[\protect\citeauthoryear{Mardia and Jupp}{Mardia and
  Jupp}{2009}]{MardiaJupp2009DirectionalStatistics}
Mardia, K.~V. and P.~E. Jupp (2009).
\newblock {\em Directional statistics}, Volume 494.
\newblock John Wiley \& Sons.

\bibitem[\protect\citeauthoryear{M{\"u}ller and Yao}{M{\"u}ller and
  Yao}{2008}]{MullerYao2008FAM}
M{\"u}ller, H.-G. and F.~Yao (2008).
\newblock Functional additive models.
\newblock {\em Journal of the American Statistical Association\/}~{\em
  103\/}(484), 1534--1544.

\bibitem[\protect\citeauthoryear{R{\"u}gamer, Brockhaus, Gentsch, Scherer, and
  Greven}{R{\"u}gamer et~al.}{2018}]{rugamer2018boosting}
R{\"u}gamer, D., S.~Brockhaus, K.~Gentsch, K.~Scherer, and S.~Greven (2018).
\newblock Boosting factor-specific functional historical models for the
  detection of synchronization in bioelectrical signals.
\newblock {\em Journal of the Royal Statistical Society: Series C\/}~{\em
  67\/}(3), 621--642.

\bibitem[\protect\citeauthoryear{Scheipl, Gertheiss, and Greven}{Scheipl
  et~al.}{2016}]{Scheipletal2016}
Scheipl, F., J.~Gertheiss, and S.~Greven (2016).
\newblock {Generalized functional additive mixed models}.
\newblock {\em Electronic Journal of Statistics\/}~{\em 10\/}(1), 1455--1492.

\bibitem[\protect\citeauthoryear{Scheipl, Staicu, and Greven}{Scheipl
  et~al.}{2015}]{Scheipl2015}
Scheipl, F., A.-M. Staicu, and S.~Greven (2015).
\newblock Functional additive mixed models.
\newblock {\em Journal of Computational and Graphical Statistics\/}~{\em
  24\/}(2), 477--501.

\bibitem[\protect\citeauthoryear{St{\"o}cker, Brockhaus, Schaffer, Bronk,
  Opitz, and Greven}{St{\"o}cker et~al.}{2021}]{Stoecker2019FResponseLSS}
St{\"o}cker, A., S.~Brockhaus, S.~A. Schaffer, B.~v. Bronk, M.~Opitz, and
  S.~Greven (2021).
\newblock Boosting functional response models for location, scale and shape
  with an application to bacterial competition.
\newblock {\em Statistical Modelling\/}~{\em 21\/}(5), 385--404.

\bibitem[\protect\citeauthoryear{St{\"o}cker, Pfeuffer, Steyer, and
  Greven}{St{\"o}cker et~al.}{2022}]{stoecker2022efp}
St{\"o}cker, A., M.~Pfeuffer, L.~Steyer, and S.~Greven (2022).
\newblock Elastic full {P}rocrustes analysis of plane curves via {H}ermitian
  covariance smoothing.

\bibitem[\protect\citeauthoryear{Tu}{Tu}{2011}]{tu2011manifolds}
Tu, L.~W. (2011).
\newblock {\em An Introduction to Manifolds}.
\newblock Springer.

\bibitem[\protect\citeauthoryear{Volkmann, St{\"o}cker, Scheipl, and
  Greven}{Volkmann et~al.}{2021}]{Volkmann2020MultiFAMM}
Volkmann, A., A.~St{\"o}cker, F.~Scheipl, and S.~Greven (2021).
\newblock Multivariate functional additive mixed models.
\newblock {\em Statistical Modelling\/}.

\bibitem[\protect\citeauthoryear{Yao, M\"uller, and Wang}{Yao
  et~al.}{2005}]{Yao:etal:2005}
Yao, F., H.~M\"uller, and J.~Wang (2005).
\newblock {Functional data analysis for sparse longitudinal data}.
\newblock {\em Journal of the American Statistical Association\/}~{\em
  100\/}(470), 577--590.

\end{thebibliography}

\appendix
\setcounter{section}{18}
\setcounter{figure}{0}
\makeatletter 
\renewcommand{\thefigure}{S\@arabic\c@figure}
\renewcommand{\thetable}{S\@arabic\c@table}
\makeatother

\section{Online Supplementary Material to}
%\vspace{-.5cm}
{\Large Functional additive models on manifolds of planar shapes and forms}\newline
by Almond St\"ocker\new{, Lisa Steyer} and Sonja Greven

\subsection{Geometry of functional forms and shapes}

\subsubsection{Translation, rotation and re-scaling as normal subgroups}
\label{sec:groupactions}
We consider the following invariances of a response curve $y \in\cY$ with respect to the transformations $\cY \rightarrow \cY$ given by 
the group actions of translation $\trl = \{y \overset{\trl_\gamma}{\longmapsto} y + \gamma\,\neutral : \gamma\in\C\}$ with some $\neutral\in\cY\setminus\{\zero\}$ (for curves typically $\neutral: t\mapsto \frac{1}{\|t\mapsto 1\|}$ the real constant function of unit norm), re-scaling $\scl = \{y \overset{\scl_\lambda}{\longmapsto} \lambda \cdot (y - \zero_y) + \zero_y : \lambda \in \R^+\}$ around a reference point $\zero_y \in \C$, and rotation $\rot = \{y \overset{\rot_u}{\longmapsto} u\cdot (y-\zero_y) + \zero_y : u \in \S^1\}$ around $\zero_y$ with $\S^1 = \{u\in\C:|u|=1\} = \{\exp(\omega\i):\omega\in\R\}$ the circle group reflecting counterclockwise rotations by $\omega$ radian measure.
In the literature, the reference point is usually omitted setting $\zero_y = \zero$, which can be done without loss of generality under translation invariance (i.e. in particular for shapes/forms). 
However, keeping other possible combinations of invariances in mind, we explicitly refer to an individual reference point and suggest 
the centroid $\zero_y = \langle {\neutral}\,, y \rangle \neutral$ or, more generally, $\zero_y = a(y)\, \neutral$ for some linear functional $a:\cY \rightarrow \R$.
Assuming $\trl_\gamma(0_{y}) = 0_{\trl_\gamma(y)}$, as for the centroid, the definition of re-scaling and rotation around $\zero_y$ ensures that $\trl_\gamma$, $\scl_\lambda$ and $\rot_u$ commute -- and that $\trl$, $\rot$ and $\scl$ present normal subgroups of the combined group actions $\{y \mapsto \lambda u y + \gamma : \gamma\in\C, \lambda\in\R^+, u\in\S^1\}$ of shape invariances. 
Thus, the combined group actions can be written as the direct product (or direct sum) $\trl \dir \scl \dir \rot = \{\trl_\gamma \circ \scl_\lambda \circ \rot_u: \gamma\in\C, \lambda\in\R^+, u\in\S^1\} \cong \C \times \R^+ \times \S^1$ and invariances with respect to $\trl$, $\scl$, $\rot$ can be modularly accounted for in arbitrary order. $\trl \dir \rot$, for instance, describe rigid motions.
The ultimate response object is then given by the \emph{orbit} $[y]_{G}=\{g(y) : g\in G\}$ (or short $[y]$), i.e. the equivalence class with respect to the direct sum $G$ generated by the chosen combination of $\trl$, $\scl$ and $\rot$. 
$[y]_{\trl \dir \scl \dir\rot}$ is referred to as the \emph{shape} of $y$ and $[y]_{\trl \dir\rot}$ as its \emph{form} or \emph{size-and-shape} \citepsup[compare][]{DrydenMardia2016ShapeAnalysisWithApplications}; studying  $[y]_{\scl}$ is closely related to \emph{directional} data analysis \citepsup{MardiaJupp2009DirectionalStatistics} where the direction of $y$ is analyzed independent of its size $\|y\|$. 

\subsubsection{Parallel transport of form tangent vectors}
\label{sec:transport}

To confirm that the parallel transport in the form space $\cY^*_{/\trl \dir \rot}$ can be carried out via representatives in $\cY$ as described in the main manuscript, we closely follow \citesup{Huckemann2010intrinsicMANOVA} in their derivation of shape parallel transport. Necessary differential geometric notions and statements are briefly introduced in the following before stating the main result in Lemma \ref{lemma:formtransport}. For a more profound introduction, we recommend \citetsup[in particular, p. 21, 43, 93, 124, 337, 402]{Lee2018RiemannianManifolds}, as well as \citetsup{tu2011manifolds} for an illustrative introduction into some of the concepts, and \citetsup[in particular, p.103- 107]{Klingenberg1995RiemannianGeometry} for an introduction in the light of potentially infinite dimensional manifolds.

%	A smooth map $\Phi: \cM \rightarrow \cN$

\newtheorem{theorem}{Theorem}
\newtheorem{corollary}{Corollary}
\newtheorem{lemma}{Lemma}
\newtheorem{proposition}{Proposition}
\newtheorem{remark}{Remark}

\newcommand{\diff}{d}
\newcommand{\ree}{{\operatorname{Re}}}
\newcommand{\imm}{{\operatorname{Im}}}
\newcommand{\cN}{\mathcal{N}}
\newcommand{\cH}{\mathcal{H}}

\newcommand{\pullb}[1]{\widetilde{#1}}
\newcommand{\pullback}[1]{\pullb{\left(#1\right)}}

The entire argument crucially relies on properties known for Riemannian submersions between differentiable manifolds $\widetilde{\cM}$ and $\cM$, which allow to relate the structure of $\cM$ back to  $\widetilde{\cM}$. A \textit{submersion} is a smooth surjective function $\Phi: \pullb{\cM} \rightarrow \cM$, for which also the differential $\diff\Phi: T_{\widetilde{q}} \widetilde{\cM} \rightarrow T_{q} \cM$, $\widetilde{q}\in\widetilde{\cM}$, $q=\Phi{(\widetilde{q})}$, is surjective at each $\widetilde{q}\in\widetilde{\cM}$. For $q \in \cM$, the $\Phi^{-1}(\{q\})$ are submanifolds of $\cM$, and $T_{\widetilde{q}} \widetilde{\cM} = T_{\widetilde{q}}\Phi^{-1}(\{q\}) \oplus H_{\widetilde{q}}\widetilde{\cM}$ can be decomposed into the \textit{vertical space} $T_{\widetilde{q}}\Phi^{-1}(\{q\}) = \ker(d\Phi)$ and its orthogonal complement $H_{\widetilde{q}}\widetilde{\cM}$, the \textit{horizontal space}. When restricted to the horizontal space, $d\Phi\big|_{H_{\widetilde{q}}\widetilde{\cM}} : H_{\widetilde{q}}\widetilde{\cM} \rightarrow T_{q} \cM$ presents a linear isomorphism. A submersion $\Phi$ is called \textit{Riemannian submersion} if $d\Phi\big|_{H_{\widetilde{q}}\widetilde{\cM}}$ is also isometric. It gives rise to an identification $T_{q} \cM \cong H_{\widetilde{q}}\widetilde{\cM}$ of tangent spaces of $\cM$ with horizontal spaces on $\widetilde{\cM}$. Such an identification underlies the presentation of the response geometry in Section \ref{chap_diffgeo} of the main manuscript. \newline
By construction, the quotient map $\Phi: \cY^* \rightarrow \cY^*_{/\trl \dir \rot}, y \mapsto [y]$ presents a Riemannian submersion: 
Since $[p]=\{u p + \gamma \neutral: u\in\S^1, \gamma \in \C\}$ embeds $\S^1\times \R^2$ in $\cY$, and, since the tangent spaces of $\S^1$ and $\R^2$ are well-known, the vertical space is given by $T_p[p] \cong \{\lambda p \i + \gamma\, \neutral : \lambda \in  \R , \gamma \in  \C \} \subset \cY$, with orthogonal complement $H_p\cY^* \cong \{y\in\cY : \langle y, \neutral \rangle = 0,\ \im{\langle y, p \rangle} = 0 \}$ (see also Figure \ref{fig:quotientgeometry} in the main manuscript for an illustration).
While $\Phi$ is obviously surjective, surjectivity and isometry of $d\Phi|_{H_{p}\cY^*}$ can be seen by expressing $\Phi$ in terms of the charts for $\cY^*_{/\trl \dir \rot}$: for a given $p \in [p] \in \cY_{/\trl \dir \rot}^*$, the map $\widetilde{(\cdot)}: [y] \mapsto \widetilde{y}^{\trl \dir \rot}$ provides a chart $\cU_{[p]} \rightarrow \cV_{p}$, i.e.\ an isomorphism from $\cU_{[p]} = \{[y] \in \cY_{/\trl \dir \rot}^* : \langle \widetilde{p}^{\trl}, \widetilde{y}^{\trl}\rangle \neq 0\}$ to $\cV_{p} = \{y\in\cY : \im{\langle p, y\rangle} = 0, \re{\langle p, y\rangle} > 0, \langle \neutral, y\rangle = 0\}$ used to establish the differential structure on $\cY_{/\trl \dir \rot}^*$. Expressed in this chart, $\widetilde{\Phi}(y) = \widetilde{(\cdot)} \circ \Phi(y) =  \widetilde{y}^{\trl \dir \rot}$ is the identity for all $y\in\cV_{p}\subset\Phi^{-1}\left(\cU_{[p]}\right)$. Thus, since $T_p\cV_p = H_p\cY^*$, also $d\widetilde{\Phi}\big|_{H_p\cY^*}$ is the identity, which is obviously an isometric isomorphism. The latter carries over to $d\Phi\big|_{H_p\cY^*}$ independent of the given chart.\newline
The isometric isomorphism $d\Phi\big|_{H_p\cY^*}: H_p\cY^* \rightarrow T_{[p]}\cY_{/\trl \dir \rot}^*$ yields the identification \linebreak 
$T_{[p]}\cY_{/\trl \dir \rot}^* \cong \{y\in\cY : \langle y, \neutral \rangle = 0,\ \im{\langle y, p \rangle} = 0 \}$, which we rely on in the main manuscript.
Unlike there, we denote $d\Phi\big|_{H_y\cY^*}^{-1}: \xi \mapsto \widetilde{\xi}$ also for tangent vectors in the following, to make the identification of $\xi=d\Phi(\pullb{\xi}) \in T_{[y]}\cY_{/\trl \dir \rot}^*$ with the corresponding $\pullb{\xi} \in H_y\cY^*$, usually referred to as \textit{horizontal lift}, explicit in the notation. 
%Via the canonical identification $T_y\cY^*\cong \cY$ we have $T_y[y] \cong \{\lambda y \i + \gamma\, \neutral : \lambda \in  \R , \gamma \in  \C \}$ and $T_{[y]}\cY_{/\trl \dir \rot}^* \cong \{y\in\cY : \langle y, \neutral \rangle = 0,\ \im{\langle y, p \rangle} = 0 \}$. 

The \textit{covariant derivative} (Levi-Civita connection) $\nabla^\cM_V W \in T\cM$ of a vector-field $W\in T\cM$ along a vector-field $V\in T\cM$ provides a derivative of vector-fields in the tangent bundle $T\cM = \{T_q\cM : q\in\cM\} $ of a Riemannian manifold $\cM$. As a derivation in $W$ and a linear function in $V$, $\nabla^\cM$ fulfills a set of properties identifying it as unique generalization of ordinary directional derivatives of the components of $W:q\mapsto W_q\in T_q\cM$ into the direction $V_q \in T_q\cM$. 
For a submanifold $\cM$ of a linear space $\cY$, $\nabla^\cM_V W$ corresponds to the ordinary directional derivative orthogonally projected into $T_q\cM$. 
For the linear case (with $\cM=\cY$), the covariant derivative of 
a vector field $W(\tau) := W_{c{(\tau)}}$ along a differentiable curve $c(\tau)$ is directly given as
\begin{equation}
	\label{covariant_derivative}
	\nabla^\cY_{\dot{c}(t)} W(\tau) = \dot{W}(\tau) = \frac{d}{d\tau} W(\tau).
\end{equation}

In analogy to straight lines, geodesic curves $c(\tau)$ are characterized by 
\begin{equation*}
	\nabla^\cM_{\dot{c}(\tau)} \dot{c}(\tau) = \zero,
\end{equation*}
i.e. curves with zero `second derivative'.
More generally, a vector-field $W$ is called parallel along a curve $c(\tau)$ if
\begin{equation}
	\label{parallel}
	\nabla^\cM_{\dot{c}(\tau)} W(\tau) = \zero.
\end{equation}
According to that the parallel transport $\Transp_{q, q'}^c : T_q\cM \rightarrow T_{q'}\cM$ along a curve $c: [\tau_0, \tau_1] \rightarrow \cM$ between $c(\tau_0) = q,\ c(\tau_1) = q' \in \cM$ is defined to map tangent vectors $\varepsilon = W(\tau_0) \mapsto \varepsilon' = W(\tau_1)$ for some  vector field $W$ parallel along $c$ (fulfilling Equation \ref{parallel}). If the curve $c$ is clear from context, we omit it in the notation. This is especially the case in the following, where $c$ can be chosen as the unique geodesic between two forms $[p]$ and $[p']$ with $\langle p, p' \rangle\neq 0$, yielding a canonical connection (in this case, $c$ corresponds to the line between $p$ and the aligned $\tilde{p}'$; for $\langle p, p'\rangle=0$, by contrast, it is easy to see that for each $u\in\S^1$ the line between $p$ and $up'$ corresponds to a different geodesic; the second case can, however, be neglected).

The possibility to effectively carry out the parallel transport between forms $[p], [p']$ on suitable representatives $p, p'\in\cY^*$ stems from the following theorem and subsequent Corollary \citepsup[compare, e.g,][p. 103-105]{Klingenberg1995RiemannianGeometry}.

\begin{theorem}
	\label{submersion_theorem}
	Let $\Phi: \pullb{\cM} \rightarrow \cM$ be a Riemannian submersion between manifolds $\widetilde{\cM}$ and $\cM$, and $V,W \in T\cM$ vector-fields. Then
	\begin{equation*}
		\nabla^{\pullb{\cM}}_{\pullb{V}} \pullb{W} = \pullback{\nabla^\cM_{V}W} + \frac{1}{2}[\pullb{V}, \pullb{W}]^\perp
	\end{equation*}
	where $\pullb{Z} \in H\pullb{\cM}$ denotes the horizontal lift of $Z  \in T\cM$ to the horizontal bundle $H\pullb{\cM} = \{H_{\pullb{p}}\pullb{\cM} : \pullb{p} \in \pullb{\cM} \}$, $Z= \diff\Phi \left(\pullb{Z}\right)$, and $[\pullb{V}, \pullb{W}]^\perp$ is the the Lie bracket $[\pullb{V}, \pullb{W}] = \pullb{V} \circ \pullb{W} - \pullb{W} \circ \pullb{V}$ orthogonally projected $(\cdot)^\perp: T\pullb{\cM}\rightarrow ker\left(\diff\Phi\right)$ to the vertical space.
\end{theorem}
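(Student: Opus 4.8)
The plan is to prove O'Neill's formula via the Koszul identity together with the three structural properties of \emph{basic} vector fields (horizontal lifts). Recall that $\pullb V, \pullb W$ are basic: they are $\Phi$-related to $V, W$, and by the Riemannian submersion property their inner product is a pullback, $\langle \pullb V, \pullb W\rangle = \langle V, W\rangle\circ\Phi$, a function constant along fibres. I would split $\nabla^{\pullb\cM}_{\pullb V}\pullb W$ into its horizontal part $(\cdot)^{\mathcal H}$ and its vertical part $(\cdot)^\perp$ (as in the statement, $\perp$ denotes projection to $\ker(\diff\Phi)$), and identify each separately by testing against an arbitrary horizontal lift and against an arbitrary vertical field.

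For the horizontal part, I would test against any basic field $\pullb U$ and apply Koszul's formula
\begin{multline*}
2\langle \nabla^{\pullb\cM}_{\pullb V}\pullb W, \pullb U\rangle = \pullb V\langle \pullb W, \pullb U\rangle + \pullb W\langle \pullb V, \pullb U\rangle - \pullb U\langle \pullb V, \pullb W\rangle \\ + \langle[\pullb V,\pullb W],\pullb U\rangle - \langle[\pullb V,\pullb U],\pullb W\rangle - \langle[\pullb W,\pullb U],\pullb V\rangle,
\end{multline*}
alongside the same formula downstairs for $2\langle\nabla^\cM_V W, U\rangle$. The three derivative terms match because $\langle\pullb W,\pullb U\rangle = \langle W,U\rangle\circ\Phi$ and $\pullb V(f\circ\Phi)=(Vf)\circ\Phi$; the bracket terms match because, each of $\pullb U,\pullb W,\pullb V$ being horizontal, only the horizontal part of each bracket contributes, and the horizontal part of a bracket of basic fields is the horizontal lift of the downstairs bracket, $[\pullb V,\pullb U]^{\mathcal H}=\pullback{[V,U]}$, etc. Hence $\langle\nabla^{\pullb\cM}_{\pullb V}\pullb W,\pullb U\rangle=\langle\nabla^\cM_V W,U\rangle\circ\Phi=\langle\pullback{\nabla^\cM_V W},\pullb U\rangle$ for all $\pullb U$, so the horizontal component equals $\pullback{\nabla^\cM_V W}$.

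For the vertical part, I would test against an arbitrary vertical field $E\in\ker(\diff\Phi)$ and apply Koszul again. The first two derivative terms vanish since $\langle\pullb V,E\rangle=\langle\pullb W,E\rangle=0$; the third vanishes because $\langle\pullb V,\pullb W\rangle$ is a pullback from $\cM$ and $E$ is tangent to the fibres, so $E\langle\pullb V,\pullb W\rangle=\diff\Phi(E)\langle V,W\rangle=0$. The mixed bracket terms $\langle[\pullb V,E],\pullb W\rangle$ and $\langle[\pullb W,E],\pullb V\rangle$ vanish because the bracket of a basic field with a vertical field is again vertical ($\diff\Phi$ sends $[\pullb V, E]$ to $[V,0]=0$), hence orthogonal to the horizontal $\pullb W,\pullb V$. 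Only $\langle[\pullb V,\pullb W],E\rangle$ survives, giving $2\langle\nabla^{\pullb\cM}_{\pullb V}\pullb W,E\rangle=\langle[\pullb V,\pullb W]^\perp,E\rangle$, so the vertical component is $\tfrac12[\pullb V,\pullb W]^\perp$. Summing the two components yields the claimed identity.

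The main obstacle is establishing the three structural lemmas cleanly --- the pullback/isometry identity for inner products of basic fields, the horizontal-lift identity $[\pullb V,\pullb W]^{\mathcal H}=\pullback{[V,W]}$, and the verticality of the bracket of a basic and a vertical field --- all of which rest on naturality of the Lie bracket under $\Phi$-relatedness. In the infinite-dimensional Hilbert-manifold setting relevant here these rely on the smooth-manifold machinery of \citetsup{Klingenberg1995RiemannianGeometry} rather than finite-dimensional chart computations, so I would invoke that reference for their validity; the Koszul calculation above is then purely formal and identical to the finite-dimensional case.
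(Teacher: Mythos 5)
Your argument is correct, but it is worth noting that the paper does not prove this theorem at all: it is O'Neill's formula for Riemannian submersions, and the authors simply cite \citetsup[p.~103--105]{Klingenberg1995RiemannianGeometry} for it before moving on to Corollary 1 and Lemma 1, which is where their actual work lies. What you have written is the standard self-contained proof: decompose $\nabla^{\widetilde{\cM}}_{\widetilde{V}}\widetilde{W}$ into horizontal and vertical components and identify each via the Koszul formula, using that inner products of basic fields are pullbacks, that the horizontal part of a bracket of basic fields is the lift of the downstairs bracket, and that the bracket of a basic field with a vertical field is vertical (all consequences of $\Phi$-relatedness). The computation checks out in both cases: against a basic test field $\widetilde{U}$ all six Koszul terms reduce to their counterparts on $\cM$, and against a vertical test field $E$ only $\langle[\widetilde{V},\widetilde{W}],E\rangle$ survives, yielding the factor $\tfrac12$. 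The one place where the citation is doing real work rather than decorative work is the infinite-dimensional setting: the Koszul formula characterizes the Levi-Civita connection only for a \emph{strong} Riemannian metric on a Hilbert manifold (so that the metric identifies each tangent space with its dual), which holds here since $\cY$ is a Hilbert space; you correctly flag that this, together with the three structural lemmas, is what you import from the reference. In short, your route supplies the proof the paper outsources, at the cost of a longer exposition; the paper's route buys brevity by treating the formula as classical, which for this result is entirely defensible.
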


\begin{corollary}
	\label{corollary}
	Let $\Phi: \pullb{\cM} \rightarrow \cM$ be a Riemannian submersion and $c: (\tau_0, \tau_1) \rightarrow \cM$ a smooth curve on $\cM$ with $\pullb{c}: (\tau_0, \tau_1) \rightarrow \pullb{\cM}$ its horizontal lift, i.e., $\Phi\circ\pullb{c} = c$, $d\Phi\circ\dot{\pullb{c}} = \dot{c}$ and $\dot{\pullb{c}}(\tau)\in H_{\pullb{c}(\tau)}\pullb{\cM}$ horizontal (i.e. $\dot{\pullb{c}} = \pullb{\dot{c}}$). Then
	\begin{enumerate}[label = \roman*)]
		\item \label{corollary_parallel} a vector-field $W = \diff \Phi \circ \pullb{W} \in T \cM$ along $c$ is parallel if and only if
		\begin{equation*}
			\nabla_{\dot{\pullb{c}}}^{\pullb{\cM}} \pullb{W} = \frac{1}{2}[\dot{\pullb{c}}, \pullb{W}]^{\perp}
		\end{equation*}
		for the horizontal vector-field $\pullb{W} \in T\pullb{\cM}$ along $\pullb{c}$.
		\item \label{corollary_geodesic}$c$ is a geodesic if and only if $\pullb{c}$ is a geodesic.
	\end{enumerate}
\end{corollary}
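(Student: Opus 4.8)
The plan is to derive both statements directly from the fundamental submersion identity of Theorem~\ref{submersion_theorem}, exploiting that its two summands are orthogonal — one horizontal, one vertical — together with the fact, established in the preceding discussion of the Riemannian submersion $\Phi:\cY^*\to\cY^*_{/\trl\dir\rot}$, that the horizontal lift $d\Phi\big|_{H\widetilde{\cM}}^{-1}$ is a linear isomorphism and hence injective. Since Theorem~\ref{submersion_theorem} is phrased for vector fields on $\cM$, whereas here $W$ and $\dot c$ live only along $c$, I would first extend them to local vector fields on a neighborhood and note that $\nabla^\cM_{\dot c}W$ depends only on values along $c$, while the bracket term $\frac{1}{2}[\widetilde V,\widetilde W]^\perp$ equals the (tensorial) O'Neill integrability tensor $A_{\widetilde V}\widetilde W$ and is therefore likewise independent of the chosen extension; this legitimizes applying the identity pointwise along $\widetilde c$.

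For part~\ref{corollary_parallel}, I would apply Theorem~\ref{submersion_theorem} with $V=\dot c$, whose horizontal lift is $\widetilde V=\dot{\widetilde c}$ by hypothesis, and with the given $W=d\Phi\circ\widetilde W$, obtaining
\begin{equation*}
	\nabla^{\widetilde{\cM}}_{\dot{\widetilde c}}\widetilde W = \widetilde{\nabla^\cM_{\dot c}W} + \frac{1}{2}[\dot{\widetilde c},\widetilde W]^\perp .
\end{equation*}
The first summand is horizontal and the second vertical, so this is precisely the orthogonal decomposition of $\nabla^{\widetilde{\cM}}_{\dot{\widetilde c}}\widetilde W$ into its horizontal and vertical parts. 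By definition $W$ is parallel along $c$ iff $\nabla^\cM_{\dot c}W=\zero$; since the horizontal lift is injective, this is equivalent to the vanishing of the horizontal summand $\widetilde{\nabla^\cM_{\dot c}W}$, i.e.\ to $\nabla^{\widetilde{\cM}}_{\dot{\widetilde c}}\widetilde W=\frac{1}{2}[\dot{\widetilde c},\widetilde W]^\perp$, which is the claim.

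For part~\ref{corollary_geodesic}, I would specialize to $W=\dot c$, so that $\widetilde W=\dot{\widetilde c}$, and use that the Lie bracket of a vector field with itself vanishes, $[\dot{\widetilde c},\dot{\widetilde c}]=\zero$ (interpreted tensorially via $A_{\dot{\widetilde c}}\dot{\widetilde c}=\zero$). Theorem~\ref{submersion_theorem} then collapses to
\begin{equation*}
	\nabla^{\widetilde{\cM}}_{\dot{\widetilde c}}\dot{\widetilde c} = \widetilde{\nabla^\cM_{\dot c}\dot c} .
\end{equation*}
Injectivity of the horizontal lift gives $\nabla^{\widetilde{\cM}}_{\dot{\widetilde c}}\dot{\widetilde c}=\zero$ iff $\nabla^\cM_{\dot c}\dot c=\zero$, that is, $\widetilde c$ is a geodesic iff $c$ is.

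The only genuinely delicate point is the extension/tensoriality issue flagged above: the main obstacle is justifying cleanly that the self-bracket vanishes and that the term $\frac{1}{2}[\,\cdot\,,\cdot\,]^\perp$ is even well-defined for fields merely defined along $\widetilde c$. This is resolved by recognizing $\frac{1}{2}[\,\cdot\,,\cdot\,]^\perp$ on horizontal fields as the O'Neill tensor $A$, which is $C^\infty(\widetilde{\cM})$-bilinear and hence pointwise; given this, everything else is a routine application of the theorem together with the linear-isomorphism property of $d\Phi\big|_{H\widetilde{\cM}}$.
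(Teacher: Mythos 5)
Your proof is correct. Note that the paper itself does not prove this corollary at all: both Theorem~\ref{submersion_theorem} and the corollary are imported from the literature (Klingenberg, pp.~103--105), so there is no in-paper argument to compare against. Your derivation is the standard one: apply the submersion identity with $V=\dot c$, observe that $\pullback{\nabla^\cM_{\dot c}W}$ is horizontal while $\tfrac{1}{2}[\dot{\pullb{c}},\pullb{W}]^\perp$ is vertical, and use injectivity of the horizontal lift to conclude; part~\ref{corollary_geodesic} then follows by taking $W=\dot c$ and killing the self-bracket. You also correctly flag and resolve the only genuine subtlety --- that the theorem is stated for global fields while $\dot c$ and $W$ live only along $c$ --- by identifying $\tfrac{1}{2}[\cdot,\cdot]^\perp$ on horizontal fields with the tensorial (hence pointwise) O'Neill integrability tensor, whose antisymmetry also cleanly justifies $A_{\dot{\pullb{c}}}\dot{\pullb{c}}=\zero$ in part~\ref{corollary_geodesic}. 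This fills a gap the paper leaves to the cited reference rather than deviating from anything the paper does.
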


While \ref{corollary_parallel} yields the basis for confirming the parallel transport computation, \ref{corollary_geodesic} is the underlying fact behind the identification of geodesics in form and shape spaces with geodesics of suitably aligned representatives. %A proof of Theorem \ref{submersion_theorem} can be fund, e.g., in Klingenberg 1995. 
Note that, while \citetsup{Huckemann2010intrinsicMANOVA} generally restrict their discussion to finite dimensional manifolds, the theorem does in fact not have this restriction. Based on these preparations, we can now verify the presented parallel transport along the lines of \citetsup{Huckemann2010intrinsicMANOVA}, but for forms rather than shapes and explicitly based on a separable Hilbert space $\cY$ rather than on $\C^k$. Note that, while identifying $H_p\cY^* \cong T_{[p]}\cY_{/\trl \dir \rot}^*$ and $\varepsilon \cong \diff \Phi(\varepsilon)$ in the main manuscript, they are distinguished here for clarity.

\begin{lemma}
	\label{lemma:formtransport}
	Let $p, p' \in \cY^*$ with $\langle p, p' \rangle \neq 0$ centered and mutually rotation aligned representatives of forms $[p], [p'] \in \cY^*_{/\trl \dir \rot}$ (i.e. $p=\widetilde{p}^{{\trl \dir \rot}}$ for notational simplicity and $p'$ accordingly), let $\varepsilon \in T_{[p]}\cY^*_{/\trl \dir \rot}$ with horizontal lift $\tilde{\varepsilon}\in H_p \cY^*$, and let $\Phi:y \mapsto [y]$ denote the quotient map. Then  
	\begin{equation}
		\label{transport_equation}
		\Transp_{[p], [p']}\left(\varepsilon\right) = \diff\Phi\left( \tilde{\varepsilon} - \im{\langle p'/\|p'\|, \tilde{\varepsilon}\rangle} \frac{p/\|p\| + p'/\|p'\|}{1+\langle p/\|p\|, p'/\|p'\|\rangle} \i \right)
	\end{equation}
	implements the form parallel transport via its horizontal lift. % via a suitable horizontal vector-field.
	
\end{lemma}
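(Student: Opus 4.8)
The plan is to verify \eqref{transport_equation} by exhibiting the horizontal vector field along the connecting geodesic whose terminal value is the claimed expression, and checking that it is parallel via part~\ref{corollary_parallel} of Corollary~\ref{corollary} applied to the quotient map $\Phi: y \mapsto [y]$. The key structural simplification I would exploit throughout is that $\cY^*$ is an open subset of the flat space $\cY$, so its covariant derivative is the ordinary derivative \eqref{covariant_derivative} and its geodesics are affine segments; hence for any field $\tilde W$ along a curve one has $\nabla^{\cY^*}_{\dot{\tilde c}}\tilde W = \dot{\tilde W}$.

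First I would pin down the horizontal lift of the geodesic joining $[p]$ and $[p']$. Taking $\tilde c(\tau) = (1-\tau)\,p + \tau\,p'$ for $\tau\in[0,1]$, horizontality is immediate: $\langle \dot{\tilde c}(\tau), \neutral\rangle = \langle p'-p, \neutral\rangle = 0$ since $p, p'$ are centered, and $\im{\langle \dot{\tilde c}(\tau), \tilde c(\tau)\rangle} = 0$ because $\langle p, p'\rangle$ is real for mutually aligned representatives, so every term entering $\langle p'-p, (1-\tau)p + \tau p'\rangle$ is real. Thus $\dot{\tilde c} = p'-p$ is constant and horizontal, and by part~\ref{corollary_geodesic} of Corollary~\ref{corollary}, $c = \Phi\circ\tilde c$ is the geodesic between the two forms.

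Next I would propose the horizontal field $\tilde W(\tau)$ along $\tilde c$ with $\tilde W(0) = \tilde\varepsilon$, built so as to leave untouched the flat part of $\tilde\varepsilon$ (its translation components and the part of the centered space orthogonal to the rotation generator) while rotating only the single coordinate conjugate to the vertical rotation direction $\i\,\tilde c(\tau)$, exactly as on a sphere. A short bookkeeping computation at $\tau=1$, using that $\langle p, p'\rangle$ is real, should reproduce the right-hand side of \eqref{transport_equation}; as a consistency check one verifies directly that this endpoint value is horizontal at $p'$, i.e.\ $\langle \tilde W(1), \neutral\rangle = 0$ and $\im{\langle \tilde W(1), p'\rangle} = 0$. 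It then remains to confirm the parallel condition of part~\ref{corollary_parallel} of Corollary~\ref{corollary}, which here reads $\dot{\tilde W}(\tau) = \tfrac{1}{2}\,[\dot{\tilde c}, \tilde W]^\perp$, the orthogonal projection of the ambient Lie bracket onto the vertical space $\{\lambda\,\i\,\tilde c(\tau) + \gamma\,\neutral : \lambda\in\R,\ \gamma\in\C\}$. Because the translation directions $\neutral, \i\,\neutral$ are globally flat and orthogonal to all fields in play, the $\gamma\,\neutral$ part of the projection drops out and only the component along $\i\,\tilde c(\tau)$ survives, reducing the identity to the same O'Neill-type bracket computation that underlies the sphere/shape transport in \citetsup{Huckemann2010intrinsicMANOVA}. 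Matching this vertical projection with $\dot{\tilde W}(\tau)$ establishes parallelism, whence $\Transp_{[p],[p']}(\varepsilon) = \diff\Phi(\tilde W(1))$ equals \eqref{transport_equation} under the isometric identification $\diff\Phi\big|_{H}$.

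I expect the crux to be the explicit evaluation and vertical projection of this Lie bracket, compounded by the fact that scale is retained along form geodesics, so that $\|\tilde c(\tau)\|$ is not constant. Consequently the normalized rotation generator $\i\,\tilde c(\tau)/\|\tilde c(\tau)\|$ varies with $\tau$, and its derivative must be tracked when differentiating $\tilde W$ and when comparing $\dot{\tilde W}(\tau)$ against the projected bracket; this is precisely where the form case departs from the more symmetric shape (sphere) case and where the bookkeeping is most error-prone.
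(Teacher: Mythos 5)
Your plan follows the paper's proof essentially step for step: horizontally lift the connecting geodesic, posit a candidate horizontal field that acts sphere-like only in the coordinate conjugate to the rotation generator while leaving the rest of $\tilde{\varepsilon}$ fixed, and verify parallelism through Corollary \ref{corollary} \ref{corollary_parallel} by computing the vertical projection of the Lie bracket via the 1-forms dual to the rotation and translation directions (Cartan's formula plus $d\omega^{\rot}(V,W)=2\im{\langle V,W\rangle}$, with the translation part vanishing identically). The one practical difference is that the paper deliberately trades your affine lift $(1-\tau)p+\tau p'$ for a unit-angular-speed polar form $\pullb{c}(\tau)=\rho(\tau)\gamma(\tau)$ with $\gamma(\tau)$ a unit vector in the plane spanned by $p/\|p\|$ and the orthonormalized $p'$, which factors out exactly the varying norm $\|\pullb{c}(\tau)\|$ you flag as the error-prone part -- $\rho$ and $\dot{\rho}$ are real and drop out of every $\omega^{\rot}$ evaluation -- so when carrying out your bookkeeping you would likely want to make the same substitution.
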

\begin{proof}
	For $\langle p, p' \rangle \neq 0$ aligned and centered, the unique unit-speed geodesic (uniqueness can be seen using Corollary \ref{corollary} \ref{corollary_geodesic}) between $[p]$ and $[p']$ is described by $\tau \rightarrow [p + \tau \frac{p'-p}{\|p'-p\|}]$. Yet, to simplify the argument, we choose a unit-angular speed parameterization instead. It takes the form $c(\tau) := [\pullb{c}(\tau)] :=  [\rho(\tau) \gamma(\tau)]$ with $\gamma(\tau) = \cos(\tau) \beta + \sin(\tau) \beta'$ where $\beta = \frac{p}{\|p\|}$ and $\beta' = \frac{p' - \langle \beta, p'\rangle \beta}{\|p' - \langle \beta, p'\rangle \beta\|}=\frac{\frac{p'}{\|p'\|} - \langle \frac{p}{\|p\|}, \frac{p'}{\|p'\|}\rangle \frac{p}{\|p\|}}{\|\frac{p'}{\|p'\|} - \langle \frac{p}{\|p\|}, \frac{p'}{\|p'\|}\rangle \frac{p}{\|p\|}\|}$ form an orthonormal basis of the real plain containing the horizontal geodesic.
	With $\widetilde{c}(0) = p$ and $\widetilde{c}(\arccos \langle \frac{p}{\|p\|}, \frac{p'}{\|p'\|}\rangle)=p'$, $\widetilde{c}(\tau)$ describes the line connecting $p$ and $p'$ in polar coordinates.
	$[\gamma(\tau)]_{\trl \dir \rot \dir \scl}$ corresponds to the shape geodesic between $[p]_{\trl \dir \rot \dir \scl}$ and $[p']_{\trl \dir \rot \dir \scl}$, and $\rho(\tau) = \|\pullb{c}(\tau)\|$ reflects the size of the geodesic $c(\tau)$. An explicit definition of $\rho(\tau)$ is not needed.
	%which is assumed to be geodesic (an explicit definition of $\rho(\tau)$ ).  
	
	Due to the alignment of $p$ and $p'$,  $\dot{\gamma}(\tau)$ and also
	\begin{equation}
		\label{W}
		\pullb{W}(\tau) := \tilde{\varepsilon} + \im{\langle \beta', \tilde{\varepsilon}\rangle} \left(\dot{\gamma}(\tau) - \beta' \right) \i
	\end{equation}
	are horizontal along $\pullb{c}(\tau)$, i.e. $\widetilde{W}(\tau) \in H_{\widetilde{c}(\tau)} \cY^*$ for each $\tau$, if $\tilde{\varepsilon}$ is horizontal, i.e. if $\im{\langle p, \tilde{\varepsilon}\rangle} = \langle \neutral, \tilde{\varepsilon} \rangle =  0$. More concretely, this holds as
	\begin{align*}
		\im{\langle \pullb{c}(\tau), \pullb{W}(\tau) \rangle} 
		&= \rho(\tau)\; \left( \im{\langle \gamma(\tau), \tilde{\varepsilon} \rangle}  + \im{\langle \beta', \tilde{\varepsilon}\rangle} \re{\langle \gamma(\tau), \dot{\gamma}(\tau) - \beta' \rangle} \right)\\  
		&\overset{\tilde{\varepsilon} \text{ horizontal}}{=} \rho(\tau)\; \left(\sin(\tau) \im{\langle \beta', \tilde{\varepsilon} \rangle} + \im{\langle \beta', \tilde{\varepsilon}\rangle} (0 - \sin(\tau) \underbrace{\|\beta'\|^2}_{=1}) \right) = 0
	\end{align*}
	and, obviously, also $\langle \neutral, \pullb{W}(\tau) \rangle = 0$ as this is the case for all involved vectors.
	Moreover, $\pullb{W}$ is smooth and $\tilde{\varepsilon} \mapsto \pullb{W}\left(\arccos \langle \frac{p}{\|p\|}, \frac{p'}{\|p'\|}\rangle\right)$ yields the transport formulated in Equation \eqref{transport_equation}, which follows from basic trigonometric relations.
	In detail, it follows from plugging
	\begin{align*}
		\dot{\gamma}\left(\arccos \langle \frac{p}{\|p\|}, \frac{p'}{\|p'\|}\rangle\right) -\beta' &= \langle \frac{p}{\|p\|}, \frac{p'}{\|p'\|}\rangle\ \beta' - \sqrt{1-\langle \frac{p}{\|p\|}, \frac{p'}{\|p'\|}\rangle^2}\ \beta - \beta'\\ 
		&=  \left(\langle \frac{p}{\|p\|}, \frac{p'}{\|p'\|}\rangle - 1\right) \overbrace{\frac{\frac{p'}{\|p'\|} - \langle \frac{p}{\|p\|}, \frac{p'}{\|p'\|}\rangle \frac{p}{\|p\|}}{\sqrt{1-\langle \frac{p}{\|p\|}, \frac{p'}{\|p'\|}\rangle^2}}}^{\beta'}\  - \sqrt{1-\langle \frac{p}{\|p\|}, \frac{p'}{\|p'\|}\rangle^2}\ \frac{p}{\|p\|}\\
		&=  \frac{\left(\langle \frac{p}{\|p\|}, \frac{p'}{\|p'\|}\rangle - 1\right) \frac{p'}{\|p'\|}}{\sqrt{1-\langle \frac{p}{\|p\|}, \frac{p'}{\|p'\|}\rangle^2}} \\
		&\qquad +  \frac{ - \langle \frac{p}{\|p\|}, \frac{p'}{\|p'\|}\rangle^2 \frac{p}{\|p\|} + \langle \frac{p}{\|p\|}, \frac{p'}{\|p'\|}\rangle \frac{p}{\|p\|} - \left(1-\langle \frac{p}{\|p\|}, \frac{p'}{\|p'\|}\rangle^2\right) \frac{p}{\|p\|}}{\sqrt{1-\langle \frac{p}{\|p\|}, \frac{p'}{\|p'\|}\rangle^2}} \\
		&= \frac{-\left(1-\langle \frac{p}{\|p\|}, \frac{p'}{\|p'\|}\rangle\right) \left(\frac{p'}{\|p'\|} +  \frac{p}{\|p\|}\right)}{\sqrt{1-\langle \frac{p}{\|p\|}, \frac{p'}{\|p'\|}\rangle^2}} %\\
		%&=  -
		%\sqrt{1-\langle \frac{p}{\|p\|}, \frac{p'}{\|p'\|}\rangle}\ 
		%\frac{ \frac{p'}{\|p'\|} +  \frac{p}{\|p\|}}{\sqrt{1+\langle \frac{p}{\|p\|}, \frac{p'}{\|p'\|}\rangle}} \\
	\end{align*}
	and 
	\begin{align*}
		\im{\langle \beta', \tilde{\varepsilon} \rangle} \overset{\text{$\tilde{\varepsilon}$ horizontal}}{=} \frac{\im{\langle p', \tilde{\varepsilon}\rangle}}{\sqrt{1-\langle \frac{p}{\|p\|}, \frac{p'}{\|p'\|}\rangle^2}}
	\end{align*}
	into the definition of $W(\tau)=d\Phi (\widetilde W(\tau))$ using \eqref{W}.\\

	\newcommand{\imform}{\diff^\mathcal{I}}
	\newcommand{\reform}{\diff^\mathcal{R}}
	\newcommand{\imvec}{\partial_\mathcal{I}}
	\newcommand{\revec}{\partial_\mathcal{R}}

	Hence, due to Corollary \ref{corollary} \ref{corollary_parallel}, we mainly need to show
	\begin{equation}
		\label{targetequation}
		\nabla_{\dot{\pullb{c}}}^{\cY^*} \pullb{W} = \frac{1}{2} [\dot{\pullb{c}}, \pullb{W}]^\perp 
	\end{equation}
	where the left-hand side may directly be computed as
	\begin{align*}
		\nabla_{\pullb{\dot{c}}(\tau)}^{\cY^*} \pullb{W}(\tau) &\overset{\eqref{covariant_derivative}}{=} \dot{\pullb{W}}(\tau) \overset{\eqref{W}}{=} -\im{\langle \beta', \tilde{\varepsilon}\rangle} \gamma(\tau)  \i
	\end{align*}
	since $\ddot{\gamma}(\tau) = - \gamma(\tau)$.\\ 
	
	On the right-hand side, the orthogonal projection of a vector-field $V(\tau):= V_{\pullb{c}(\tau)} \in T_{\pullb{c}(\tau)}\cY^*$ along $\pullb{c}(\tau)$ into the vertical spaces (of which $\{ \gamma(\tau), \neutral, \i\neutral\}$ constitute an orthonormal basis) is given by
	\begin{align*}
%		\label{vertical}
		V^\perp(\tau) &= \frac{\re{\langle \i\, \pullb{c}(\tau), V(\tau) \rangle}}{\|\pullb{c}(\tau)\|^2}  \pullb{c}(\tau) \i + \langle \neutral, V(\tau) \rangle\, \neutral \nonumber\\
		&= \frac{\omega^{\rot}(V(\tau))}{\rho(\tau)} \gamma(\tau) \i  + \omega^{\trl}(V(\tau))\, \neutral
	\end{align*}
	with the 1-forms $\omega^{\rot}$ and $\omega^{\trl}$ defined as
	\begin{align*}
		\omega^{\rot}(V_p) &:= \re{\langle \i\, p, V_p \rangle} = \re{-\i\, \langle p, V_p \rangle} \\ 
		&= \re{-\i\, \left(\re{\langle p, V_p \rangle}+\im{\langle p, V_p \rangle} \i\right)}\\
		&= \re{-\i\, \re{\langle p, V_p \rangle} + \im{\langle p, V_p \rangle}}\\
		&= \im{\langle p, V_p \rangle}.
		%			&= \|p\|\, \imform(V_p)
	\end{align*}
	and $\omega^{\trl}(V_p) = \langle \neutral, V_p \rangle$ for $p\in\cY^*$.
	
	Thus, to confirm \eqref{targetequation} and complete the proof, it remains to show $\omega^{\rot}([\dot{\pullb{c}}, \pullb{W}]) = - 2 \im{\langle\beta', \tilde{\varepsilon}\rangle} \rho$ and $\omega^{\trl}([\dot{\pullb{c}}, \pullb{W}]) = 0$. For this, we use some statements on the exterior derivative $d\omega$ of a 1-form $\omega$ subsumed in the following auxiliary lemma (proven later):  
	
	\begin{lemma}
		\label{auxiliary}
		Let $V, W$ be smooth vector-fields.
		\begin{enumerate}[label = \roman*)]
			\item \label{auxiliary:lie} For any smooth 1-form $\omega$ it holds that 
			%		\begin{equation*}
			$\omega\left([V,W]\right) = V\left(\omega\left(W\right)\right) - W\left(\omega\left(V\right)\right) - \diff \omega \left(V,W\right)$.
			\item \label{auxiliary:omega} For $\omega^{\rot}$ defined above, $d\omega^{\rot}(V,W) = 2\, \im{\langle V, W\rangle}$.
			\item For $\omega^{\trl}$ defined above, $d\omega^{\trl}(V,W) = 0$.
		\end{enumerate}
%		\end{equation*}
	\end{lemma}
	
	Using further that
	\begin{align*}
		\omega^{\rot}\left(\dot{\pullb{c}}(\tau)\right)
		&= \im{\langle \gamma(\tau), \rho(\tau)\, \dot{\gamma}(\tau)\rangle} + \im{\langle \gamma(\tau), \dot{\rho}(\tau)\, \gamma(\tau)\rangle} = 0
	\end{align*}
	and  
	\begin{align*}
		\omega^{\rot}\left(\pullb{W}(\tau)\right)
		&= \rho(\tau) \im{\langle \gamma(\tau), \tilde{\varepsilon}\rangle} + \rho(\tau) \im{\langle\beta', \tilde{\varepsilon}\rangle}\underbrace{\langle \gamma(\tau),  \dot{\gamma}(\tau) - \beta'\rangle}_{\in\R}\\
		&= \rho(\tau)\left(\sin(\tau) \im{\langle \beta', \tilde{\varepsilon}\rangle} - \im{\langle \beta', \tilde{\varepsilon}\rangle} \sin(\tau)\right) = 0
	\end{align*}
	 we then have 
	\begin{align*}
		\omega^{\rot}([\dot{\pullb{c}}(\tau), \pullb{W}(\tau)]) &= 
		\underbrace{\dot{\pullb{c}}(\tau) \left( \omega^{\rot}(\pullb{W}(\tau)) \right)}_{=0} - 
		\underbrace{\pullb{W}(\tau) \left( \omega^{\rot}(\dot{\pullb{c}}(\tau)) \right)}_{=0} -
		d\omega^{\rot} \left(\dot{\pullb{c}}(\tau), \pullb{W}(\tau) \right)\\
			&= -2\,\big( \im{\langle \rho(\tau) \dot{\gamma}(\tau), \pullb{W}(\tau)\rangle} + \underbrace{\im{\langle \dot{\rho}(\tau) \gamma(\tau), \pullb{W}(\tau)\rangle}}_{=0 \text{  ($\pullb{W}$ horizontal, $\rho$ and $\dot{\rho}$ real)}} \big)\\
			&= -2\rho(\tau)\, \im{\langle \cos(\tau) \beta' - \sin(\tau) \beta, \tilde{\varepsilon}\rangle + \langle \dot{\gamma}(\tau), \im{\langle\beta', \tilde{\varepsilon}\rangle}\left(\dot{\gamma}(\tau) - \beta'\right) \i\rangle}\\
			&= -2\rho(\tau)\, \big( \cos(\tau) \im{\langle\beta', \tilde{\varepsilon}\rangle} +
			\im{\langle\beta', \tilde{\varepsilon}\rangle} \underbrace{\langle \dot{\gamma}(\tau), \dot{\gamma}(\tau) - \beta'\rangle}_{\in\R} \big) \\
			&= -2\rho(\tau)\, \big( \cos(\tau) \im{\langle\beta', \tilde{\varepsilon}\rangle} + \im{\langle\beta', \tilde{\varepsilon}\rangle} - \im{\langle\beta', \tilde{\varepsilon}\rangle} \cos(\tau) \big)\\ 
			&= -2\rho(\tau)\, \im{\langle\beta', \tilde{\varepsilon}\rangle} 
	\end{align*}
	and
	\begin{align*}
		\omega^{\trl}([\dot{\pullb{c}}(\tau), \pullb{W}(\tau)]) &= 
		\underset{=\langle \neutral, \pullb{W}\rangle=0}
		{\dot{\pullb{c}}(\tau) \left( \underbrace{\omega^{\trl}\left(\pullb{W}(\tau)\right)} \right)} - 
		\underset{=\langle \neutral, \dot{\pullb{c}}\rangle=0}
		{\pullb{W}(\tau) \left( \underbrace{\omega^{\trl}\left(\dot{\pullb{c}}(\tau)\right)} \right)} -
		\underbrace{d\omega^{\trl} \left(\dot{\pullb{c}}(\tau), \pullb{W}(\tau) \right)}_{= 0} = 0,\\
	\end{align*}
	where tangent vectors $\dot{\pullb{c}}(\tau)$ and $\pullb{W}(\tau)$ are interpreted as directional derivatives. These are the two equations that remained to show.
\end{proof}

\begin{proof}[Proof of Lemma \ref{auxiliary}]
	\begin{enumerate}[label = \roman*)]
		\item See, e.g., \citetsup{Lee2018RiemannianManifolds},  Proposition B.12 on page 402. This is a standard result. Note that based on an alternative (yet also common) definition of the wedge product and, hence, the exterior derivative, \citetsup{Huckemann2010intrinsicMANOVA} and others write $\omega\left([V,W]\right) = V\left(\omega\left(W\right)\right) - W\left(\omega\left(V\right)\right) - 2\, \diff \omega \left(V,W\right)$ instead. In this case, we also have $d\omega^{\rot}(V,W) = \, \im{\langle V, W\rangle}$ in \ref{auxiliary:omega} compensating for the different factor in the proof of Lemma \ref{lemma:formtransport}. 
		\item Let $\{e_r\}_r$ be an orthonormal $\C$-basis of $\cY$ (a complete orthonormal system existing since $\cY$ is separable) and $\{\vartheta^{(r)}(y)\}_r = \langle e_r, y \rangle$ the corresponding dual basis. The tangent vectors $\partial_{\ree, r}\big|_p \cong e_r$ and $\partial_{\imm, r}\big|_p \cong \i\, e_r$, $p\in\cY$ together form an $\R$-basis of $T_p\cY^* \cong \cY$. The dual 1-forms are given by $\diff^{\ree, r}(V_p) := V_p\left(\ree \circ \vartheta^{(r)}\right) \cong \ree \circ \vartheta^{(r)}(V_p)$ and $\diff^{\imm, r}(V_p) := V_p\left(\imm \circ \vartheta^{(r)}\right) \cong \imm\circ\vartheta^{(r)}(V_p)$ where we identify tangent vectors either with directional derivatives  $V_p(f)=\frac{d}{d\tau}\left( f\circ\Exp_{p}(\tau V_p)\right)\big|_{\tau=0}$ of functions $f:\cM\rightarrow\R$ or with elements of $\cY$, and the equality follows from $\cM=\cY^*$, and $\ree\circ\vartheta^{(r)}$, $\imm\circ\vartheta^{(r)}$ linear.
		With this given, we have 
		\begin{align}
			\label{oneform}
			\omega^{\rot}(V_p) &= \im{\langle \sum_r \langle e_r, p\rangle e_r, V_p\rangle}\\ \nonumber
			&= \sum_r \im{\langle p, e_r\rangle \langle e_r, V_p\rangle}\\ \nonumber
			&= \sum_r \re{\langle e_r, p\rangle} \im{\langle e_r, V_p\rangle} - \im{\langle e_r, p\rangle} \re{\langle e_r, V_p\rangle}\\  \nonumber
			&= \sum_r \ree\circ\vartheta^{(r)}(p)\, \diff^{\imm, j}\left(V_p\right) - \imm\circ\vartheta^{(r)}(p)\, \diff^{\ree, j}\left( V_p\right)  
		\end{align}
		and thus, expressing the exterior derivative in terms of wedge products 
		\begin{align*}
			\diff\omega^{\rot} &= \sum_r \sum_l \partial_{\ree, r} \left(\ree\circ\vartheta^{(r)}\right)\ \diff^{\ree, l}\wedge\diff^{\imm, r} + 
			\partial_{\imm, r} \left(\ree\circ\vartheta^{(r)}\right)\ \diff^{\imm, l}\wedge\diff^{\imm, r}\\
			&\quad -\partial_{\ree, r} \left(\imm\circ\vartheta^{(r)}\right)\ \diff^{\ree, l}\wedge\diff^{\ree, r} - 
			\partial_{\imm, r} \left(\imm\circ\vartheta^{(r)}\right)\ \diff^{\imm, l}\wedge\diff^{\ree, r}\\
			&= \sum_r \sum_l \diff^{\ree, l}\left(\partial_{\ree, r}\right) \ \diff^{\ree, l}\wedge\diff^{\imm, r} + 
			\diff^{\ree, l}\left(\partial_{\imm, r}\right) \diff^{\imm, l}\wedge\diff^{\imm, r}\\
			&\quad -\diff^{\imm, l}\left(\partial_{\ree, r}\right) \diff^{\ree, l}\wedge\diff^{\ree, r} - 
			\diff^{\imm, l}\left(\partial_{\imm, r}\right) \diff^{\imm, l}\wedge\diff^{\ree, r}\\
			&=\sum_r  \diff^{\ree, r}\wedge\diff^{\imm, r} - \diff^{\imm, r}\wedge\diff^{\ree, r}\\
			&=2 \sum_r  \diff^{\ree, r}\wedge\diff^{\imm, r}
		\end{align*}
		which evaluates to
		\begin{align*}
			d\omega^{\rot}(V,W) &= 2 \sum_r \Big( \diff^{\ree, r}\left(V\right) \diff^{\imm, r}\left(W\right) - \diff^{\imm, r}\left(V\right) \diff^{\ree, r}\left(W\right) \Big) \\
			&= 2\, \im{\langle V, W\rangle}
		\end{align*}
		where the last equation follows from a computation analogous to \eqref{oneform}.
		
		\item By choosing w.l.o.g. $e_1 = \neutral$, we obtain
		\begin{align*}
			\omega^{\trl}(V) = \diff^{\ree, 1} + \i\, \diff^{\imm, 1}
		\end{align*}
		which immediately yields $\diff\langle \neutral, \cdot \rangle = 0$, since $d\,\diff^{\ree, 1} = d\,\diff^{\imm, 1} = 0$.	 
	\end{enumerate}
\end{proof}

%\newtheorem{remark}{Remark}
%
%\begin{remark}
%	There exists a unique geodesic between forms/shapes $[y_1]$ and $[y_2]$ if and only if $\langle y_1, y_2 \rangle \neq 0$ for centered representatives with $\langle y_1, \neutral \rangle = \langle y_2, \neutral \rangle = 0$.
%\end{remark}
%\begin{proof}
%	Along the lines of the construction of their geometry in Section \ref{chap_diffgeo}, for forms/shapes with $\langle y_1, y_2 \rangle \neq 0$ geodesics between $[y_1]$ and $[y_2]$ can be identified with geodesics between rotation aligned representatives $y_1, y_2$ in a half-space/half-sphere which are well-known to be unique.\\
%	If by contrast $\langle y_1, y_2 \rangle = 0$, any path $c(\tau) = [\tau y_1 + (1-\tau) u y_2]$ describes a geodesic as  \note{TO CONTINUE}
%\end{proof}
%
%\note{
%\begin{itemize}
%	\item Proof form parallel transport
%	\item Introduce necessary geometric notions
%\end{itemize}
%}

%%%%%%%%%%%%%%%%%%%%%%%%%%%%%%%%%%%%%%%%%%%%%%%%%%%%%%%%%%%%%%%%%%%%%%%%%%%%%%%%%%%

\subsection{Tensor-product factorization} \label{EYM}

\newcommand{\tp}[2]{\, #1 \otimes #2\,}

The optimality of the proposed tensor-product factorization follows from the Eckart-Young-Mirsky theorem (EYM) which can be found, e.g., in \citepsup[page 139]{Gentle2007MatrixAlgebra} for matrices and, in more general terms, in \citepsup[page 111]{Hsing2015TheoreticalFoundations} for Hilbert-Schmidt operators. In the following, we present a tensor-product version of EYM designed for our needs. The optimality of the tensor-product factorization is then illustrated in two corollaries -- first in a theoretical model setting and second for the empirical decomposition on evaluations which can be practically conducted on given data.
Consider two real vector spaces $\cB_j$, $j \in \{0,1\}$, with positive semi-definite bilinear forms $\langle\cdot,\cdot\rangle_j: \cB_j \times \cB_j \rightarrow \R$ inducing semi-norms $\|\cdot\|_j$. Assuming $\cB_1$ to be, in fact, a function space of functions $f: \cX \rightarrow \R$ on some set $\cX$, 
the (vector space) tensor product $\tp{\cB_1}{\cB_0}$ of $\cB_0$ and $\cB_1$ is the vector space spanned by all $f \otimes y: \cX \rightarrow \cB_0, \bx \mapsto f(\bx)\, y$ with $f\in\cB_1$ and $y\in\cB_0$.
By linear extension, a symmetric positive semi-definite bilinear form on $\tp{\cB_1}{\cB_0}$ is defined by $\langle \tp{f}{y}, \tp{f'}{y'} \rangle_{\tp{\cB_1}{\cB_0}} = \langle f, f' \rangle_{1} \, \langle y, y' \rangle_{0}$ for all $f,f' \in \cB_1, y,y' \in \cB_0$. It induces a semi-norm $\|\cdot\|_{\tp{\cB_1}{\cB_0}}$ on the tensor product space.
 
\newcommand{\rank}{\operatorname{rank}} 

\begin{theorem}[Eckart-Young-Mirsky for finite-dimensional tensor-products]
	\label{EckartYoungMirsky}
	Let $\cB_0, \cB_1$ be semi-normed vector spaces as defined above and $h = \sum_{r=1}^{m_0}  \sum_{l=1}^{m_1} \theta^{(r,l)} \tp{b_1^{(l)}}{b_0^{(r)}} \in \tp{\cB_1}{\cB_0}$ expressed as a finite linear-combination with $b_1^{(1)}, \dots, b_1^{(m_1)} \in \cB_1$,  $b_0^{(1)}, \dots, b_0^{(m_0)} \in \cB_0$, and coefficient matrix $\{\theta^{(r,l)}\}_{r,l} = \bTheta \in \R^{m_0\times m_1}$. 
	Then we can optimally decompose $h = \sum_{r=1}^m d^{(r)} \tp{\xi_{1}^{(r)}}{\xi_{0}^{(r)}}$ with $m = \min\{m_0, m_1\}$, $d^{(1)} \geq \dots \geq d^{(m)} \geq 0$ and $\langle \xi_{j}^{(r)}, \xi_{j}^{(l)} \rangle_j = \ind(r=l)$ for $\xi_{j}^{(r)}\in\cB_j$
	, in the sense that for any $L\leq m$ 
	\begin{equation}
		\label{optimalapprox}
		\| h - \sum_{r=1}^L d^{(r)} \tp{\xi_{1}^{(r)}}{\xi_{0}^{(r)}} \|_{\tp{\cB_1}{\cB_0}} \ \leq \ \| h - \sum_{r=1}^L d^{(r)}_\star \tp{\xi_{1\star}^{(r)}}{\xi_{0\star}^{(r)}} \|_{\tp{\cB_1}{\cB_0}}
	\end{equation}
	for all $d^{(r)}_\star\in \R$ and $ \xi_{j\star}^{(r)} \in \cB_j$, $j\in\{0,1\}$, $r=1,\dots, L$.
	Arranging $\bD = \operatorname{diag}(d^{(1)}, \dots, d^{(m)})$ and expressing $\xi_{j}^{(r)} = \sum_{l} u_{j}^{(l,r)} b_{j}^{(l)}, r=1,\dots, m,$ with the coefficient matrices $\{u_{j}^{(l,r)}\}_{l,r}=\bU_j \in \R^{m_j \times m}$, $j\in\{0,1\}$, an optimal decomposition is obtained as follows: 
		 
	\begin{enumerate}[label = \roman*)]
		\item If for $j\in\{0,1\}$ the Gram matrices $\bG_j = \{\langle b_{j}^{(r)}, b_{j}^{(l)}\rangle_j\}_{r,l}$ are the identity $\bG_j = \bI_{m_j}$, the matrices $\bD$ and $\bU_j$, $j\in\{0,1\}$, are directly determined via SVD of the coefficient matrix $\bTheta = \bU_0 \bD \bU_1^\top$.
		
		\item In general, there are suitable matrices $\bM_j\in\R^{\rank \bG_j \times m_j}$, $j\in\{0,1\}$, such that $\bXi = \bV_0 \bD \bV_1^\top$ is the SVD of the matrix $\bXi = \bM_0\bTheta\bM_1^{\top}$ and $\bU_j = \bM_j^{-} \bV_j$ with generalized inverse $\bM_j^- = \bM_j^\top (\bM_j \bM_j^\top)^{-1}$. 
		\begin{enumerate}
			\item In general, a suitable matrix is given by $\bM_j = \sqrt{\bG_j}^{\top}$ with $\bG_j = \sqrt{\bG_j}\sqrt{\bG_j}^\top$ a Cholesky decomposition.
			
			\item If the $b_{j}^{(r)}$ can be identified with vectors $\bb_{j}^{(r)}\in \R^{m_j'}$ of some length $m_j'$, arranged as column vectors of a ``design matrix'' $\bB_j \in \R^{m_j'\times m_j}$, such that $\langle b_{j}^{(r)}, b_{j}^{(l)} \rangle_j = (\bb_{j}^{(r)})^\top \bW_j \bb_{j}^{(l)}$, with $r,l = 1,\dots, m_j$, for a symmetric positive definite weight matrix $\bW_j$, we may equivalently set $\bM_j = \bR_j$ based on the QR-decomposition $\sqrt{\bW_j}^\top \bB_j = \bQ_j \bR_j$. In this case, design matrices $\bE_j$ of vector representatives $\bxi_{j}^{(r)}$ for the $\xi_{j}^{(r)}$ can, alternatively, be obtained as $\bE_j = \sqrt{\bW_j}^{-\top}\bQ_j\bV_j$ (where $\bW_j$ is typically diagonal and, hence, $\sqrt{\bW_j}^{-\top}$ fast to compute).
		\end{enumerate}
	\end{enumerate}
\end{theorem}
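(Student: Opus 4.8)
The plan is to reduce the statement to the classical matrix Eckart--Young--Mirsky theorem by coordinatizing the finite-dimensional spans $\cB_j^0 = \operatorname{span}\{b_j^{(1)},\dots,b_j^{(m_j)}\}$, $j\in\{0,1\}$, isometrically in Euclidean space. Since each $\bG_j$ is symmetric positive semi-definite, there is a full-row-rank matrix $\bM_j \in \R^{\rank\bG_j \times m_j}$ with $\bM_j^\top \bM_j = \bG_j$ (from a Cholesky or eigen-decomposition; this is what $\bM_j = \sqrt{\bG_j}^\top$ encodes). The map $\sum_l a_l b_j^{(l)} \mapsto \bM_j \ba$ is then a semi-isometry, $\|\sum_l a_l b_j^{(l)}\|_j^2 = \ba^\top \bG_j \ba = \|\bM_j \ba\|^2$, whose kernel is exactly the null space of $\|\cdot\|_j$ on $\cB_j^0$. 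First I would pass to the quotient inner-product spaces $\cH_j$ (modulo $\|\cdot\|_j=0$), so that $\bM_j$ furnishes honest orthonormal coordinates on the image $\bar\cB_j^0$.

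Next I would establish the norm identity $\|h\|_{\tp{\cB_1}{\cB_0}}^2 = \tr(\bTheta^\top \bG_0 \bTheta \bG_1) = \|\bM_0 \bTheta \bM_1^\top\|_F^2 = \|\bXi\|_F^2$, obtained by expanding the tensor semi-inner product over the basis and substituting $\bG_j = \bM_j^\top\bM_j$; the same computation identifies every $g \in \tp{\cB_1^0}{\cB_0^0}$ with a matrix under which the tensor semi-norm matches the Frobenius norm and simple tensors $\tp{\xi_1}{\xi_0}$ match rank-one matrices. The key reduction is then to argue that it suffices to consider approximants $g = \sum_{r=1}^L d_\star^{(r)} \tp{\xi_{1\star}^{(r)}}{\xi_{0\star}^{(r)}}$ whose factors lie in $\cB_j^0$: splitting each $\xi_{j\star}^{(r)}$ in $\cH_j$ orthogonally into a part in $\bar\cB_j^0$ and a remainder induces an orthogonal decomposition of $\cH_1\otimes\cH_0$ that places $h$ entirely in $\tp{\bar\cB_1^0}{\bar\cB_0^0}$, while the orthogonal projection of $g$ onto this summand is again of rank $\le L$ and, by Pythagoras, no farther from $h$ than $g$. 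Hence the minimization collapses to best rank-$\le L$ Frobenius approximation of $\bXi$.

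With this in hand I would invoke the matrix Eckart--Young--Mirsky theorem: truncating the SVD $\bXi = \bV_0 \bD \bV_1^\top$ to its top $L$ singular values yields the optimal rank-$L$ approximant, giving \eqref{optimalapprox} (with $d^{(L+1)}=\dots=d^{(m)}=0$ whenever $\rank\bXi<m$). Translating back via $\bU_j = \bM_j^- \bV_j$ with the right inverse $\bM_j^- = \bM_j^\top(\bM_j\bM_j^\top)^{-1}$ (well defined as $\bM_j$ has full row rank) produces the directions $\xi_j^{(r)} = \sum_l u_j^{(l,r)} b_j^{(l)}$; the identity $\bM_j \bU_j = \bV_j$ gives $\{\langle \xi_j^{(r)}, \xi_j^{(l)}\rangle_j\}_{r,l} = \bU_j^\top \bG_j \bU_j = \bU_j^\top\bM_j^\top\bM_j\bU_j = \bV_j^\top\bV_j = \bI$, so orthonormality is automatic, and at $L=m$ one recovers $h$ exactly. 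The three cases then specialize: (i) $\bG_j=\bI$ forces $\bM_j=\bI$ and $\bXi=\bTheta$; (ii)(a) is $\bM_j=\sqrt{\bG_j}^\top$; and (ii)(b) follows since the QR step gives $\bR_j^\top\bR_j = \bB_j^\top\bW_j\bB_j = \bG_j$, with the representatives $\bxi_j^{(r)}$ collected in $\bE_j = \sqrt{\bW_j}^{-\top}\bQ_j\bV_j$ verified through $\bE_j^\top\bW_j\bE_j = \bV_j^\top\bV_j = \bI$ and $\bB_j\bU_j = \bE_j$.

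I expect the \emph{main obstacle} to be this reduction step: making the passage to the inner-product spaces $\cH_j$ and the projection-onto-span argument fully rigorous while correctly accommodating degenerate (rank-deficient) Gram matrices, so that the unconstrained infimum over all $\xi_{j\star}^{(r)}\in\cB_j$ genuinely reduces to finite-dimensional matrix approximation. The surrounding algebra -- the norm identity, the singular-vector bookkeeping, and the three special cases -- is routine by comparison.
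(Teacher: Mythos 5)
Your proposal is correct and follows essentially the same route as the paper's proof: coordinatize the finite-dimensional spans isometrically via a factor $\bM_j$ of the Gram matrix so that the tensor semi-norm becomes the Frobenius norm of $\bXi=\bM_0\bTheta\bM_1^\top$, reduce arbitrary approximants to ones in the span by orthogonal projection plus Pythagoras, invoke the matrix Eckart--Young--Mirsky theorem, and pull back the singular vectors via $\bU_j=\bM_j^-\bV_j$ with orthonormality checked through $\bU_j^\top\bG_j\bU_j=\bV_j^\top\bV_j=\bI$. The paper organizes this as case (i) first (orthonormal Gram) followed by a change of basis for (ii), whereas you go directly to the general coordinatization and are slightly more explicit about quotienting out the null space of the semi-norm, but the substance is identical.
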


\begin{proof}
	\begin{enumerate}[label = \textit{\roman*)}]
		\item \label{factorization_orthonormal} For $j\in\{0, 1\}$, denote the column vectors of $\bU_j$ by $\bu_{j}^{(r)}$, $r=1,\dots, m$, and consider the space of $m_0\times m_1$ matrices equipped with the inner product $\langle \bTheta_1, \bTheta_2 \rangle_{F} = \tr{\left(\bTheta_1^\top \bTheta_2\right)}$, for $\bTheta_1, \bTheta_2 \in \R^{m_0\times m_1}$, inducing the Frobenius norm $\|\cdot \|_F$. 
		
		The EYM for matrices \citepsup[e.g.][page 139]{Gentle2007MatrixAlgebra} states that the matrix $\bTheta_L = \sum_{r=1}^{L} d^{(r)} \bu_{0}^{(r)} (\bu_{1}^{(r)})^\top$ is the best rank $L$ approximation of $\bTheta$, in the sense that
		$$
		\| \bTheta - \bTheta_L \|_F \leq \| \bTheta - \sum_{r=1}^L d_\star^{(r)} \bu_{0\star}^{(r)} (\bu_{1\star}^{(r)})^\top\|_F \text{ for any } d_\star^{(r)} \in \R, \bu_{j\star}^{(r)} \in \R^{m_j}, r = 1, \dots, m.
		$$
		
		To apply the theorem, we point out that, provided the Gram matrices $\bG_j=\bI_{m_j}$, the $\{b_{j}^{(r)}\}_{r=1,\dots,m_j}$ and, thus, also $\{\tp{b_1^{(r)}}{b_0^{(l)}}\}_{r,l}$ are orthonormal bases of finite-dimensional subspaces $\cA_j \subset \cB_j$ and $\tp{\cA_1}{\cA_0} \subset \tp{\cB_1}{\cB_0}$, respectively, forming Hilbert spaces. 
		Hence, the basis representation map sending $\xi_j^{(r)} \mapsto \bu_j^{(r)}$ to its coefficient vector w.r.t.\ $\{b_{j}^{(r)}\}_r$ presents an isometric isomorphism from $\cA_j$ to $\R^{m_j}$.
		Accordingly, the basis representation $\tp{\cA_1}{\cA_0} \rightarrow \R^{m_0 \times m_1}$, $h \mapsto \bTheta$ presents an isometric isomorphism identifying $\tp{\xi_1^{(r)}}{\xi_0^{(l)}}$ with $\bu_{0}^{(l)} (\bu_{1}^{(r)})^\top$. The isometry follows from $\langle h_1, h_2 \rangle = \sum_{r, l, r', l'} \theta_1^{(r,l)} \theta_2^{(r',l')} \langle \tp{b_1^{(r)}}{b_0^{(l)}} , \tp{b_1^{(r')}}{b_0^{(l')}} \rangle = \sum_{r, l} \theta_1^{(r,l)} \theta_2^{(r,l)} = \tr{\left(\bTheta_1^\top\bTheta_2\right)}$ for basis representations $h_1 \mapsto \bTheta_1$ and $h_2 \mapsto \bTheta_2$.
		This lets us carry over the EYM for matrices to $\tp{\cA_1}{\cA_0}$ yielding the desired inequality \eqref{optimalapprox} restricted to $\xi_{j\star}^{(r)}\in \cA_j \subset \cB_j$. The property $d_1\geq \dots \geq d_m \geq 0$ and orthonormality of the $\xi_j^{(r)}$ are also inherited from the SVD.
		
		Moreover, we can project any $\xi_{j\star}^{(r)} \in \cB_j$ as $\xi_{j\parallel}^{(r)} = \sum_l \langle b_j^{(l)}, \xi_{j\star}^{(r)} \rangle_j\, b_j^{(l)}$ into $\cA_j$ and define $\xi_{j\perp}^{(r)} = \xi_{j\star}^{(r)} - \xi_{j\parallel}^{(r)}$, which yields an analogous decomposition $h_\star =  \sum_{r=1}^L d^{(r)}_\star \tp{\xi_{1\star}^{(r)}}{\xi_{0\star}^{(r)}} = h_\parallel + h_\perp$ with $h_\parallel\in\tp{\cA_1}{\cA_0}$ and $\langle h, h_\perp \rangle_{\tp{\cB_1}{\cB_0}} = \langle h_\parallel, h_\perp \rangle_{\tp{\cB_1}{\cB_0}} = 0$. Thus, we have $\| h - h_\star \|_{\tp{\cB_1}{\cB_0}}^2 = \| h - h_\parallel \|_{\tp{\cB_1}{\cB_0}}^2 + \|h_\perp\|_{\tp{\cB_1}{\cB_0}}^2 \geq \| h - h_\parallel \|_{\tp{\cB_1}{\cB_0}}^2 \overset{\text{EYM}}{\underset{\text{on} \tp{\cA_1}{\cA_0}}\geq} \| h - \sum_{r=1}^L d^{(r)} \tp{\xi_{1}^{(r)}}{\xi_{0}^{(r)}} \|_{\tp{\cB_1}{\cB_0}}$, which completes the proof.
		
		\item We represent $b_{j}^{(r)} = \sum_{l=1} M_{j}^{(l,r)} a_j^{(l)}$ in an orthonormal basis $\{a_j^{(l)}\}_l$ of the Hilbert space $\cA_j \subset \cB_j$ spanned by $\{b_{j}^{(r)}\}_r$ as in \ref{factorization_orthonormal} with the coefficients forming the matrix $\bM_j = \{M_{j}^{(l,r)}\}_{l,r}$, for $j\in\{0,1\}$, such that $\bXi = \bM_0 \bTheta \bM_1^{\top}$ is the coefficient matrix of $h$ w.r.t.\ $\{a_{j}^{(l)}\}_l$. Hence, due to \ref{factorization_orthonormal}, the matrices $\bXi = \bV_0 \bD \bV_1^\top$ obtained by SVD fulfill the desired properties where the $\bV_j$ are the coefficient matrices of the $\{\xi_{j}^{(r)}\}_r$ w.r.t.\ $\{a_{j}^{(r)}\}_r$. We may set $\bU_j = \bM_j^- \bV_j$ to represent $\{\xi_{j}^{(r)}\}_r$ in the original basis $\{b_{j}^{(r)}\}_r$ instead,  since, due to $\bM_j \bM_j^- = \bI_{\rank \bG_j}$, we have $a_{j}^{(r)} = \sum_{l=1} M_{j}^{-(l,r)} b_j^{(l)}$ for $\bM^- = \{M_{j}^{-(l,r)}\}_{l,r}$.
		\begin{enumerate}[label = \textit{\alph*)}]
			\item \label{factorization_general} Constructing the orthonormal basis $\{a_{j}^{(r)}\}_r$ via $a_{j}^{(r)} = \sum_{l=1} M_{j}^{-(l,r)} b_j^{(l)}$ with $\bM_j^- = \{M_{j}^{-(l,r)}\}_{l,r} = \sqrt{\bG_j}^{\top-}$ is straight forward yielding 
			\begin{align*}
				\{\langle a_{j}^{(r)}, a_{j}^{(l)} \rangle \}_{r,l} &= \bM_j^{-\top} \bG_j \bM_j^{-}\\ %\overset{\bM_j\\ = \sqrt{\bG_j}^{\top}}{=} 
			 &=\left(\sqrt{\bG_j}^\top\sqrt{\bG_j}\right)^{-1} \sqrt{\bG_j}^\top 
			\sqrt{\bG_j} \sqrt{\bG_j}^{\top}
			\sqrt{\bG_j} \left(\sqrt{\bG_j}^\top\sqrt{\bG_j}\right)^{-1}\\ 
			&= \bI_{\rank \bG_j}.
			\end{align*}
			\item As in this case, $\bG_j = \bB_j^\top \sqrt{\bW}_j \sqrt{\bW}_j^\top \bB_j = \bR_j^\top \bQ_j^\top\bQ_j \bR_j \overset{\bQ_j \text{ orthogonal}}{=} \bR_j^\top \bR_j$ the choice $\bM_j = \bR_j$ is equivalent to \ref{factorization_general}. Accordingly, $\bU_j = \bR_j^{-} \bV_j$ and thus $\bE_j = \bB_j \bU_j = \bB_j \bR_j^{-} \bV_j = \sqrt{\bW}^{-\top}\bQ_j \bV_j$. 
		\end{enumerate}
	\end{enumerate}
\end{proof}

%%%%%%%%%%%%%%%%%%%%%%%%%%%%%%%%%%%%%%%%%%%%%%%%%%%%%%%%%%%%%%%%%%%%%%%%%%%%%%%%%%%%%%%%%%%%%%%%%%%%%%%%%%%%%%%%%%%%%%%%%

\begin{corollary}[Tensor-product factorization]
	Let $\{b_0^{{(r)}}\}_{r=1,\dots,m_0}$ elements of a Hilbert space $\cY$ with norm $\|\cdot\|$ and $b_{1}^{(l)} \in \cL^2(\cX)=\{f\!:\!\cX \rightarrow \R\, : f\!\circ\!\bX \text{ measurable, } \E\left(\|f(\bX)\|^2\right)<\infty\}$, $l=1,\dots, m_1$, square-integrable functions of a random covariate vector $\bX$ taking values in $\cX$. Let further $h(\bx) = \sum_{r=1}^{m_0}\sum_{l=1}^{m_1} \theta^{(r,l)} \, b_{1}^{(l)}(\bx) \, b_{0}^{(r)}$ for $\bx \in \cX$.
	Then we can optimally decompose $h(\bx) = \sum_{r=1}^m h^{(r)}(\bx)\, \xi^{(r)}$ with $m = \min\{m_0, m_1\}$, $\xi^{(1)}, \dots, \xi^{(m)}$ orthonormal and $h^{(r)}\in\cL^2(\cX)$ with $\E\left(h^{(1)}(\bX)^2\right)\geq\dots\geq \E\left(h^{(m)}(\bX)^2\right)$,  in the sense that for any $L\leq m$ 
	$$
	\E\left( \| h(X) - \sum_{r=1}^L h^{(r)}(X)\, \xi^{(r)} \|^2 \right) \leq \E\left( \| h(X) - \sum_{r=1}^L h^{(r)}_\star(X)\, \xi^{(r)}_\star \|^2 \right),
	$$
	for any other $\xi^{(r)}_\star \in \cY$ and $h^{(r)}_\star \in \cL^2(\cX)$, $r=1,\dots, L$.
	An optimal decomposition is obtained by specifying $\xi^{(r)} = \xi_0^{(r)}$ and $h^{(r)} = d^{(r)}\, \xi_1^{(r)}$ as in Theorem \ref{EckartYoungMirsky} with $\langle \cdot, \cdot \rangle_0=\langle \cdot, \cdot \rangle$ the inner product of $\cY$ and $\langle f, f' \rangle_1 = \E\left( f(X) \, f'(X)\right)$ for $f, f'\in\cL^2(\cX)$.
\end{corollary}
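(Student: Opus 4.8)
The plan is to derive this corollary as a direct specialization of Theorem \ref{EckartYoungMirsky}, the only genuine work being to match the abstract tensor-product seminorm with the expected squared $\cY$-norm in the statement. First I would set $\cB_0 = \cY$ with its inner product $\langle\cdot,\cdot\rangle_0 = \langle\cdot,\cdot\rangle$, and $\cB_1 = \cL^2(\cX)$ with the positive semi-definite bilinear form $\langle f, f'\rangle_1 = \E\!\left(f(\bX)f'(\bX)\right)$. This is only a seminorm, since functions vanishing $\bX$-almost surely have zero length, but that is exactly the generality in which Theorem \ref{EckartYoungMirsky} was stated. Under these choices, the element $h = \sum_{r,l}\theta^{(r,l)}\tp{b_1^{(l)}}{b_0^{(r)}}\in\tp{\cB_1}{\cB_0}$ corresponds, via the linear extension of $\tp{f}{y}\mapsto\big(\bx\mapsto f(\bx)\,y\big)$, to the $\cY$-valued map $h(\cdot)$ of the corollary.

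The key step, which I expect to be the main (and essentially only) obstacle, is the isometry identity
\[
\big\|\,\textstyle\sum_r \tp{f_r}{y_r}\,\big\|_{\tp{\cB_1}{\cB_0}}^2 \;=\; \E\!\left(\Big\|\textstyle\sum_r f_r(\bX)\,y_r\Big\|^2\right)
\]
for any finite collection $f_r\in\cB_1$, $y_r\in\cB_0$. I would verify it by bilinear expansion: the left-hand side equals $\sum_{r,s}\langle f_r,f_s\rangle_1\langle y_r,y_s\rangle = \sum_{r,s}\E\!\left(f_r(\bX)f_s(\bX)\right)\langle y_r,y_s\rangle$ by definition of the tensor-product form, while the right-hand side equals $\E\!\left(\sum_{r,s}f_r(\bX)f_s(\bX)\langle y_r,y_s\rangle\right)$ by bilinearity of $\langle\cdot,\cdot\rangle$ and linearity of expectation; the two coincide. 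Hence the identification map is seminorm-preserving, so any tensor-product approximation error translates verbatim into an expected squared prediction error.

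With this in hand I would invoke Theorem \ref{EckartYoungMirsky} to write $h = \sum_{r=1}^m d^{(r)}\tp{\xi_1^{(r)}}{\xi_0^{(r)}}$ with $m=\min\{m_0,m_1\}$, $d^{(1)}\ge\dots\ge d^{(m)}\ge0$, and $\langle\xi_j^{(r)},\xi_j^{(l)}\rangle_j = \ind(r=l)$, and then set $\xi^{(r)} = \xi_0^{(r)}$ and $h^{(r)} = d^{(r)}\xi_1^{(r)}$. Orthonormality of the $\xi^{(r)}$ in $\cY$ is inherited from $\langle\xi_0^{(r)},\xi_0^{(l)}\rangle_0 = \ind(r=l)$; each $h^{(r)}$ lies in $\cL^2(\cX)$; and since $\xi_1^{(r)}$ is $\langle\cdot,\cdot\rangle_1$-orthonormal, $\E\!\left(h^{(r)}(\bX)^2\right) = (d^{(r)})^2$, so the required variance ordering follows from the ordering of the singular values.

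Finally, for optimality I would observe that an arbitrary competitor $\sum_{r=1}^L h_\star^{(r)}(\bx)\,\xi_\star^{(r)}$ with $h_\star^{(r)}\in\cL^2(\cX)$, $\xi_\star^{(r)}\in\cY$ is exactly the image of the rank-$L$ tensor $\sum_{r=1}^L \tp{h_\star^{(r)}}{\xi_\star^{(r)}}$, which is of the form appearing on the right of \eqref{optimalapprox} (take $d_\star^{(r)}=1$, $\xi_{1\star}^{(r)}=h_\star^{(r)}$, $\xi_{0\star}^{(r)}=\xi_\star^{(r)}$). Applying the theorem's inequality, squaring both nonnegative sides, and rewriting each side through the isometry identity yields precisely the claimed inequality in expected squared norm. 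Everything beyond the isometry identity is then routine bookkeeping, provided one keeps track that $\langle\cdot,\cdot\rangle_1$ is only a seminorm.
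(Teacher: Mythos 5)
Your proposal is correct and follows essentially the same route as the paper's own proof: both reduce the corollary to Theorem \ref{EckartYoungMirsky} by verifying that the tensor-product seminorm built from $\langle\cdot,\cdot\rangle_0=\langle\cdot,\cdot\rangle$ and $\langle f,f'\rangle_1=\E\left(f(\bX)f'(\bX)\right)$ coincides with $h\mapsto\E\left(\|h(\bX)\|^2\right)$, checked on simple tensors and extended by bilinearity. The paper states this more tersely, but the key identity and its use are identical to yours.
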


\begin{proof}
	After applying Theorem \ref{EckartYoungMirsky}, it remains to check that $\|h\|_{\tp{\cL^2(\cX)}{\cY}}^2 = \E\left(\|h(X)\|^2\right)$. Indeed, this holds for all simple $h = \tp{f}{y}$, since
	$$
		\langle \tp{y}{f}, \tp{y'}{f'} \rangle_{\tp{\cL^2(\cX)}{\cY}} = \langle y, y' \rangle \, \E\left( f(X) \, f'(X)\right) = \E\left( \langle f(X)\, y, f'(X)\, y' \rangle \right)
	$$
	for any $y,y'\in\cY$ and $f,f'\in\cL^2(\cX)$, and, therefore, carries over to all $h \in \tp{\cL^2(\cX)}{\cY}$ in the vector space.
\end{proof}

%%%%%%%%%%%%%%%%%%%%%%%%%%%%%%%%%%%%%%%%%%%%%%%%%%%%%%%%%%%%%%%%%%%%%%%%%%%%%%%%%%%%%%%%%%%%%%%%%%%%%%%%%%%%%%%%%%%%%%%%%

\newcommand{\di}{\ddot{\imath}}

\begin{corollary}[Tensor-product factorization, empirical version]
\label{corollary:TPfactorization_empirical}
	Let $\cF(\cX, \R)$ and $\cF(\cT, \C)$ denote the sets of functions $\cX \rightarrow\R$ and $\cT \rightarrow \C$, respectively, which are both considered real vector spaces.
	Let $b_{0}^{(r)}\in\cF(\cT, \C)$, $r=1,\dots, m_0$, and $b_{1}^{(l)}\in\cF(\cX, \R)$, $l=1,\dots, m_1$. 
	Consider $h(\bx)(t) = \sum_{r=1}^{m_0}\sum_{l=1}^{m_1} \theta^{(r,l)} \, b_{1}^{(l)}(\bx) \, b_{0}^{(r)}(t)$ for $\bx \in \cX$, $t \in \cT$ evaluated, for $i=1,\dots,n$, at $\bx_i \in \cX$ and $t_{i,\iota} \in \cT, \iota= 1,\dots,k_i$. 
	Then we can decompose $h(\bx) = \sum_{r=1}^m h^{(r)}(\bx)\, \xi^{(r)}$ with $m = \min\{m_0, m_1\}$ optimally, in the sense that for any $L\leq m$ and any other functions $\xi^{(r)}_\star: \cT \rightarrow \C$ and $h^{(r)}_\star: \cX \rightarrow \R$, $r=1,\dots, L$, 
	\begin{align}
	\label{tensorfactorization_empirical}
	    \sum_{i=1}^n w_{1i} \frac{1}{n} \sum_{\di=1}^n \sum_{\iota=1}^{k_i} w_{0\di\iota} | h(\bx_i)&(t_{\di\iota}) - \sum_{r=1}^L h^{(r)}(\bx_i)\, \xi^{(r)}(t_{\di\iota}) |^2\nonumber\\ 
	    &\leq \\
	    \sum_{i=1}^n w_{1i} \frac{1}{n} \sum_{\di=1}^n \sum_{\iota=1}^{k_i} w_{0\di\iota} | h(\bx_i)&(t_{\di\iota}) - \sum_{r=1}^L h_\star^{(r)}(\bx_i)\, \xi_\star^{(r)}(t_{\di\iota}) |^2,\nonumber    
	\end{align}
	with integration/sample weights $w_{0i\iota} \geq 0$ and $w_{1i} \geq 0$. An optimal decomposition is obtained by specifying $\xi^{(r)} = \xi_{0}^{(r)}$ and $h^{(r)} = d^{(r)}\, \xi_{1}^{(r)}$, $r=1,\dots, m$, specified as in Theorem \ref{EckartYoungMirsky} with $\langle y, y' \rangle_0 = \frac{1}{n}\sum_{\ddot{\imath}=1}^n \sum_{\iota=1}^{k_i} w_{0\di\iota} \re{y^\dagger(t_{\di\iota}) y'(t_{\di\iota})}$ for $y, y'\in\cF(\cT, \C)$ and $\langle f, f' \rangle_1 = \sum_{i=1}^n  w_{1i} f(x_i) f'(x_i)$ for $f, f'\in\cF(\cX, \R)$.
\end{corollary}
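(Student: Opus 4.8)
The plan is to read this off from the tensor-product Eckart--Young--Mirsky Theorem~\ref{EckartYoungMirsky}. First I would fix the two real semi-normed spaces to which it is applied: take $\cB_0 = \cF(\cT,\C)$, regarded as a \emph{real} vector space, with $\langle y, y'\rangle_0 = \frac{1}{n}\sum_{\di=1}^n \sum_{\iota=1}^{k_{\di}} w_{0\di\iota}\, \re{y^\dagger(t_{\di\iota})\, y'(t_{\di\iota})}$, and $\cB_1 = \cF(\cX,\R)$ with $\langle f, f'\rangle_1 = \sum_{i=1}^n w_{1i}\, f(\bx_i)\, f'(\bx_i)$. Both are real symmetric bilinear forms, and positive semi-definiteness follows from $w_{0\di\iota}, w_{1i}\geq 0$; note that $\re{y^\dagger y'}$ is nothing but the standard real inner product on $\C\cong\R^2$ evaluated pointwise, so no complex structure is required and $\langle\cdot,\cdot\rangle_0$ is a genuine real bilinear form on $\cB_0$. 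With these choices $h = \sum_{r=1}^{m_0}\sum_{l=1}^{m_1}\theta^{(r,l)}\,\tp{b_1^{(l)}}{b_0^{(r)}}$ is a finite linear combination, hence an element of $\tp{\cB_1}{\cB_0}$, exactly as the theorem requires.

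The key step is to show that the semi-norm $\|\cdot\|_{\tp{\cB_1}{\cB_0}}$ induced on the tensor product coincides with the weighted empirical semi-norm appearing on both sides of \eqref{tensorfactorization_empirical}. I would verify the corresponding identity of bilinear forms on simple tensors: since $f, f'$ are real-valued, so that $f(\bx_i)^\dagger = f(\bx_i)$,
\begin{equation*}
  \sum_{i=1}^n w_{1i}\,\frac{1}{n}\sum_{\di=1}^n\sum_{\iota=1}^{k_{\di}} w_{0\di\iota}\, \re{\big(f(\bx_i)\,y(t_{\di\iota})\big)^\dagger f'(\bx_i)\,y'(t_{\di\iota})} \;=\; \langle f,f'\rangle_1\,\langle y,y'\rangle_0 ,
\end{equation*}
the double sum factoring into the $\bx$-sum times the $(\di,\iota)$-sum. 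The right-hand side is by definition $\langle \tp{f}{y}, \tp{f'}{y'}\rangle_{\tp{\cB_1}{\cB_0}}$, so the empirical bilinear form agrees with the tensor-product one on all pairs of simple tensors and hence, by bilinearity, on all of $\tp{\cB_1}{\cB_0}$. In particular the two expressions agree for the element $h - \sum_{r=1}^L h_\star^{(r)}\otimes\xi_\star^{(r)}$, which is precisely the quantity whose squared norm is compared in \eqref{tensorfactorization_empirical}.

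It then remains to match competitor classes and invoke the theorem. Because the scalar $d_\star^{(r)}$ in Theorem~\ref{EckartYoungMirsky} is free, absorbing it into the $\cB_1$-factor shows that the theorem's competitors $\sum_{r=1}^L d_\star^{(r)}\tp{\xi_{1\star}^{(r)}}{\xi_{0\star}^{(r)}}$ range over exactly the functions $\sum_{r=1}^L h_\star^{(r)}(\bx)\,\xi_\star^{(r)}$ with $h_\star^{(r)}\in\cF(\cX,\R)$ and $\xi_\star^{(r)}\in\cF(\cT,\C)$, i.e. the competitors of \eqref{tensorfactorization_empirical}. Setting $\xi^{(r)} = \xi_0^{(r)}$ and $h^{(r)} = d^{(r)}\xi_1^{(r)}$ from the optimal decomposition of Theorem~\ref{EckartYoungMirsky} and translating its optimality inequality \eqref{optimalapprox} through the norm identity just established yields \eqref{tensorfactorization_empirical}; the ordering $d^{(1)}\geq\dots\geq d^{(m)}\geq 0$ and the orthonormality of the $\xi^{(r)}=\xi_0^{(r)}$ are inherited verbatim from the theorem. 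I expect the only real subtleties --- and thus the main, if modest, obstacle --- to be the two structural points flagged above: regarding $\cF(\cT,\C)$ as a real space so that $\re{y^\dagger y'}$ is a legitimate semi-inner product, and the fact that these empirical forms are only \emph{semi}-definite. Both are already accommodated by the semi-norm generality in which Theorem~\ref{EckartYoungMirsky} is phrased, so no additional work is needed beyond the bookkeeping above.
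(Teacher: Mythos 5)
Your proposal is correct and follows essentially the same route as the paper's own proof: both verify that the tensor-product semi-inner product induced by $\langle\cdot,\cdot\rangle_0$ and $\langle\cdot,\cdot\rangle_1$ coincides with the weighted empirical bilinear form on simple tensors (hence everywhere by bilinearity) and then read the inequality off Theorem~\ref{EckartYoungMirsky}. The paper's proof is terser, omitting the competitor-class matching and semi-definiteness remarks you spell out, but these are just the bookkeeping the theorem's generality already covers.
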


\begin{proof}
	Again, we confirm $\|h\|_{\tp{\cF(\cX, \R)}{\cF(\cT, \C)}}^2 = \sum_{i=1}^n w_{1i} \sum_{\iota=1}^{k_i} w_{0\di\iota} \left( h(\bx_i)(t_{\di\iota}) \right)^2$ by showing
	\begin{align*}
		\langle \tp{y}{f}, \tp{y'}{f'} \rangle_{\tp{\cF(\cX, \R)}{\cF(\cT, \C)}} 
			&= \langle y, y' \rangle_0 \, \sum_{i=1}^n  w_{1i} f(x_i) f'(x_i) = \sum_{i=1}^n  w_{1i} \langle f(x_i) y, f'(x_i) y'\rangle_1\\ 
			&= \sum_{i=1}^n  w_{1i} \frac{1}{n}\sum_{\di=1}^n\sum_{\iota=1}^{k_i} w_{0\di\iota} \re{\left( f(x_i) y(t_{\di\iota}) \right)^\dagger f'(x_i) y'(t_{\di\iota})}
	\end{align*}
	for any $y,y'\in\cF(\cT, \C)$ and $f,f'\in\cF(\cX, \R)$.
\end{proof}

\begin{remark}
For the regular case with $k_1 = \dots = k_n =: k$ and for all $\iota = 1,\dots,k$ also $t_{i\iota} = t_{1\iota} =: t_{\iota}$ and $w_{0i\iota} = w_{01\iota} =: w_{0\iota}$ equal for all observations $i = 1,\dots, n$, Inequality \eqref{tensorfactorization_empirical} simplifies to 
\begin{align*}
	    \sum_{i=1}^n w_{1i} \sum_{\iota=1}^{k_i} w_{0\iota} | h(\bx_i)&(t_{\iota}) - \sum_{r=1}^L h^{(r)}(\bx_i)\, \xi^{(r)}(t_{\iota}) |^2\nonumber\\ 
	    &\leq \\
	    \sum_{i=1}^n w_{1i} \sum_{\iota=1}^{k_i} w_{0\iota} | h(\bx_i)&(t_{\iota}) - \sum_{r=1}^L h_\star^{(r)}(\bx_i)\ \xi_\star^{(r)}(t_{\iota}) |^2.\nonumber    
	\end{align*}
\end{remark}

%%%%%%%%%%%%%%%%%%%%%%%%%%%%%%%%%%%%%%%%%%%%%%%%%%%%%%%%%%%%%%%%%%%%%%%%%%%%%%%%%%%

\subsection{Shape differences in astragali of wild and domesticated sheep}
\label{sec:bones_appendix}

\begin{table}[H]
	
	\caption{\label{tab:data-summary}Distribution of covariate levels over the sheep \new{population}\old{breed}s in the data set.}
	\centering
	\begin{tabular}[t]{l|r|r|r|r|r|r|r}
		\hline
		\multicolumn{1}{c|}{ } & \multicolumn{3}{c|}{Sex} & \multicolumn{4}{c}{Age\_group} \\
		\cline{2-4} \cline{5-8}
		& female & male & na & juvenile & subadult & adult & na\\
		\hline
		Karakul & 21 & 19 & 1 & 1 & 5 & 35 & 0\\
		\hline
		Marsch & 18 & 5 & 0 & 5 & 5 & 13 & 0\\
		\hline
		Soay & 21 & 25 & 12 & 7 & 8 & 13 & 30\\
		\hline
		Wild\_sheep & 21 & 20 & 0 & 5 & 18 & 14 & 4\\
		\hline
	\end{tabular}
\end{table}

%\vspace{-1cm}

\begin{table}[H]
	\centering
	\begin{tabular}[t]{l|r|r|r|r|r|r}
		\hline
		\multicolumn{1}{c|}{ } & \multicolumn{3}{c|}{Mobility} & \multicolumn{3}{c}{Status} \\
		\cline{2-4} \cline{5-7}
		& confined & pastured & free & domestic & feral & wild\\
		\hline
		Karakul & 31 & 10 & 0 & 41 & 0 & 0\\
		\hline
		Marsch & 23 & 0 & 0 & 23 & 0 & 0\\
		\hline
		Soay & 0 & 0 & 58 & 0 & 58 & 0\\
		\hline
		Wild\_sheep & 0 & 0 & 41 & 0 & 0 & 41\\
		\hline
	\end{tabular}
\end{table}

\begin{figure}[H]
	\centering
	\includegraphics[width=0.9\linewidth]{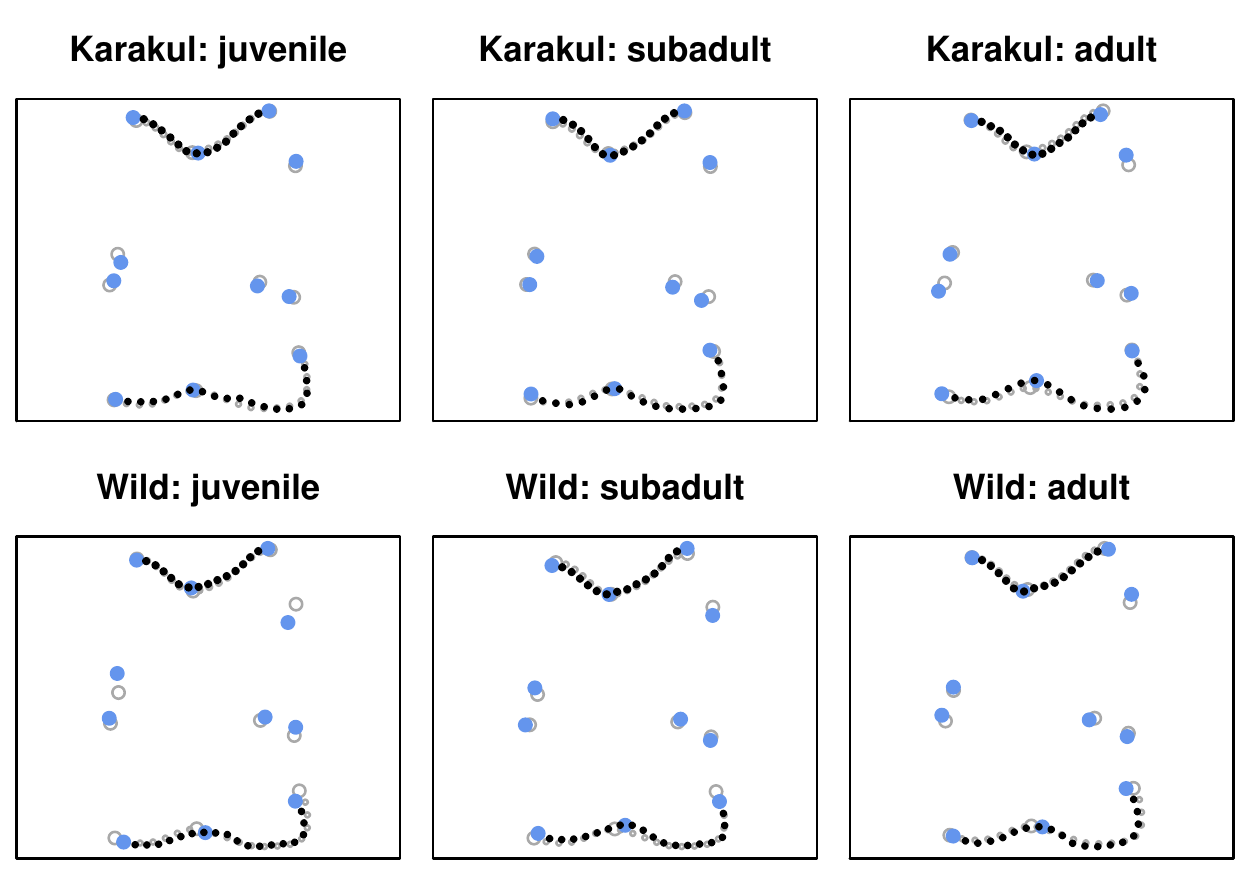}
	\caption[Sheep bone examples]{Six example sheep astragalus shape configurations consisting of landmarks \textit{(blue dots)} and semi-landmarks describing two outline curves \textit{(black dots)} recorded in male Karakul and wild sheep of different age. Points are weighted such that the total weight of each curve corresponds to three landmarks \textit{(weights reflected in point-size)}. Shapes are depicted aligned to their overall mean shape \textit{(grey circles)}.}
	\label{fig:sheepdataexamples}
\end{figure}

%%%%%%%%%%%%%%%%%%%%%%%%%%%%%%%%%%%%%%%%%%%%%%%%%%%%%%%%%%%%%%%%%%%%%%%%%%%%%%%%%%%

\subsection{Cellular Potts model parameter effects on cell form}

In the graphics below, the CPM parameters are abbreviated as
\begin{itemize}
	\item[\textsf{b}:] bulk stiffness $x_{i1} \in [0.003, 0.015]$
	\item[\textsf{m}:] membrane stiffness $x_{i2} \in [0.001, 0.015]$
	\item[\textsf{a}:] substrate adhesion $x_{i3} \in [30, 70]$
	\item[\textsf{r}:] signaling radius $x_{i4} \in [5, 40]$ 
\end{itemize}

\begin{figure}[H]
	\centering
	\includegraphics[width=0.9\linewidth]{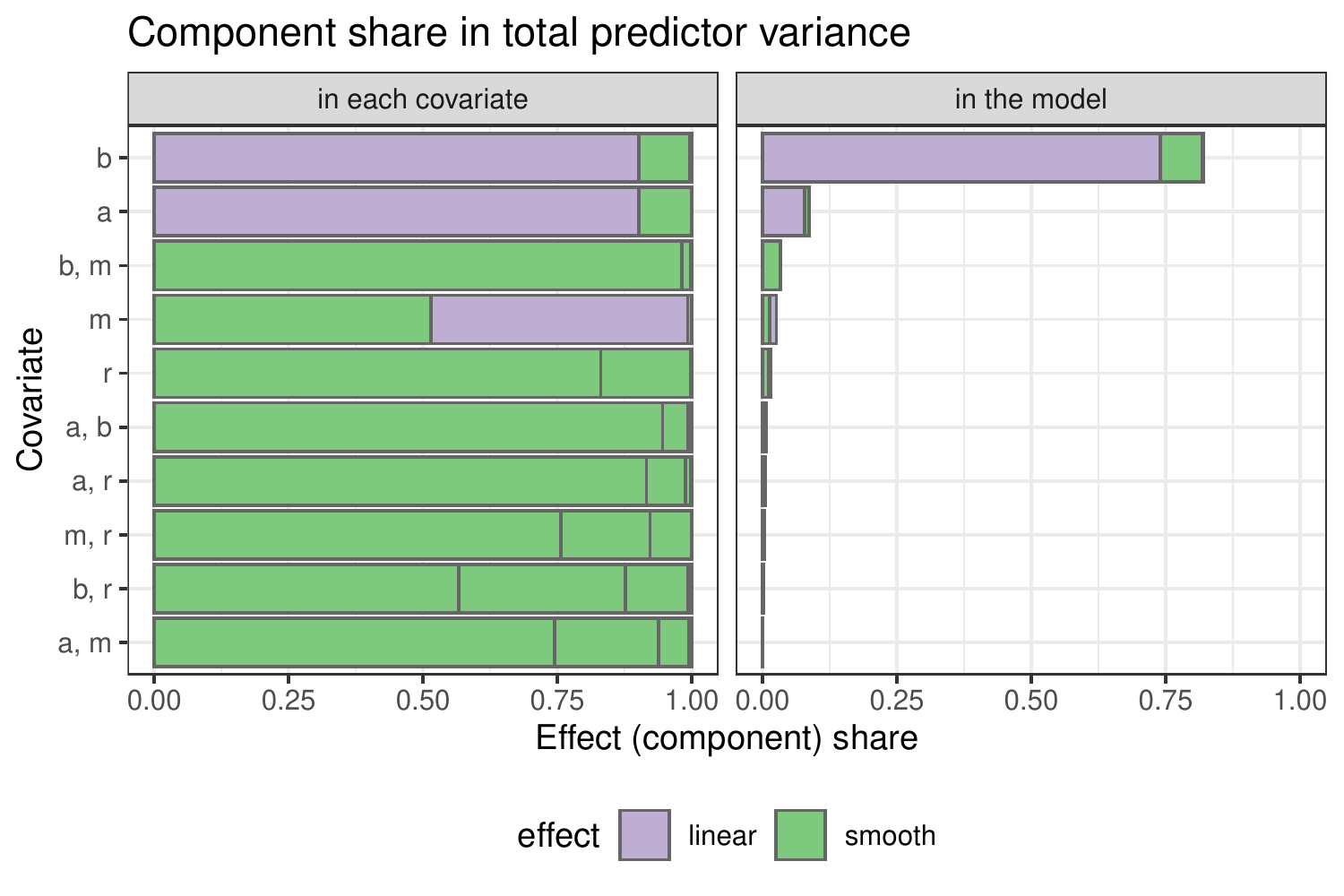}
	\caption[cells: factoriced effect varimp]{
	Tensor-product effect factorization: Predictor variance share explained
by each effect direction \textit{(separated by vertical lines)} relative to the total predictor variance of the effects of
each covariate \textit{(left)} and of the overall model \textit{(right)}. 
Linear effect components are presented together with the respective nonlinear effects of a covariate -- they point, however, in individual directions. Interaction effects are listed separately. We observe that for many covariates the nonlinear
effect is already almost entirely captured by its first component.}
	\label{fig:cellcovfactorizedvarimp}
\end{figure}

\begin{figure}[H]
	\centering
	\includegraphics[width=0.9\linewidth]{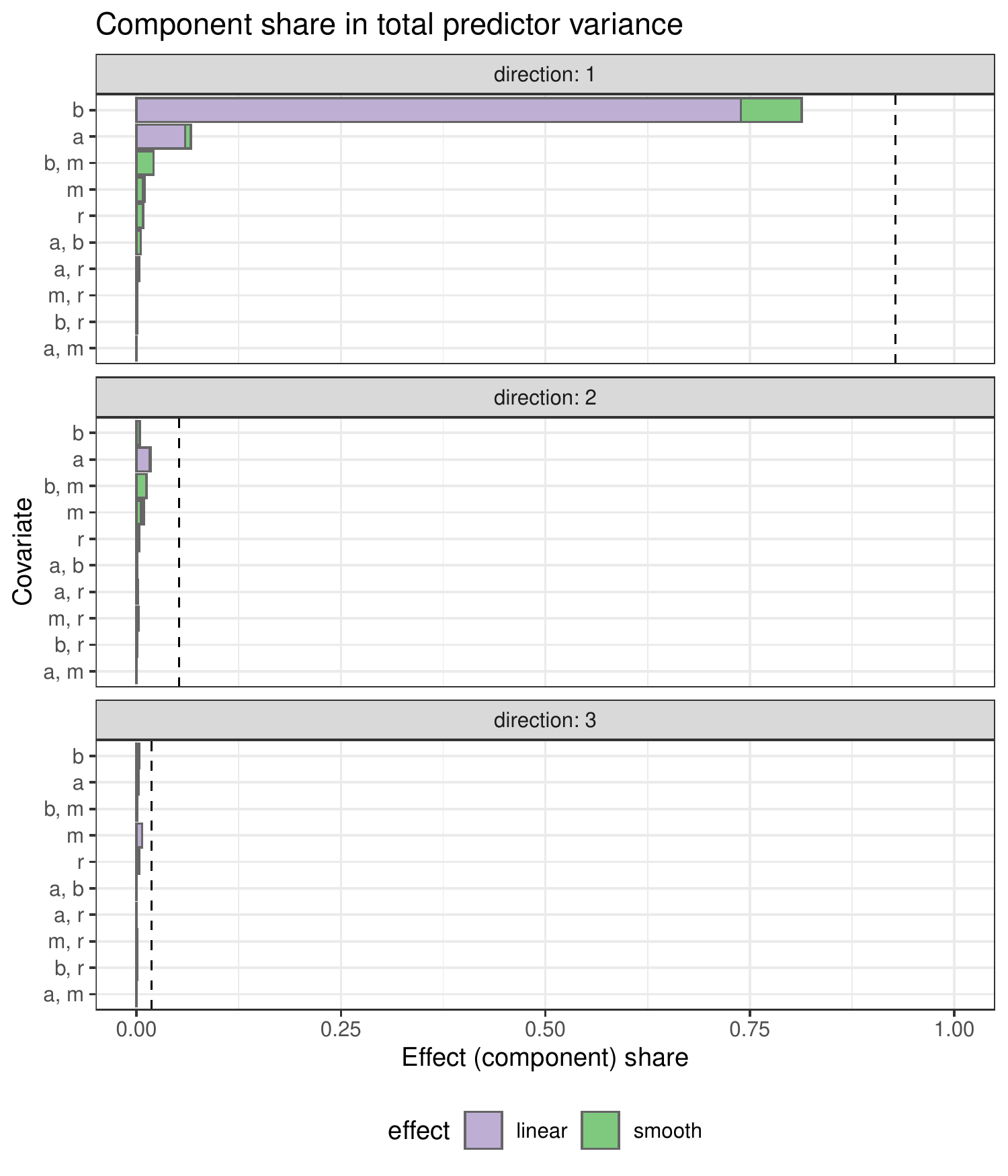}
	\caption[cells: factorized model varimp]{
	Tensor-product model factorization: Predictor variance shares into the first three directions \textit{(dashed vertical lines)} resulting from joint model factorization (unlike individual factorization of effects in Figure \ref{fig:cellcovfactorizedvarimp}).
	Horizontal bars reflect the variance of the single covariate effects within each model predictor component. They roughly -- but due to potential correlation not precisely -- add up to the predictor component variance shares. 
	} 
	\label{fig:cellmodelfactorizedvarimp}
\end{figure}

\begin{figure}[H]
	\centering
	\includegraphics[width=0.6\linewidth]{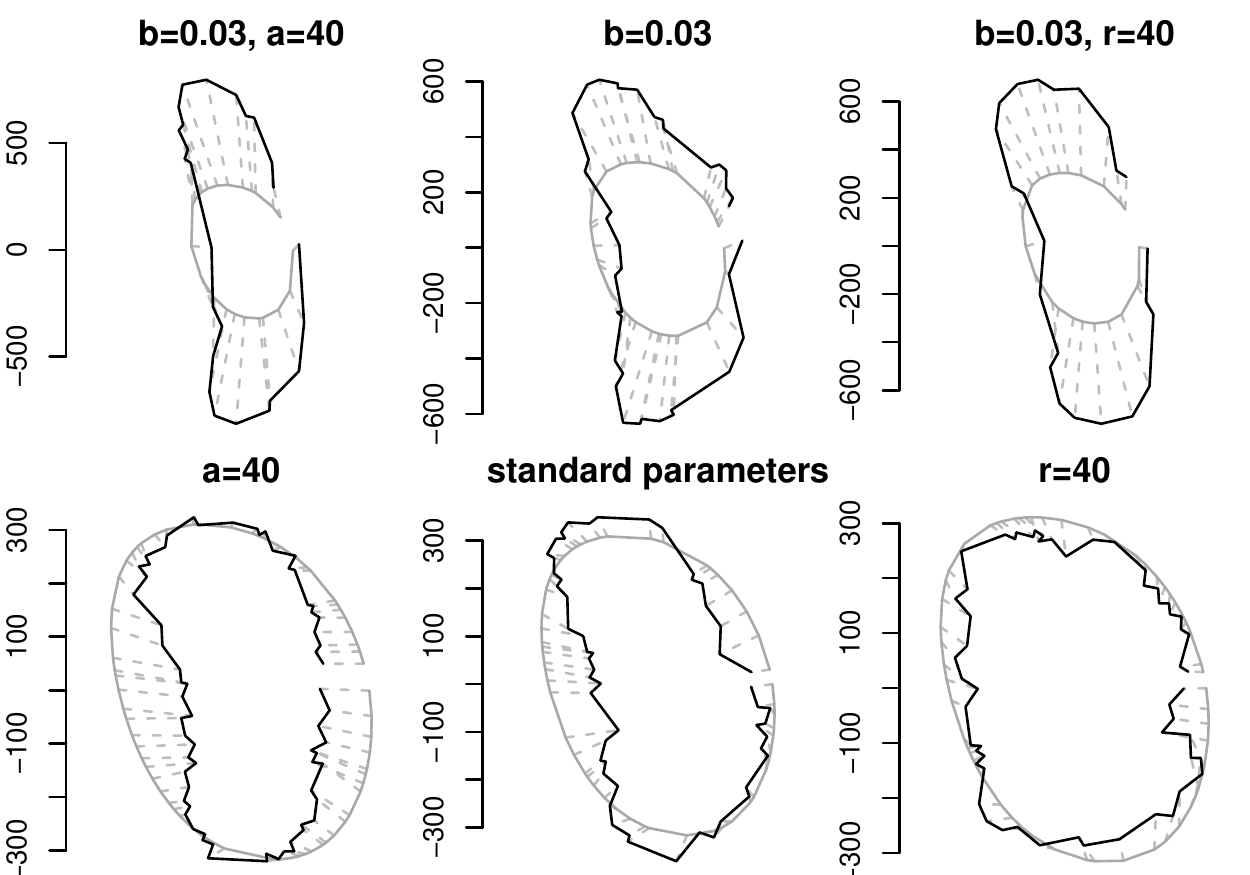}
	
	\vspace{-10pt}
	\rule{0.6\linewidth}{1pt}
	\vspace{10pt}
	
	\includegraphics[width=0.6\linewidth, page=2]{figure/cell_dataexamples}
	\caption[Cell data examples]{
		\textit{Top:}  Example cell outline \textit{(black)}, one randomly selected out of 33 for each of six different CPM parameter (covariate) configurations chosen for visualization, aligned to the overall mean form \textit{(grey)}. Note that while panel scales are individually adjusted for better visibility, contrasting plotted forms with the overall mean, which is equal in all plots, also allows to compare their sizes across panels. Headers show parameter deviations from a standard configuration with $b = 0.009$, $m = 0.003$, $a = 50$ and $r = 20$. Dashed lines indicate point correspondences. Cell outlines are oriented as cells migrating rightwards and not connected between $y(0)$ and the point left of it (while outlines are modeled as closed forms in the model).
		\textit{Bottom:} Predictions for the corresponding mean form  of our cell form model described in Section \ref{sec_cells}.}
	\label{fig:celldataexamples}
\end{figure}

%%%%%%%%%%%%%%%%%%%%%%%%%%%%%%%%%%%%%%%%%%%%%%%%%%%%%%%%%%%%%%%%%%%%%%%%%%%%%%%%%%%

\subsection{Realistic shape and form simulation studies}
\label{sec:bottles_appendix}

\subsubsection{Sampling of response observations}

Response curves are generated separately for the shape and form scenario as follows: we obtain the true underlying models by fitting original beer and whisky bottles and 3D rotated versions of them, four successively rotated towards the viewer and four away from the viewer, and compute transported residuals $\epsilon_i$ of a total of $N=360$ bottle outlines $y_1, \dots, y_N$ (20 whisky and 20 beer brands, each from 9 different angles $z_1$). 
For each simulated dataset, a sample of the desired size $n$ is randomly drawn (with replacement) from the model residuals $\epsilon_1, \dots, \epsilon_N$. To obtain irregular data with an average grid length $k=\frac{1}{n}\sum_{i=1}^n k_i$, we subsample the original evaluations $\epsilon_i(t_{i1}), \dots, \epsilon_i(t_{iK_i})$, with original grid sizes $K_i \geq 123$, in two steps: first we randomly pick three evaluations as minimal sample size; then we draw evaluations independently with $\frac{k-3}{K_i-3}$ probability to enter the dataset. 
To preserve the original covariate distribution of the data, covariates are not randomly picked but we select batches of 9 beer and 9 whisky bottles with $z_1 \in [-60, 60]$ as in the original dataset. Sample sizes $n$ are, therefore, multiples of 18. 
With the conditional means $[\mu_i]$ determined by the covariates, the evaluated residuals $\epsilon_i$ (on $k_i$ points) are parallel transported to $\varepsilon_{[\mu_i], i}\in T_{[\mu_i]}\cY^*_{i/G}$, into the tangent space of the true conditional mean, to generate the simulated shape/form dataset $[y_i] = \Exp_{[\mu_i]}(\varepsilon_{[\mu_i], i})$, $i=1, \dots, n$. 

\subsubsection{Simulation results}

In order to systematically and efficiently assess model behavior, we vary key aspects of the model setup and compare fitting performance in selected settings. 
Here, we list the different aspects and how they are referred to in subsequent graphical visualizations:
\begin{itemize}
	\item \textsf{Scenario}: Shape or form responses.
	\item \textsf{Sample size} $n$ of curves and mean \textsf{grid size} $k$ that curves are evaluated on. 
	\item \textsf{Setting}: simulations adjusted in an additional aspect compared to a \textsf{default} setup
	\begin{itemize}[leftmargin=1.5in]
		\item[\textsf{equal weight}:] Constant inner product weights $w_{i\iota} = \frac{1}{k_i}$, $\iota = 1,\dots, k_i$, are utilized for curve evaluations $y_i(t_{i1}), \dots, y_i(t_{ik_i})$ instead of trapezoidal rule weights (\textsf{default}).
		\item[\textsf{no nuisance}:] No constant and smooth nuisance effects $h_0$ and $f_2(z_2)$ are included into the model, which are included by \textsf{default}. 
		\item[\texttt{pre-aligned}:] This setting concerns the pre-alignment of the curves $y_1, \dots, y_n$ representing the forms/shapes in the simulated data. 
		Note, however, that due to alignment to the pole $p$ in the very beginning of the Riemannian $L^2$-Boosting algorithm, all of this only effects the preliminary pole $p_0$ used for estimation of $p$. 
		In the models fit in the paper, we estimated $p_0$ by using a  functional $L^2$-Boosting algorithm (without any alignment), 
		which makes sense for typical data where the curves occur roughly aligned. Consequently, this aspect translates to a ``good or worse starting point $p_0$'', which is then replaced by $p$ in the actual model fit.
		In \textsf{pre-aligned} settings, simulated response curves $\tilde{y}_i = \Exp_{\mu_i}(\varepsilon_{\mu_i, i})$ are directly used for fitting. 
		In the \textsf{default}, by contrast, the model is fit on random representatives of $[y_i]$ to mimic realistic scenarios, where $y_i = \lambda u \tilde{y}_i + \gamma \in [\tilde{y}_i]$ with 
		$u = \exp(\i \omega)$, $\omega \sim N(0, \frac{\pi}{20})$, 
		with $\gamma = \sigma_1\gamma_1 + \sigma_2^2\gamma_2 \i$, $\gamma_1, \gamma_2 \sim N(0, 1)$, $(\sigma_1^2, \sigma_2^2) = \frac{1}{n k} \sum_{i=1}^n \sum_{\iota=1}^{k_i} (\re{\tilde{y}_i(t_\iota)}, \im{\tilde{y}_i(t_\iota)})$, 
		and with $\lambda = 1$ for forms and $\lambda \sim \operatorname{Gamma}(10^2, 10^{-2})$ for shapes. 
		
	\end{itemize}
\end{itemize}

\begin{figure}[H]
	\centering
	\includegraphics[width=.8\linewidth]{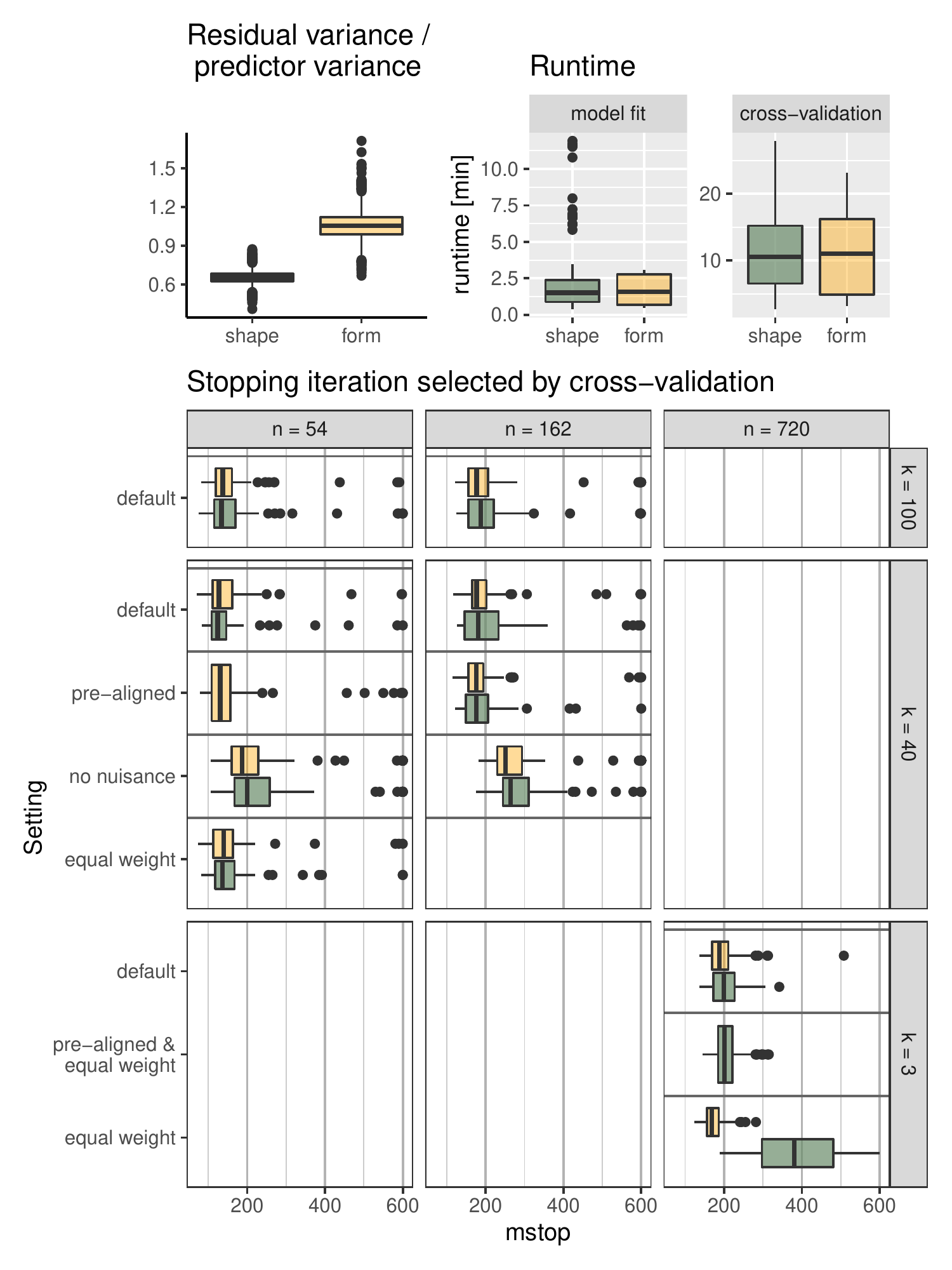}
	\caption[Simulations: Meta-Info]{\textit{Top, left:} Noise-to-signal ratio: distribution of empirical residual variance / predictor variance ratio in all simulations. \textit{Top, right:} Runtime distribution of model fits and subsequent cross-validations (always running 600 boosting iterations). \textit{Bottom:} Distribution of stopping iteration $m_{stop}$ selected by 10-fold curve-wise cross-validation for different simulation settings. All plots displayed separately for the shape and form scenario (top and bottom row within sub-panel, respectively).}
	\label{fig:simplotmetainfo}
\end{figure}

\begin{figure}[H]
	\centering
	\includegraphics[width=6in, page=3]{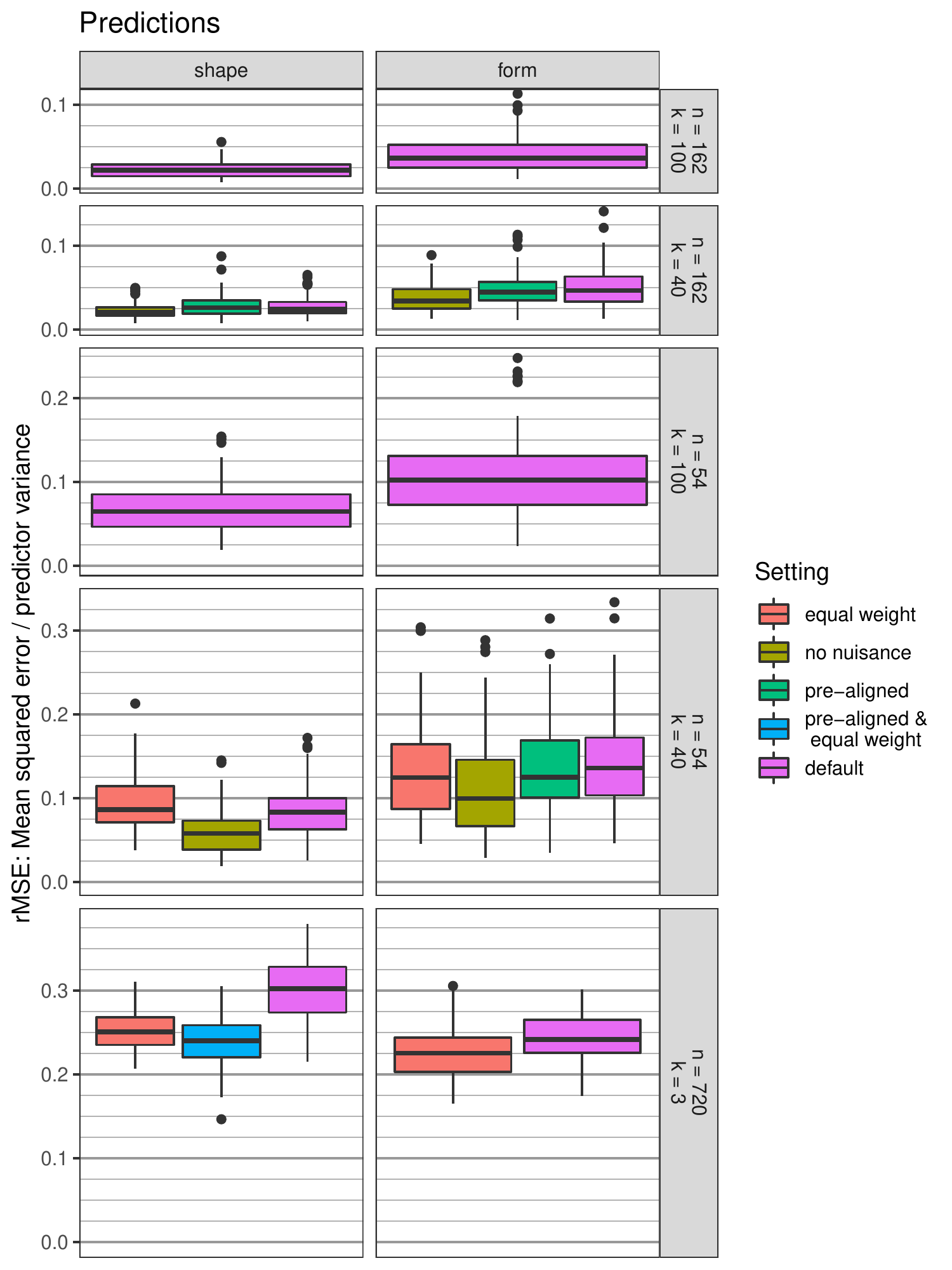}
	\caption[Simulations: Prediction Accuracy]{Accuracy in estimating the unconditional mean (pole) and conditional means (predictions), where the MSE is averaged over the covariate values in the dataset.}
	\label{fig:simplotprediction}
\end{figure}

\begin{figure}[H]
	\centering
	\includegraphics[width=6in]{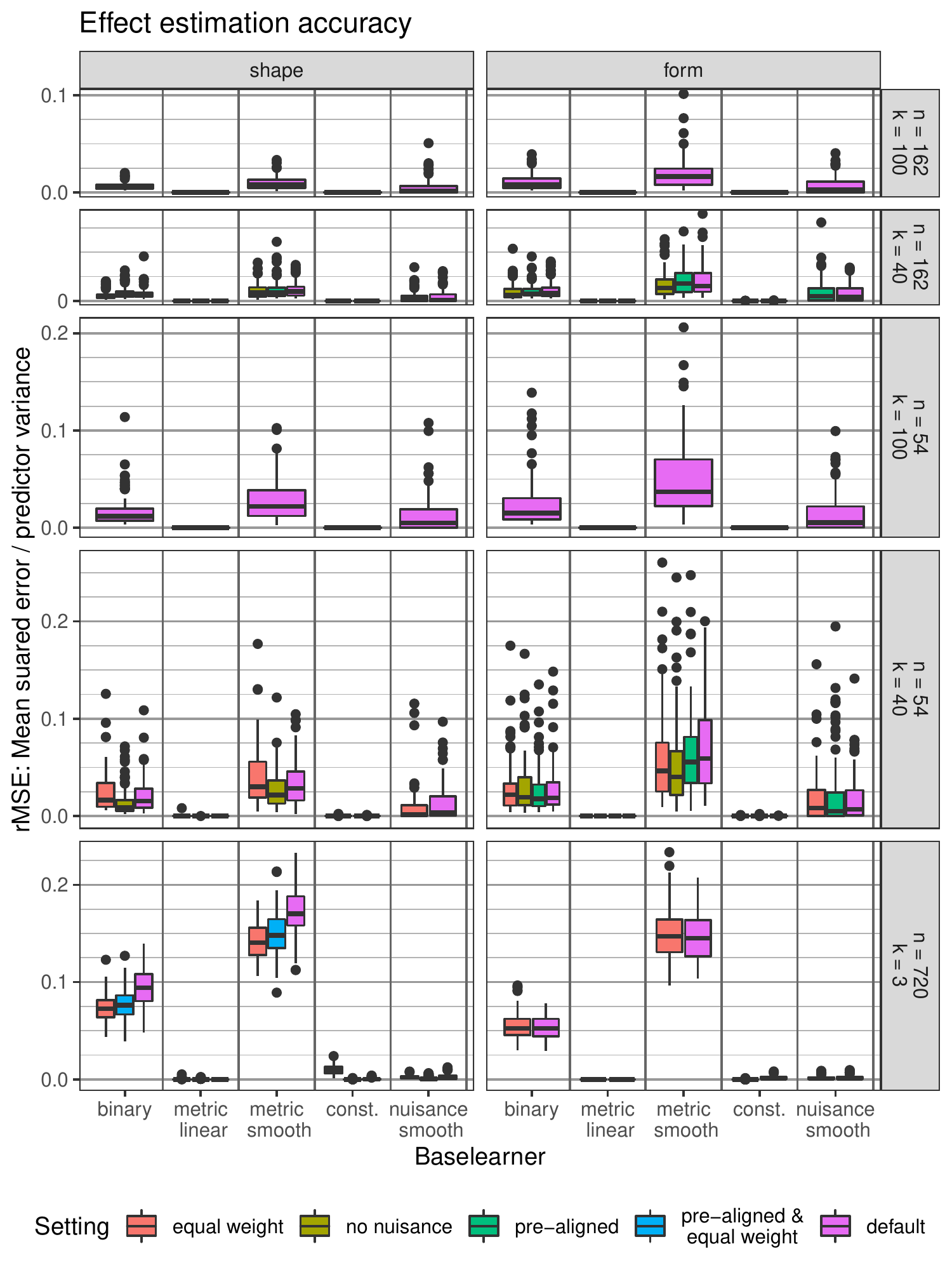}
	\caption[Simulations: Effect estimation accuracy]{Accuracy of estimated effects on tangent space level.}
	\label{fig:simploteffects}
\end{figure}

%%%%%%%%%%%%%%%%%%%%%%%%%%%%%%%%%%%%%%%%%%%%%%%%%%%%%%%%%%%%%%%%%%%%%%%%%%%%%%%%%%%%%%%%%%%%%%%%%%%%%%%%%%%%%%%%%%%%%%%%%%%%%%%%%%%%%%%%%%%%%%%%%%%%%%%%%%%%%%%%%
\newpage
\subsection{Coefficient level modeling}
\label{sec:coefficientlevel}

\newcommand{\ycoef}{\check{y}}
\newcommand{\bycoef}{\check{\by}} 

In the main manuscript, we consider the space of complex valued functions $\cY$ mostly a vector space over $\R$ and utilize real coefficients to formulate the tensor-product effect structure in Section \ref{sec_tensorproduct_effects} in corresponding bases.
In particular for form tangent spaces, identified with real linear subspaces that do not correspond to complex subspaces, this is useful to implement respective constraints via basis transforms. 
By contrast, here we represent $h_\effid(\bx) = \sum_{r, l} \vartheta_{\effid}^{(r,l)} b_\effid^{(l)}(\bx) b_{\nb}^{(r)}$ with complex coefficients $\vartheta_{\effid}^{(r,l)} \in \C$, $r=1,\dots,\Baseid_\nb$, $l=1,\dots, \Baseid_\effid$, with (possibly all real-valued) basis functions $b_\nb^{(1)}, \dots, b_\nb^{(m_\nb)}\in \cY$  corresponding to the basis used for construction of the tangent space basis $\{\partial_r\}_r$ in Section \ref{sec_tensorproduct_effects}. This representation lets us illustrate the link between evaluation level and coefficient level modeling of shapes and forms:\newline
Consider the case where $y_i \in \cY$, $i=1,\dots, n$, can be expanded as $y_i = \sum_{r=1}^{m_\nb} \ycoef_i^{(r)} b_\nb^{(r)}$ in the same basis with 
complex coefficient vectors $\bycoef_i = (\ycoef_i^{(1)}, \dots, \ycoef_i^{(m_\nb)})^\top \in \C^{m_\nb}$, and let also  
the pole $[p] = [\sum_{\baseid=1}^{m_\nb} \check{p}^{(\baseid)} b_\nb^{(r)}]$, $\check{\bp}_i = (\check{p}^{(1)}, \dots, \check{p}^{(m_\nb)})^\top$, be expanded accordingly.  
With $\check{\neutral} = (\check{\neutral}^{(1)}, \dots, \check{\neutral}^{(\Baseid_\nb)})$ the coefficient vector of $\neutral = \sum_{\baseid=1}^{m_\nb} \check{\neutral}^{(r)} b_\nb^{(r)}$ (for B-splines simply $\check{\neutral} = \frac{1}{|\cT|}(1,\dots, 1)^\top$), we have $u y_i + \gamma\,\neutral = \sum_{\baseid=1}^{m_\nb} (u\, \ycoef_i^{(r)} + \gamma\,\check{\neutral}\,) b_\nb^{(r)}$ for $u,\gamma \in \C$, such that basis representation yields an isomorphism between shapes/forms $[y]$ of curves and the shapes/forms $[\bycoef]$ of their coefficients as alternative ``landmarks''. Moreover, when choosing inner products on $\cY$ and $\C^{m_\nb}$ such that $y \rightarrow \bycoef$ is isometric, it follows that $[y] \rightarrow [\bycoef]$ presents an isometric isomorphism. 

Under these assumptions, modeling the mean shape/form $[\check{\bmu}]= \Exp_{[\check{\bp}]}\left(\check{\bh}(\bx)\right)$ of the coefficients $\bycoef_i$, with predictor $\check{\bh}(\bx)=\sum_{\effid = 1}^\Effid \check{\bh}_\effid(\bx) \in \C^{m_0}$ and $\check{\bmu}_i = (\check{\mu}_i^{(1)}, \dots, \check{\mu}_i^{(m_\nb)})^\top \in \C^{m_0}$, is equivalent to our presented model on the original level of curves, if coefficient level effects $\check{\bh}_\effid(\bx) = \sum_{r,l} \vartheta_\effid^{(r,l)} b_\effid^{(l)}(\bx) \be_r$ are specified with the canonical basis $\be_r = \left(\ind\left(r = 1\right), \dots, \ind\left(r = m_\nb\right)\right)^\top$, since  
$[\mu] = [\sum_{\baseid=1}^{m_\nb} \check{\mu}^{(\baseid)} b_\nb^{(r)}] \overset{(*)}= \Exp_{[\sum_{\baseid=1}^{m_\nb} \check{p}^{(\baseid)} b_\nb^{(r)}]}\left(\sum_{\baseid=1}^{m_\nb} \check{h}^{(r)}(\bx) b_\nb^{(r)} \right) = \Exp_{[p]}\left(h(\bx)\right)$ with $\check{\bh}(\bx) = (\check{h}^{(1)}(\bx), \dots, \check{h}^{(m_\nb)}(\bx))^\top$. For shapes, equality $(*)$ follows from 
	\begin{align*}
		\Exp_{\sum_{\baseid=1}^{m_\nb} \check{p}^{(\baseid)} b_\nb^{(r)}}\left(\sum_{\baseid=1}^{m_\nb} \check{h}^{(r)}(\bx) b_\nb^{(r)} \right) &= 
	 \cos(\|\check{\bh}(\bx)\|) \sum_{\baseid=1}^{m_\nb} \check{p}^{(r)} b_\nb^{(r)} + \sin(\|\check{\bh}(\bx)\|)  \frac{\sum_{\baseid=1}^{m_\nb}\check{h}^{(r)}(\bx)b_\nb^{(r)}}{\|\check{\bh}(\bx)\|}  \\ 
		&= 
	\sum_{\baseid=1}^{m_\nb} \left(\cos(\|\check{\bh}(\bx)\|) \check{p}^{(r)} + \sin(\|\check{\bh}(\bx)\|) \frac{\check{h}^{(r)}(\bx)}{\|\check{\bh}(\bx)\|} \right) b_\nb^{(r)}\\ 
		&= \sum_{\baseid=1}^{m_\nb} \be_r^\top \Exp_{\check{\bp}}\left(\check{\bh}(\bx) \right) b_\nb^{(r)} = \sum_{\baseid=1}^{\Baseid_\nb}\check{\mu}^{(\baseid)} b_0^{(\baseid)}
	\end{align*}
where $\Exp$ is the exponential map on the sphere (first on the function space and then on the coefficient level), we use that due to the isometry $\|\check{\bh}(\bx)\| = \|\sum_{\baseid=1}^{m_\nb} \check{h}^{(r)}(\bx) b_\nb^{(r)}\|$ and, we assume w.l.o.g. $\|p\| = \|\check{\bp}\| = 1$ and $\langle p, \neutral\rangle = \sum_r \check{p}^{(r)} = 0$. Accordingly for forms.

However, the expansion $y_i \approx \sum_{r=1}^{m_\nb} \ycoef_i^{(r)} b_\nb^{(r)}$ is typically only approximate. In terms of the inner product, $\langle y_i , y_i' \rangle_i^\nb = \bycoef_i^{\dagger} \check{\bW} \bycoef_i'$, with $\check{\bW}$ the Gramian matrix of $\{b_\nb^{(r)}\}_r$, presents an alternative empirical substitute for the inner product $\langle y_i, y_i' \rangle$ of curves $y_i, y_i'\in\cY$, which is computed on the coefficients instead of $\langle y_i , y_i' \rangle_i = \conj{\by}_i \bW_i \by'_i$ computed on evaluation vectors $\by_i = (y_i(t_{i1}), \dots, y_i(t_{i k_i}))^\top,\by'_i = (y_i'(t_{i1}), \dots, y_i'(t_{i k_i}))^\top$ as suggested in Section \ref{chap_diffgeo}. % as in the applications presented in the main manuscript. 
%Here, the constraint for translation invariance is imposed by assuming orthogonality to $\check{\one}\in\C^{m_\nb}$, the coefficients of the constant function $\neutral = \bb_0^{\nb\top} \check{\one}$. 
When, for dense grids, it can be assumed that both $\langle y_i , y_i' \rangle_i^\nb \approx \langle y_i , y_i' \rangle_i \approx \langle y_i, y_i' \rangle$ approximate the inner product on the level of curves well, the approach based on the coefficients $\bycoef_i$ may be computationally preferable, guaranteeing regular and typically more sparse representations that necessitate operations on smaller design matrices (in particular when utilizing the linear array framework \citepsup{BrockhausGreven2015}). By contrast, in comparably sparse irregular scenarios, expanding single observed $y_i$ in a basis in a first step might involve unwanted pre-smoothing. To give a consistent presentation on the original level of curves, we rely on an evaluation based approach in all applications presented in the main manuscript.

%%%%%%%%%%%%%%%%%%%%%%%%%%%%%%%%%%%%%%%%%%%%%%%%%%%%%%%%%%%%%%%%%%%%%%%%%%%%%%%%%%%%%%%%%%%%%%%%%%%%%%%%%%%%%%%%%%%%%%%%%%%%%%%%%%%%%%%%%%%%%%%%%%%%%%%%%%%%%%%%%
%\newpage
\new{
	\subsection{Functional Principal Component Representation}
	\label{sec::FPC}
	
	Various approaches in the literature \citepsup[e.g.,][]{MullerYao2008FAM, Scheipl2015, cederbaum2015functional, Volkmann2020MultiFAMM} have employed 
	functional principal component (FPC) basis representations for modeling functional responses in regression models.
	In combination with covariance smoothing \citepsup[e.g.,][]{Yao:etal:2005, cederbaum2018fast} 
	this can be particularly useful in sparse/irregular scenarios, allowing to estimate the functional covariance structure from single curve evaluations. 
	In fact, two variants of corresponding approaches directly fit into our proposed framework, either a) representing  curves using predicted FPC scores or b) estimating inner products based on the covariance structure. 
	In the following, we outline both approaches and briefly discuss related perspectives beyond the scope of this paper. 
	
	Prediction of FPC scores and inner products are carried out along the lines of \citetsup{Yao:etal:2005} and, in the complex case, \citetsup{stoecker2022efp}. 
	Assume we have given (an estimate of) the complex covariance surface $C(s,t) = \mathbb{E}\left(Y^\dagger(s) Y(t)\right)$ of the process $Y$ generating the curve samples $y_1, \dots, y_n$ in the data, with point-wise mean $\mathbb{E}(Y(t)) = 0$ for all $t\in\mathcal{T}$ without loss of generality in the following. Under standard assumptions, this yields
	a (truncated) FPC basis $b_0^{(r)}: \mathcal{T} \rightarrow \mathbb{C}$, $r = 1,\dots,m_0$ with respective eigenvalues $\lambda_1 \geq \dots \geq \lambda_{m_0} \geq 0$.
	Observing only evaluation vectors $\mathbf{y}_i = (y_{i1}, \dots, y_{ik_i})^\top = (y_i(t_{i1}) + \epsilon_{i1}, \dots, y_i(t_{ik_i}) + \epsilon_{ik_i})^\top$ at time-points $t_{i1}, \dots, t_{ik_i} \in \mathcal{T}$ subject to some iid. white noise measurement errors $\epsilon_{i1},\dots, \epsilon_{ik_i} \sim N(0, \sigma^2)$, predicted FPC score vectors $\bycoef_{i}$, comprising predicted basis coefficients of $y_i$ expanded in the basis $\{b_0^{(r)}\}_r$, can be obtained via the conditional expectations
	\begin{equation}
		\label{FPCscore}
		\bycoef_{i} = \mathbb{E}\left( (\langle b_0^{(1)}, Y \rangle, \dots, \langle b_0^{(m_0)}, Y \rangle )^\top \mid \mathbf{Y}_i + \mathbf{\epsilon}_i = \mathbf{y}_i \right) = \mathbf{\Lambda} \mathbf{B}^\dagger_i \mathbf{\Sigma}_i^{-1} \mathbf{y}_i
	\end{equation}
	under a working normality assumption, with matrices $\mathbf{\Lambda} = \operatorname{diag}(\lambda_1, \dots, \lambda_{m_0})$, $\mathbf{B}_i$ with columns $(b_0^{(r)}(t_{i1}), \dots, b_0^{(r)}(t_{ik_i}))^\top$, $r=1,\dots, m_0$, and $\mathbf{\Sigma}_i$ the covariance matrix of $\mathbf{Y}_i + \mathbf{\epsilon}_i = (Y(t_{i1}) + \epsilon_{i1}, \dots, Y(t_{ik_i}) + \epsilon_{i1})^\top$ obtained from corresponding evaluations of $C$ plus $\sigma^2$ on the diagonal.
	
	Approach a) directly analyses shapes of the predicted score vectors $\bycoef_{1}, \dots, \bycoef_{n}$ as described in Section \ref{sec:coefficientlevel} of the supplementary material.
	
	Approach b) uses \eqref{FPCscore} to motivate integration weights $\bW_i = \mathbf{\Sigma}_i^{-1} \mathbf{B}_i \mathbf{\Lambda} \mathbf{B}^\dagger_i \mathbf{\Sigma}_i^{-1}$ for the empirical inner products $\langle y_i, y'_i \rangle_i = \by_i^\dagger \bW_i \mathbf{y}'_i$, $i=1, \dots, n$, introduced in Section \ref{chap_diffgeo} of the main manuscript for $\by_i, \by'_i \in \mathbb{C}^{k_i}$, such that with this choice
	\begin{equation*}
		\langle y_i, y'_i \rangle_i = \by_i^\dagger \bW_i \mathbf{y}'_i = \mathbb{E}\left( \langle Y, Y' \rangle \mid \mathbf{Y}_i + \mathbf{\epsilon}_i = \mathbf{y}_i, \mathbf{Y}'_i + \mathbf{\epsilon}'_i = \mathbf{y}'_i \right)
	\end{equation*}
	for an independent copy $Y'$ of $Y$, with $\mathbf{Y}'_i$ and $\mathbf{\epsilon}'_i$ defined as $\mathbf{Y}_i$ and $\mathbf{\epsilon}_i$. 
	%At the same time, the FPC basis $\{b_0^{(r)}\}_r$ is also naturally employed for constructing the tangent space basis $\partial_r$ as described in Section \ref{sec_tensorproduct_effects} -- but does not necessarily have to be.\\
	Approach b) might be refined by approximating $\|y_i\|^2$ with $\mathbb{E}\left( \langle Y, Y \rangle \mid \mathbf{Y}_i + \mathbf{\epsilon}_i = \mathbf{y}_i\right)$ and $\langle b^*, y_i \rangle$ with $\mathbb{E}\left( \langle b^*, Y \rangle \mid \mathbf{Y}_i + \mathbf{\epsilon}_i = \mathbf{y}_i\right)$ for a known function $b^*:\mathcal{T} \rightarrow \mathbf{C}$ as described by \citetsup{stoecker2022efp}, which is slightly different from $\langle y_i, y_i \rangle_i$ and $\langle b^*, y_i \rangle_i$, respectively. However, basing all computations on $\langle y_i, y'_i \rangle_i$ as described in the main manuscript, holds the advantage of a unified definition of the shape geometry on evaluation vectors and curves.
	
	Both a) and b) rely, however, on the covariance $C(s,t)$ of the process $Y$ underlying the realizations $y_i$, 
	while we ultimately analyze shapes/forms $[y_i]$, $i=1,\dots, n$, presenting equivalence classes. In practice, this might in many cases not be a problem, when the $y_i$ are in fact roughly aligned and not as arbitrarily recorded as they might be in theory. 
	However in general, it renders FPC based approaches for such settings more complicated and beyond the scope of this work \citepsup[compare][for related work in a different non-regression setting]{stoecker2022efp}. 
	Carrying out the FPC alternatively on tangent space level (i.e. in a linear space) would require computation of $\Log_{[p]}([y_i])$ at some shape/form $[p]$ involving already computation/prediction of inner products.
	
	We leave such considerations to future research, and focus instead on simpler weight matrices $\mathbf{W}_i$ which are known to also work reasonably well in regression scenarios with sparsely/irregularly sampled functional response \citepsup{Scheipl2015, Scheipletal2016, BrockhausGreven2015, brockhaus2017boosting, rugamer2018boosting, Stoecker2019FResponseLSS}.
}

%%%%%%%%%%%%%%%%%%%%%%%%%%%%%%%%%%%%%%%%%%%%%%%%%%%%%%%%%%%%%%%%%%%%%%%%%%%%%%%%%%%%%%%%%%%%%%%%%%%%%%%%%%%%%%%%%%%%%%%%%%%%%%%%%%%%%%%%%%%%%%%%%%%%%%%%%%%%%%%%%
%\newpage
\new{
\subsection{Tensor-product structure in non-parametric regression}
\label{sec:kernelTPfactorization}

\renewcommand{\k}{\mathcal{K}}
\newcommand{\p}{\hat{P}}

We illustrate the broad applicability of the proposed TP factorization (Section 3.2) for the example of \emph{Additive Regression with Hilbertian Responses} proposed by \citet{jeon2020additiveHilbertian} showing that also approaches avoiding (finite-dimensional) basis representations may lead to the desired form of effect estimates $\hat{h}_j(\mathbf{x})$.
Hence, although they do not consider manifold valued responses, TP factorization can be directly applied to visualize and investigate their effect estimates.
We adapt relevant equations to fit our notation and refer for details to their work.

\citet{jeon2020additiveHilbertian} consider regression with an additive predictor $h(\mathbf{x}) = \sum_{j=1}^{J} h_j(x_j)$ with $h_j(x_j)$ depending on the $j$th scalar covariate in $\mathbf{x} = (x_1, \dots, x_J)^\top$.
In Section 2.5 p.~2679, they point out that the estimator $\hat{h}_j(x_j)$ of $h_j(x_j)$ is a linear smoother if the initial estimate of their back-fitting algorithm is (as, e.g., in all their numerical studies). Assuming this in the following, the expression becomes
\begin{equation*}
	\hat{h}_j(x_j) = \frac{1}{n} \sum_{i=1}^n w_{ij}^{[g]}(x_j)\, y_i
\end{equation*}
with weight functions $w_{ij}^{[g]}(x_j)$, $i=1,\dots, n$, $j=1,\dots, J$ after $g$ fitting iterations.
In fact, this immediately has the desired TP form given in Section \ref{sec_tensorproduct_effects}, setting $m = m_j = n$, $\theta_j^{(r,l)} = \frac{1}{n}\ind(r=l)$, $b_j^{(l)} = w_{lj}^{[g]}$ and  $\partial_r = y_r$ for all $l,r = 1,\dots,n$ and $j$. Here, tangent vectors are naturally identified with elements of the Hilbert space, as we are in the linear case.

It might seem odd to have the effect basis functions $b_j^{(i)}$ only implicitly defined depending on the fitting iteration. 
Yet in fact, the $w_{ij}^{[g]}$ are all in the span of 
\begin{equation*}
	b_j^{(i)}(x_j) = \frac{\k_j(x_j, x_{ij})}{\sum_{i=1}^n \k_j(x_j, x_{ij})}, \quad i = 1,\dots,n
\end{equation*}
with some kernels $\k_j$ evaluated around covariate realizations $\mathbf{x}_i = (x_{i1}, \dots, x_{iJ})^\top$, $i=1,\dots,n$.
This can be seen by re-writing the definition of $w_{ij}^{[g]}$ \citep[Sec. 2.5, p.~2679]{jeon2020additiveHilbertian}:
\begin{align*}
	w_{ij}^{[g]}(x_j) &= \frac{\k_j(x_j, x_{ij})}{\p_j(x_j)} - 1 - \sum_{\jmath \neq j} \int_{0}^{1} w_{i\jmath}^{[g-\ind(\jmath \geq j)]}(x_\jmath) \frac{\p_{j\jmath}(x_j, x_\jmath)}{\p_{j}(x_j)} \, dx_\jmath \\
	&= \frac{\k_j(x_j, x_{ij})}{\p_j(x_j)} - 1 - \sum_{l =1}^{n} \sum_{\jmath \neq j} \int_{0}^{1} w_{i\jmath}^{[g-\ind(\jmath \geq j)]}(x_\jmath) \frac{\k_j(x_j, x_{lj}) \k_\jmath(x_\jmath, x_{l\jmath})}{\p_{j}(x_j)} \, dx_\jmath \\
	&= \underbrace{\frac{\k_j(x_j, x_{ij})}{\p_j(x_j)}}_{=\, n\, b^{(i)}_j(x_j)} - 1 - \sum_{l =1}^{n} \frac{\k_j(x_j, x_{lj})}{\p_{j}(x_j)} \underbrace{\sum_{\jmath \neq j} \int_{0}^{1} w_{i\jmath}^{[g-\ind(\jmath \geq j)]}(x_\jmath) \k_\jmath(x_\jmath, x_{l\jmath}) \, dx_\jmath}_{=:\, a_{lj}^{[g]} \in \mathbb{R} \text{ or $\mathbb{C}$, respectively}}, \\
	&= n \sum_{l=1}^{n} (\ind(l=i) - \frac{1}{n} - a_{lj}^{[g]})\, b^{(l)}_j(x_j),
\end{align*}
where by definition
\begin{equation*}
	\p_j(x_j) = \frac{1}{n} \sum_{i=1}^n \k_j(x_j, x_{ij}), \quad \p_{j\jmath}(x_j, x_\jmath) = \frac{1}{n} \sum_{i=1}^n \k_j(x_j, x_{ij}) \k_\jmath(x_\jmath, x_{i\jmath}).
\end{equation*}
and 
by construction $1 \equiv \sum_{i=1}^{n} b_j^{(i)}(x_j)$.
(Starting values for the back-fitting algorithm presented in the paper
are given simply by $w_{ij}^{[0]} = 0$ or the Nadaraya-Watson-type estimator $w_{ij}^{[0]} = \frac{1}{n} \sum_{i=1}^{n} (\frac{\k_j(x_j, x_{ij})}{\p_j(x_j)} -1) \, y_i$.)\\
Consequently, also this non-parametric approach leads to the TP effect structure
\begin{equation*}
	\hat{h}_j(x_j) = \sum_{r=1}^n\sum_{l=1}^n \hat{\theta}_j^{(r,l)} \underbrace{\frac{\k_j(x_j, x_{lj})}{\sum_{i=1}^n \k(x_j, x_{ij})}}_{=b_j^{(l)}(x_j)} \, \underbrace{y_r}_{\partial_r}
\end{equation*}
with $\hat{\theta}_j^{(r,l)} = \ind(l=r) -\frac{1}{n} - a_{lj}^{[g]}$.
}
\bibliographystylesup{chicago} % NOTE: sup.aux has to be manually compiled by BIBTEX in command line via 'bibtex sup' in the directory
\bibliographysup{literatur}

\end{document}